\newtheorem{remark}{Remark}
\newcommand{\expv}{\mathbb{E}}
\newcommand{\program}{\mathcal{P}}
\lstdefinelanguage{affprob}
{
morekeywords={angel,demon, choice, prob(0.6), prob(0.5), if, then, else, fi, 
while, do, od, 
true, false, and, or, skip, sample},
sensitive = false
}
\tikzstyle{ang}=[regular polygon, regular polygon sides = 3,draw,inner sep=0pt,minimum size=6mm, yshift = -0.75 mm]
\tikzstyle{dem}=[shape=diamond,draw,inner sep=0pt,minimum size=6mm]
\tikzstyle{ran}=[shape=circle,draw,inner sep=0pt,minimum size=5mm]
\tikzstyle{det}=[shape=rectangle,draw,inner sep=0pt,minimum size=5mm]
\tikzstyle{tran}=[draw,->,>=stealth, rounded corners]
\newcommand{\defineNote}[3][black!65!green]{\expandafter\def\csname #2\endcsname
##1{\stepcounter{fixcount}\fxwarning{\textcolor{#1}{\textbf{#3}: ##1}}}}
\newcommand{\theoremlike}[2]{\par\medskip\penalty-250%
{{\bfseries\noindent
#2 \ref{#1}.}}\it}
\newcommand{\thmhelperpre}[2]{\theoremlike{#1}{#2}}
\newcommand{\thmhelperpost}{\par\medskip}
\renewcommand{\vec}[1]{\mathbf{#1}}
\newcommand{\E}{\ensuremath{{\rm \mathbb E}}}
\renewcommand{\phi}{\varphi}
\newcommand{\eps}{\epsilon}
\newcommand{\pvars}{V}
\newcommand{\locs}{\mathit{L}}
\newcommand{\loc}{\ell}
\renewcommand{\E}{\mathbb{E}}
\newcommand{\Rset}{\mathbb{R}}
\newcommand{\Nset}{\mathbb{N}}
\newcommand{\Zset}{\mathbb{Z}}
\newcommand{\lin}{\mathit{in}}
\newcommand{\lout}{\mathit{out}}
\newcommand{\transitions}{\mapsto}
\newcommand{\probdist}{\mathit{Pr}}
\newcommand{\guards}{G}
\newcommand{\prob}{\mathit{Pr}}
\newcommand{\id}{\mathit{id}}
\newcommand{\probm}{\mathbb{P}}
\newcommand{\inv}{I}
\newcommand{\APP}{{\sc App}}
\newcommand{\PP}{{\sc PP}}
\newcommand{\support}{\mathit{supp}}
\newcommand{\vars}{\mathcal{V}}
\newcommand{\up}{u}
\newcommand{\pCFG}{\mathcal{C}}
\newcommand{\cfg}[2]{\vec{C}^{#1}_{#2}}
\newcommand{\lr}[2]{\loc^{#1}_{#2}}
\newcommand{\vr}[2]{\vec{x}^{#1}_{#2}}
\newcommand{\run}{\varrho}
\newcommand{\locinit}{\loc_{\mathit{init}}}
\newcommand{\vecinit}{\vec{x}_{\mathit{init}}}
\newcommand{\natfilt}{\mathcal{R}}
\newcommand{\lem}{\eta}
\newcommand{\preexp}[1]{\mathit{pre}_{#1}}
\newcommand{\confset}{Z}
\newcommand{\stime}{T}
\newcommand{\ttime}{\mathit{Term}}
\newcommand{\genPathSet}{A}
\newcommand{\updates}{\mathit{Up}}
\newcommand{\Fpath}{\mathit{Fpath}}
\newcommand{\Conf}{\mathit{Conf}}
\newcommand{\Run}{\mathit{Run}}
\newcommand{\genpath}{\pi}
\newcommand{\locsNB}{\locs_{\mathit{NB}}}
\newcommand{\locsPB}{\locs_{\mathit{PB}}}
\newcommand{\AppendixMaterial}{the supplementary material}
\newcommand{\genfilt}{\mathcal{F}}
\newcommand{\vecseq}[3]{\vec{#1}_{#2}[#3]}
\newcommand{\genRunSet}{\genPathSet}
\newcommand{\levelrank}[2]{\def\EmptyTest{#1}\ifdefempty{\EmptyTest}{\mathit{lev}_{#2}}{\mathit{lev}_{#2}(#1)}}
\newcommand{\minlev}{\mathit{min}\text{-}\mathit{lev}}
\newcommand{\noofdec}{\sharp}
\newcommand{\fixn}[1]{{#1}^*}
\newcommand{\indicator}[1]{{1}_{#1}}
\newcommand{\locterm}{\loc_{\mathit{term}}}
\newcommand{\programbody}{\program_{\mathit{body}}}
\newcommand{\slice}{\mathit{slice}}
\newcommand{\loops}{\mathit{loops}}
\newcommand{\OmegaRun}{\Omega_{\mathit{Run}}}
\newcommand{\vecinitset}{\Xi_{\mathit{init}}}
\newcommand{\linsystem}{\mathcal{L}}
\newcommand{\lp}{\mathcal{LP}}
\newcommand{\sol}{\mathit{sol}}
\newcommand{\noofdecrank}{\noofdec\mathit{lev}}
\newcommand{\nodecrank}[2]{\noofdec_{#1}\mathit{lev}_{#2}}
\newcites{add}{Additional References}
\begin{document}

\title{Lexicographic Ranking Supermartingales:\\ An Efficient Approach to 
Termination of Probabilistic Programs}

\author{Sheshansh Agrawal}
\affiliation{
\institution{IIT Bombay}
\city{Mumbai}
\country{India}
}
\email{sheshansh@cse.iitb.ac.in}

\author{Krishnendu Chatterjee}
\affiliation{
\institution{IST Austria} 
\city{Klosterneuburg}
\country{Austria}
}
\email{Krishnendu.Chatterjee@ist.ac.at}

\author{Petr Novotn\'{y}}
\affiliation{
\institution{IST Austria}
\city{Klosterneuburg}
\country{Austria}
}
\email{petr.novotny@ist.ac.at}

\begin{abstract}
Probabilistic programs extend classical imperative programs with 
real-valued random variables and random branching.
The most basic liveness property for such programs is the termination 
property.
The qualitative (aka almost-sure) termination problem given a probabilistic program
asks whether the program terminates with probability~1.
While ranking functions provide a sound and complete method for 
non-probabilistic
programs, the extension of them to probabilistic programs is achieved
via ranking supermartingales (RSMs). 
While deep theoretical results have been established about RSMs, 
their application to probabilistic programs with nondeterminism has been limited
only to academic examples. 
For non-probabilistic programs, lexicographic ranking functions provide a 
compositional
and practical approach for termination analysis of real-world programs. 
In this work we introduce lexicographic RSMs and show that they present a sound
method for almost-sure termination of probabilistic programs with nondeterminism.
We show that lexicographic RSMs provide a tool for compositional reasoning 
about almost sure termination,
and for probabilistic programs with linear arithmetic they can be synthesized 
efficiently (in polynomial time).
We also show that with additional restrictions even asymptotic bounds on expected
termination time can be obtained through lexicographic RSMs.
Finally, we present experimental results on abstractions of real-world programs
to demonstrate the effectiveness of our approach.
\end{abstract}

\maketitle


\terms{Program Verification, Termination}

\keywords{Probabilistic Programs, Termination, Ranking Supermartingale, Lexicographic Ranking}

\section{Introduction}\label{sec:introduction}

\noindent{\em Probabilistic programs with nondeterminism.} 
Randomness plays a fundamental role in many areas across science, and in 
computer science in particular.
In applications such as stochastic network protocols~\cite{BaierBook,prism},
randomized algorithms~\cite{RandBook,RandBook2}, 
security~\cite{BGGHS16:diff-privacy-coupling,BGHP16:diff-privacy-siglog} 
machine 
learning~\cite{LearningSurvey,G15},
the probabilistic behavior must be considered to faithfully model the underlying dynamic system.
The extension of classical imperative programs with \emph{random value generators}, 
that produce random values according to some desired probability distribution, 
naturally gives rise to probabilistic programs.
Along with probability, nondeterminism also plays a crucial role.
In particular in program analysis, for effective analysis of large programs,
all variables cannot be considered, and abstraction ignores some variables,
and the worst-case analysis is represented by adversarial nondeterminism.
Hence, probabilistic programs with nondeterminism have become an active and 
important research focus in program analysis.

\smallskip\noindent{\em Termination problem.} 
In static analysis of programs the most basic, as well most important, 
liveness property is the {\em termination} problem.
While for non-probabilistic programs the termination question asks whether
an input program {\em always} terminates, for probabilistic programs 
the termination questions must account for the probabilistic behaviors. 
The most basic and fundamental extensions of the termination problem 
for probabilistic programs are:

\begin{compactenum}
\item \emph{Almost-sure termination.} 
The \emph{almost-sure termination} problem asks whether the program terminates with probability~1.

\item \emph{Positive termination.} 
The \emph{positive termination} problem asks whether the expected termination time is finite.
A related quantitative generalization of the positive termination question is to obtain 
asymptotic bounds on the expected termination time.

\end{compactenum}
While the positive termination implies almost-sure termination, the converse is not true 
(e.g., see Example~\ref{ex:infinite-time}).

\smallskip\noindent{\em Ranking functions and ranking supermartingales (RSMs).}
The key technique that applies for liveness analysis of non-probabilistic programs is 
the notion of {\em ranking functions}, which provides a sound and complete 
method for termination of non-probabilistic programs~\cite{rwfloyd1967programs}.
There exist a wide variety of approaches for construction 
of ranking functions for non-probabilistic programs~\cite{DBLP:conf/cav/BradleyMS05,DBLP:conf/tacas/ColonS01,DBLP:conf/vmcai/PodelskiR04,DBLP:conf/pods/SohnG91}.
The generalization of ranking functions to probabilistic programs is achieved through the
{\em ranking supermartingales (RSMs)}~\cite{SriramCAV,HolgerPOPL,CF17}.
The ranking supermartingales provide a powerful and automated approach for termination 
analysis of probabilistic programs, and algorithmic approaches for special cases such as 
linear and polynomial RSMs have also been considered~\cite{SriramCAV,CFNH16:prob-termination,CFG16,CNZ17}.

\smallskip\noindent{\em Practical limitations of existing approaches.} 
While an impressive set of theoretical results related to RSMs has been
established~\cite{SriramCAV,HolgerPOPL,CF17,CFNH16:prob-termination,CFG16,CNZ17}, 
for probabilistic programs with nondeterminism the current approaches  
are only applicable to academic examples of variants of random walks. 
The key reason can be understood as follows: even for non-probabilistic programs while 
ranking functions are sound and complete, they do not necessarily provide a practical 
approach. This is because to prove termination, a witness in the form of a 
ranking function has to be computed: to this automatically, ranking functions 
of a restricted shape (such as linear ranking functions) have to be considered, 
and 1-dimensional ranking functions of a restricted type can only prove 
termination of a limited class of programs. 
In contrast, as a practical and scalable approach for non-probabilistic programs the 
notion of lexicographic ranking functions has been widely 
studied~\cite{CSZ13,ADFG10:lexicographic,GMR15:rank-extremal,BCIKP16:T2}. 
Algorithmic approaches for linear lexicographic ranking functions allow the 
termination analysis 
to be applicable to real-world non-probabilistic programs (after abstraction).
However both the theoretical foundations as well as practical approaches related to 
such lexicographic ranking functions are completely missing for probabilistic programs,
which we address in this work.

\smallskip\noindent{\em Our contributions.} 
In this work our main contributions range from theoretical foundations of lexicographic
RSMs, to algorithmic approaches for them, to experimental results
showing their applicability to programs.
We describe our main contributions below:
\begin{enumerate}

\item {\em Theoretical foundations.} 
First, we introduce the notion of lexicographic RSMs, and show that
such supermartingales ensure almost-sure termination (Theorem~\ref{thm:lexrsm-main} in 
Section~\ref{sec:lexicographic}). 
Our first result is a purely mathematical result that introduces a new concept, and 
proves almost-sure termination, that is independent of any probabilistic program.
Based on the mathematical result we show that for probabilistic programs with 
nondeterminism
the existence of a lexicographic RSM with respect to an invariant ensures
almost-sure termination (Theorem~\ref{thm:lexrsm-programs} in 
Section~\ref{sec:lex-programs}).

\item {\em Compositionality.} 
Second we study the compositional properties of lexicographic RSMs.
A key limitation of the previous approaches related to compositional 
RSMs~\cite{HolgerPOPL} is that it imposes a technical \emph{uniform 
integrability} conditions, which is hard to reason about automatically. We show 
(in Section~\ref{sec:compositional}) how 
lexicographic RSMs 
present an easy-to-automatize compositional approach for almost-sure 
termination of probabilistic programs.

\item {\em Algorithm.} We then consider algorithms for synthesis of lexicographic RSMs, 
and for efficient algorithms we consider nondeterministic probabilistic 
programs that are {\em affine} (i.e., the arithmetic operations are linear).
We present a polynomial-time algorithm for synthesis of lexicographic RSMs 
for affine programs (Theorem~\ref{thm:algo}).

\item {\em Asymptotic bounds.} 
In general, the existence of lexicographic RSMs does not imply positive
termination. 
In other words, we present an example (Example~\ref{ex:infinite-time}) where a 
lexicographic RSM exists ensuring
almost-sure termination, yet the expected termination time is infinite.
We then present a natural restriction under which the lexicographic 
RSMs not only imply positive termination, but even asymptotic bounds on the expected termination 
time can be derived from them (Theorem~\ref{thm:runtime-bound} and 
Corollary~\ref{col:runtime-progs}).

\item {\em Experimental results.} 
We present experimental results of our approach on realistic programs to show the applicability 
of our approach. 
To demonstrate the effectiveness of our approach we consider the benchmarks of non-probabilistic
programs from~\cite{ADFG10:lexicographic} which are obtained as abstraction of 
real-world programs, 
where 
lexicographic ranking functions were applied for termination analysis. 
We extend these benchmarks with probabilistic statements and apply lexicographic RSMs to these
programs. 
Our experimental results show that our approach can handle these programs very efficiently.

\end{enumerate}

\vspace{-1em}
\section{Preliminaries}\label{sec:prelim}

We use a standard notation in the field of probabilistic program 
analysis~\cite{CNZ17}.

\vspace{-1em}
\subsection{Basic Notions}
\vspace{-0.5em}
For a set $A$ we denote by $|A|$ the cardinality of $A$. We denote by $\Nset$,
$\Nset_0$, $\Zset$, and $\Rset$ the sets of all positive integers, non-negative
integers, integers, and real numbers, respectively. We assume basic knowledge 
of matrix calculus.
We use boldface notation for
vectors, e.g. $\vec{x}$, $\vec{y}$, etc., and we denote an $i$-th component of a
vector $\vec{x}$ by $\vec{x}[i]$. 
We identify 1-dimensional vectors with numbers. For an 
$n$-dimensional vector 
$\vec{x}$, index $1 \leq i\leq n$, and number $a$ we denote by $\vec{x}(i\leftarrow a)$ 
a 
vector $\vec{y}$ such that $\vec{y}[i]=a$ and $\vec{y}[j]=\vec{x}[j]$ for all 
$1\leq j \leq n$, $j\neq i$.
For comparison of vectors (e.g. as in $\vec{x}\leq \vec{y}$), we consider componentwise 
comparison. 
For comparing functions $f,g$ with the same domains, we write $f\leq g$ 
if $f(x)\leq g(x)$ for all $x$ in the domain.

\smallskip\noindent{\em Variables.}
Throughout the paper we fix a countable set of variables $\vars$. 
We consider some arbitrary but fixed linear order on the set of all variables, 
hence we write 
$\vars=\{x_1,x_2,x_3,\dots\}$. 

\vspace{-1em}
\subsection{Syntax of Probabilistic Programs}\label{subsec:syntax}
\vspace{-0.5em}

In this subsection we define the form of probabilistic programs that we 
consider in our analysis. We consider two classes of probabilistic programs: 
general probabilistic programs (\PP{}s ) with arbitrary (measurable) 
expressions and 
their subclass, affine probabilistic programs (\APP s) where all expressions 
are restricted to be affine (see below for a precise definition). The reason 
for this dual view is that our work also has two main points of focus: a 
theoretical one, where we introduce new proof rules that can be used to prove 
properties of general probabilistic programs; and an algorithmic one, where we 
aim to prove properties of probabilistic programs automatically, using the 
aforementioned proof rules. As already testified in the non-probabilistic 
world, programs that contain only affine expressions allow for more efficient 
automation of the analysis and at the same time, due to the presence of 
non-determinism they can be used to form sound abstractions of programs with 
non-linear arithmetic. Hence, we consider general programs when providing our 
theoretical results and \APP s when presenting the automation of our techniques.

\smallskip\noindent{\em Expressions.}
An \emph{expression} over the set of variables $\{x_1,\dots,x_n\}$ 
is an expression in the standard programming-language sense, i.e. a formula 
built in finite number of steps from constants, variables $x_1,\dots,x_n$, and 
numerical operators from some fixed finite set. Each expression $E$ over 
$\{x_1,\dots,x_n\}$ 
determines a function which for each $m$-dimensional vector $\vec{x}$, where 
$m\geq n$,  
returns a number resulting from substituting each $x_i$ in $E$ by $\vec{x}[i]$. 
Slightly abusing our notation, we denote this function also by $E$ and the 
value of this function on argument $\vec{x}$ by $E(\vec{x})$. We do not a priori fix a concrete set of operators that can be used to form expressions. However, in order to ensure that semantics of probabilistic programs with real-valued variables is defined correctly, we impose the following two conditions on the set of expressions used in each program:
(1)
For each expression $E$ over variables $\{x_1,\dots,x_n\}$ and each $n$-dimensional vector $\vec{x}$ the value $E(\vec{x})$ is well defined.\footnote{Our results can be easily extended to programs where encountering an expression of undefined value, such as division by zero, triggers an exception which terminates the program, but we abstract away from such details for the sake of clarity.}
(2)
The function defined by each expression $E$ is Borel-measurable (for definition of Borel-measura\-bility, see, e.g.~\cite{Billingsley:book}).

From measure theory it is known that these conditions hold in particular for programs where expressions are build using the standard arithmetic operators of addition, subtraction, multiplication, and division (provided that expressions evaluating to zero are not allowed as divisors).


\smallskip\noindent{\em Affine Expressions.}
An \emph{affine expression} over the set of variables $\{x_1,\dots,x_n\}$ is an 
expression of the form $d+\sum_{i=1}^{n}a_i
x_i$, where  $d,a_1,\dots,a_n$ are real-valued
constants.  A function of the form $E(\vec{x})$ for some affine expression $E$ 
is called affine. As noted above, each affine function is Borel-measurable.

\smallskip\noindent{\em Predicates.}
A \emph{predicate} is a logical formula obtained by a finite number of 
applications of conjunction, disjunction and negation operations on 
\emph{atomic predicates} of the form $E\leq E'$, where $E$, $E'$ are 
expressions. We denote by $\vec{x}\models E$ the fact that $E$ is satisfied by 
substituting values from of $\vec{x}$ for the corresponding variables in $E$.

\smallskip\noindent{\em Linear constraints, assertions, predicates.}
In the case of predicates involving only linear expression we use the following 
standard nomenclature:
\begin{compactitem}
	\item {\em Linear Constraint.} A \emph{linear constraint} is a formula of the
	form $\psi$ or $\neg\psi$, where  $\psi$ is a non-strict inequality
	between affine expressions.
	\item {\em Linear Assertion.} A \emph{linear assertion} is a finite conjunction
	of linear constraints.
	\item {\em Propositionally Linear Predicate.}
	A  \emph{propositionally linear predicate} (PLP) is a finite disjunction of
	linear assertions.
\end{compactitem}



\noindent{\em The Syntax of Probabilistic Programs {(\PP s)}.}
We consider the standard syntax for probabilistic programs,
which encompasses basic programming mechanisms such as assignment statement 
(indicated by `:='), while-loop, if-branch. Expressions appear on right-hand sides of assignments, and predicates act as loop guards and conditions in if-then-else statements. We also consider basic probabilistic mechanisms 
such as probabilistic branch (indicated by `prob') and random sampling (e.g. 
$x:=\textbf{sample(}\mathrm{Uniform}[-2,1]\textbf{)}$ assigns to $x$ a random 
number 
uniformly sampled from interval $[-2,1]$). We also allow constructs for 
(demonic) non-determinism, in particular 
non-deterministic branching indicated by `\textbf{if }$\star$ \textbf{then...}' construct and non-deterministic assignment.  
Variables (or identifiers) of a probabilistic program are of \emph{real} type, i.e., 
values of the variables are real numbers. 
We also assume that assume that each \PP{} $\program$ is preceded by a preamble 
specifying possible initial values of program variables: the preamble consists 
of a single predicate characterizing possible initial valuations.

\smallskip\noindent{\em Affine Probabilistic Programs (\APP s).}
A probabilistic program is \emph{affine} if all the expressions that occur in 
the program (i.e. in loop guards, conditionals, right-hand sides of 
assignments) are affine and if the set of possible initial valuations is a 
polyhedron. We refer to the class of affine probabilistic programs 
as \APP s.

Due to space restrictions, details of syntax (such as grammar) are relegated to 
\AppendixMaterial.
For an example see Figure~\ref{fig:invariant-running}.

\vspace{-1em}
\subsection{Semantics of Probabilistic Programs}\label{subsec:semantics}
\vspace{-0.5em}

We now formally define the semantics of \PP's.
In order to do this, we first recall some fundamental concepts from probability
theory.

\smallskip\noindent{\em Basics of Probability Theory.}
 A probability space is a triple
$(\Omega,\mathcal{F},\probm)$, where $\Omega$ is a non-empty set (so called
\emph{sample space}), $\mathcal{F}$ is a \emph{sigma-algebra} of measurable 
sets over $\Omega$,
i.e. a collection of subsets of $\Omega$ that contains the empty set
$\emptyset$, and that is closed under complementation and countable unions, and
$\probm$ is a \emph{probability measure} on $\mathcal{F}$, i.e., a function
$\probm\colon \mathcal{F}\rightarrow[0,1]$ such that
\begin{compactitem}
\item $\probm(\emptyset)=0$,
\item for all $A\in \mathcal{F}$ it holds $\probm(\Omega\smallsetminus
A)=1-\probm(A)$, and
\item for all pairwise disjoint countable set sequences $A_1,A_2,\dots \in
\mathcal{F}$ (i.e., $A_i \cap A_j = \emptyset$ for all $i\neq j$)
we have $\sum_{i=1}^{\infty}\probm(A_i)=\probm(\bigcup_{i=1}^{\infty} A_i)$.
\end{compactitem}

Following the usual probabilistic terminology, we say that \emph{almost all} $\omega$ belonging to some set $O\subseteq \Omega$ satisfy some property $\Psi$ if it holds that $\probm{(\{\omega\in O\mid \omega \text{ does not satisfy }\Psi\})}=0$.

\noindent{\em Random variables and filtrations.}
A \emph{random variable} in a probability space $(\Omega,\mathcal{F},\probm)$ is
an $\mathcal{F}$-measurable function $R\colon \Omega \rightarrow \Rset \cup
\{\infty\}$, i.e.,
a function such that for every $a\in \Rset \cup \{ \infty\}$ the set
$\{\omega\in \Omega\mid R(\omega)\leq a\}$ belongs to $\mathcal{F}$. If 
$R(\omega)\in \Rset$ for all $\omega\in \Omega$, we say that $R$ is 
\emph{real-valued.}
We denote by $\expv[R]$ the \emph{expected value} of a random variable $X$~(see \cite[Chapter 5]{Billingsley:book}
for a formal definition). 
A \emph{random vector} in $(\Omega,\mathcal{F},\probm)$ is a vector whose every component is a random 
variable in this probability space. A \emph{stochastic process} in a 
probability space $(\Omega,\mathcal{F},\probm)$ is an infinite sequence of 
random vectors in this space.
We will also use random variables of the form $R\colon\Omega \rightarrow S$ for some finite 
set $S$, which is easily translated to the variables above.
A \emph{filtration} of a sigma-algebra $\mathcal{F}$ is a
sequence $\{\mathcal{F}_i \}_{i=0}^{\infty}$ of $\sigma$-algebras 
such that $\mathcal{F}_0 \subseteq \mathcal{F}_1 \subseteq \cdots \subseteq
\mathcal{F}_n \subseteq \cdots \subseteq \mathcal{F}$.

\emph{Distributions.} We assume the standard definition of a probability 
distribution specified by a cumulative distribution 
function~\cite{Billingsley:book}. We denote by $\mathcal{D}$ be a set of 
probability distributions on 
real numbers, both discrete and continuous.

\smallskip\noindent{\em Probabilistic Control Flow Graphs.}
We consider standard operational semantics of \PP{}s defined via an 
uncountable state-space
Markov decision process (MDP) (uncountable due to real-valued variables).
That is, we associate to each program 
a certain stochastic process.
To define this process, we first define so called 
\emph{probabilistic control flow graphs}~\cite{CFG16}.

\smallskip
\begin{definition}
\label{def:stochgame}
A \emph{probabilistic control flow graph (pCFG)} is a tuple
$\pCFG=(\locs,\pvars,\locinit,\vecinitset,\transitions,\updates,\probdist,\guards)$,
where
\begin{compactitem}
\item $\locs$ is a finite set of \emph{locations} partitioned into four 
pairwise
disjoint subsets  $\locsNB$, $\locsPB$, $\locs_D$, and $\locs_A$ of 
non-deterministic branching, 
probabilistic branching, deterministic, and assignment locations;
\item $\pvars=\{x_1,\dots,x_{|\pvars|}\}$ is a finite set of \emph{program 
variables} (note that $\pvars \subseteq \vars$) ;
\item $\locinit$ is an initial location and $\vecinitset$ is a set of initial 
\emph{assignment vectors};
\item $\transitions\subseteq\locs\times\locs$ is a transition relation;
\item $\updates$ is a function assigning to each transition outgoing from an 
assignment location a tuple $(i,\up)$, where $1\leq i \leq |\pvars|$ is a 
\emph{target 
	variable index} and $\up$ 
is an 
\emph{update element}, which can 
be one of the following mathematical objects: 
(a)~a Borel-measurable function $u\colon \Rset^{|\pvars|}\rightarrow \Rset$;
(b)~a distribution $d\in \mathcal{D}$; or
(c)~a set $R\subseteq \Rset$ (representing a non-deterministic update).
\item $\probdist=\{\prob_{\ell}\}_{\ell \in \locsPB}$ is a collection of
probability distributions, where each $\prob_{\ell}$ is a discrete probability
distribution on the set of all transitions outgoing from~$\ell$;
\item $\guards$ is a function assigning a propositionally linear predicate
(a \emph{guard}) over $\pvars$ to each transition outgoing from a deterministic 
location.
\end{compactitem}

We assume that each location has at least one outgoing transition.
Also, for every deterministic location $\ell$ we assume the following: if
$\tau_1,\dots,\tau_k$ are all transitions outgoing from $\ell$, then $G(\tau_1)
\vee \dots \vee G(\tau_k) \equiv \mathit{true}$ and $G(\tau_i) \wedge G(\tau_j)
\equiv \mathit{false}$ for each $1\leq i < j \leq k$. For each 
distribution $d$ appearing in the 
pCFG we assume the following features are known: expected value $\expv[d]$ of 
$d$ and a set $SP_d$ containing the \emph{support} of $d$. Tthe support is the  
smallest 
closed set of real numbers whose complement has probability zero 
under $d$\footnote{In particular, a support of a \emph{discrete} probability 
	distribution $d$ is simply the at most countable set of all points on a 
	real 
	line that have positive probability under $d$.}) Finally, we assume that 
	each 
assignment location has at most (and thus exactly) one outgoing transition. The 
translation from probabilistic programs to the corresponding pCFG is 
standard~\cite{CFNH16:prob-termination}, and the details are presented 
in~\AppendixMaterial.
\end{definition}

\smallskip\noindent{\em Configurations.}
A \emph{configuration} of a pCFG $\pCFG$ is a tuple $(\ell,\vec{x})$,
where $\ell$ is a location of $\pCFG$ and $\vec{x}$ is an 
$|\pvars|$-dimensional vector.
We say that a transition $\tau$ is \emph{enabled} in a configuration
$(\ell,\vec{x})$ if $\ell$ is the source location of $\tau$ and in addition,
${\vec{x}}\models G(\tau)$ provided that $\ell$ is deterministic. 

\smallskip\noindent{\em Executions and reachable configurations.}
We say that a configuration $(\loc',\vec{x}')$ is a \emph{successor} of a 
configuration $(\loc,\vec{x})$ if 
there is a transition  $\tau=(\loc,\loc')$ enabled in
$(\loc,\vec{x})$ 
and $\vec{x}'$ satisfies the following:
\begin{compactitem}
\item if $\loc$ is not an assignment location, then $\vec{x}'=\vec{x}$;
\item if $\loc$ is an assignment location with $\updates(\tau)=(j,\up)$, then 
$\vec{x}_{i+1}=\vec{x}_i(j\leftarrow a)$ where $a$ satisfies one of the 
following 
depending on the type of $\up$:
\begin{compactitem}
\item if $\up$ is a Borel-measurable function, then $a=\up(\vec{x})$;
\item if $\up$ is an integrable\footnote{A distribution on some numerical 
domain 
is integrable if its expected value exists and is finite. In particular, each 
Dirac distribution is integrable.} distribution $d$, then 
$a\in \support(d)$;
\item if $\up$ is a set, then $a$ is some element of $\up$.
\end{compactitem}
\end{compactitem}
A \emph{finite path} (or
\emph{execution fragment}) of length $k$ in $\pCFG$ is a finite sequence of
configurations $(\ell_0,\vec{x}_0)\cdots(\ell_k,\vec{x}_k)$ such that 
$\loc_0=\locinit$, $\vec{x}_0\in\vecinitset$, and 
for each
$0 \leq i < k$ the configuration $(\loc_{i+1},\vec{x}_{i+1})$ is a successor of 
$(\loc_i,\vec{x}_i)$.
A \emph{run} (or \emph{execution}) in
$\pCFG$ is an infinite sequence of configurations whose every finite
prefix is a finite path.
A configuration $(\loc,\vec{x})$ is {\em reachable} from the initial 
configuration
$(\locinit,\vecinit)$ (where, $\vecinit\in \vecinitset$)
if there is a finite path starting in $(\locinit,\vecinit)$ that ends in
$(\loc,\vec{x})$. We denote by $\Conf_\pCFG,\Fpath_\pCFG$ and $\Run_\pCFG$ the 
sets of all configurations, finite paths and runs in $\pCFG$, respectively, 
dropping the index $\pCFG$ when known from the context.

\smallskip\noindent{\em Non-determinism and Schedulers.}
The probabilistic
behaviour of $\pCFG$ can be captured by constructing a suitable
probability measure over the set of all its runs. Before this can be
done, non-determinism in $\pCFG$ needs to be resolved. This is achieved using 
the 
standard notion of a \emph{scheduler}. Note that there are two sources of 
non-determinism in our programs: one in branching and 
one in assignments. We call a location $\loc$ non-deterministic if $\loc$ is a 
non-deterministic branching location or if $\loc$ is an assignment location 
with the only transition $\tau$ outgoing from 
$\loc$ having a non-deterministic assignment. A configuration 
$(\loc,\vec{x})$ is non-deterministic if $\loc$ is non-deterministic.

\smallskip
\begin{definition}[Schedulers]
\label{def:schedulers}
A scheduler in a pCFG $\pCFG$ is a function $\sigma$
 assigning 
to every finite path that ends in a non-deterministic configuration 
$(\loc,\vec{x})$ a probability distribution on successor configurations of 
$(\loc,\vec{x})$.

\end{definition}

\smallskip\noindent{\em Measurable schedulers.}
Note that schedulers can be viewed as partial functions from the set $\Fpath$ 
to the set of probability distributions over the set $\Conf$.
Since we deal with programs operating over real-valued variables, both $\Fpath$ 
and $\Conf$ can be uncountable sets. Hence, we impose an 
additional \emph{measurability} condition on schedulers, so as to ensure that 
the semantics of probabilistic non-deterministic programs is defined in a 
mathematically 
sound way. First we need to clarify what are measurable sets of configurations 
and histories. We define a sigma-algebra $\mathcal{F}_{\Conf}$ of measurable 
sets of configurations to be the sigma-algebra 
over $\Conf$ 
generated\footnote{In general, it is known that for each set $\Omega$ and each 
collection of its subsets $F\subseteq 2^{\Omega}$ there exists at least one 
sigma-algebra $\mathcal{F}$ s.t. $F\subseteq \mathcal{F}$ and the intersection 
of all such sigma-algebras is again a sigma algebra --  so called sigma-algebra 
generated by $F$~\cite{Billingsley:book}.} by 
all sets of the form $\{\loc\}\times B$, where $\loc$ is a location of $\pCFG$ 
and  
$B$ is a Borel-measurable subset 
of $\Rset^{|\pvars|}$. Next, the set of finite paths $\Fpath$ can be viewed as 
a subset of $\Conf\cup \Conf\times\Conf \cup \Conf\times\Conf\times\Conf \cup 
\dots$. Hence, we define the sigma-algebra $\mathcal{H}$ of measurable sets of 
finite paths 
to be the sigma algebra generated by all sets of the form $\confset_1\times 
\confset_2 \times 
\cdots \times \confset_k \subseteq \Fpath$ such that $k\in \Nset$ and 
$\confset_i \in 
\mathcal{F}_{\Conf}$ for 
all $1\leq i \leq k$. Now we can define the measurability of schedulers. Recall 
that for each finite path $\genpath$ ending in a non-deterministic 
configuration 
we have that $\sigma(\genpath)$ is a probability distribution on $\Conf$. For 
each measurable set of configurations $\confset$ we denote by 
$\sigma(\genpath)(\confset)$ the probability that the random draw from 
distribution $\sigma(\genpath)$ selects an element of $\confset$. We say that a 
scheduler $\sigma$ is measurable if for each $\confset \in \mathcal{F}_{\Conf}$ 
and each $p\in[0,1]$ the set $\{\pi \in \Fpath \mid 
\sigma(\genpath)(\confset)\leq p\}$ belongs to $\mathcal{H}$, i.e., it is a 
measurable set of paths.

While the definition of a measurable scheduler might seem somewhat technical, 
it is natural from measure-theoretic point of view and analogous definitions 
inevitably emerge in works dealing with systems that exhibit both probabilistic 
and non-deterministic behaviour over a continuous state 
space~\cite{NSK:CTMDP-delayed,NK:CTMDP-bisimulation}. In particular, if all the 
variables in a program range over a discrete set (such as the integers), 
then each scheduler in the associated pCFG is measurable.


\smallskip\noindent{\em Stochastic process.}
A pCFG $\pCFG$ together with a scheduler $\sigma$ and initial valuation 
$\vecinit\in\vecinitset$ define a stochastic 
process which produces a random run 
$(\loc_0,\vec{x}_0)(\loc_1,\vec{x}_1)(\loc_2,\vec{x}_2)\cdots$. The evolution 
of 
this process can be informally described as follows: we start in the initial 
configuration, i.e. $(\loc_{0},\vec{x}_0)=(\locinit,\vecinit)$. 
Now assume that $i$ 
steps have elapsed, i.e. a finite path 
$\genpath_i=(\loc_0,\vec{x}_0)(\loc_1,\vec{x}_1)\cdots(\loc_i,\vec{x}_i)$ has 
already 
been produced. Then a successor configuration $(\loc_{i+1},\vec{x}_{i+1})$ is 
chosen as follows:
\begin{compactitem}
\item
If $\loc_i$ is a non-deterministic location, then 
$(\loc_{i+1},\vec{x}_{i+1})$ is sampled according to scheduler $\sigma$, i.e. 
from 
the distribution $\sigma(\genpath_i)$.
\item
If $\loc_i$ is an assignment location (but not a non-deterministic one) with, 
there 
is exactly one transition $\tau=(\loc_i,\loc')$ outgoing from it and we put 
$\loc_{i+1}=\loc'$. Denoting $\updates(\tau)=(j,\up)$, the vector 
$\vec{x}_{i+1}$ is 
then defined as 
$\vec{x}_{i+1}=\vec{x}_{i}(j\leftarrow a)$ where $a$ is chosen depending on 
$\up$:
\begin{compactitem}
\item If $u$ is a function $u\colon
\Rset^{|\pvars|}\rightarrow \Rset$, then $a=f(\vec{x}_{i})$.
\item If $\up$ is a distribution $d$, then $a$ 
is sampled from $d$.
\end{compactitem}
\item In all other cases we have $\vec{x}_{i+1}=\vec{x}_i$, and $\loc_{i+1}$ is 
determined as follows:
\begin{itemize}
	\item If $\loc_i$ is a probabilistic branching location, then a transition 
	$(\loc_i,\loc')$ is sampled from $\probdist_{\loc_i}$ and we put
	$\loc_{i+1}=\loc'$;
	\item If $\loc_i$ is deterministic, then there is exactly one transition 
	$(\loc_i,\loc')$ enabled in $(\loc_i,\vec{x}_i)$, in which case we put 
	$\loc_{i+1}=\loc'$.
\end{itemize}
\end{compactitem}

The above intuitive explanation can be formalized by showing that
each pCFG $\pCFG$ together with a scheduler $\sigma$ and initial valuation 
$\vecinit\in \vecinitset$ uniquely determine a 
certain 
probabilistic 
space $(\OmegaRun,\natfilt,\probm^{\sigma}_{\vecinit})$ in which $\OmegaRun$ 
is a set of 
all 
runs in $\pCFG$, and a stochastic process 
$\pCFG^{\sigma}=\{\cfg{\sigma}{i}\}_{i=0}^{\infty}$ in this space 
such 
that for each run $\run\in \OmegaRun$ we have that $\cfg{\sigma}{i}(\run)$ is 
the 
$i$-th configuration on $\run$ (i.e., $\cfg{\sigma}{i}$ is a random vector 
$(\lr{\sigma}{i},\vr{\sigma}{i})$ with  $\lr{\sigma}{i}$ taking 
values in $\locs$ and $\vr{\sigma}{i}$ being a random vector of dimension 
$|\pvars|$ consisting of real-valued random variables). The sigma-algebra 
$\natfilt$ is the smallest (w.r.t. inclusion) sigma-algebra under which all 
the functions $\cfg{\sigma}{i}$, for all $i\geq 0$, 
are $\natfilt$-measurable (i.e., for each $\cfg{\sigma}{i}$ and each measurable 
set of configurations $\confset\in\mathcal{F}_{\Conf}$ it holds 
$\{\run\mid\cfg{\sigma}{i}(\run)\in \confset\}\in \natfilt$). Equivalently, 
$\natfilt$ can be defined as a sigma algebra generated by all set of runs of 
the form $F\times \Conf^{\infty}$, where $F\in \mathcal{H}$ is a measurable set 
of finite paths. The 
probability 
measure 
$\probm^{\sigma}_{\vecinit}$ is such that for each $i$, the distribution of 
$\cfg{\sigma}{i}$ reflects the aforementioned way in which runs are randomly 
generated. The formal construction 
of $\natfilt$ and $\probm^{\sigma}_{\vecinit}$ proceeds via the standard 
\emph{cylinder 
construction}~\cite[Theorem 2.7.2]{Ash:book} and is
somewhat 
technical, hence we omit it. 
We denote by $\E^\sigma_{\vecinit}$ the expectation operator in probability 
space 
$(\OmegaRun,\natfilt,\probm^{\sigma}_{\vecinit})$.

\vspace{-1em}
\subsection{Almost-Sure and Positive Termination}
\vspace{-0.5em}
Termination is the basic liveness property of \PP{}s. 

\smallskip\noindent{\em Termination and termination time.}
In the following, consider a \PP{} $P$ and its associated pCFG $\pCFG_P$. This 
pCFG has a special location  $\locterm$ corresponding to the value of 
the program counter after 
executing $P$.
We say that a run 
\emph{terminates} if it reaches a 
configuration whose first component is
$\loc_{P}^{\lout}$.
We define a random variable $\ttime$
such that for each run $\run$ the value $\ttime(\run)$ represents 
the first 
point in time when 
the current location is $\loc_P^{\lout}$. If a run $\run$ does 
\emph{not} 
terminate, then 
$\ttime(\run)=\infty$. We call $\ttime$ the 
\emph{termination time} of $\program$. 
Since a probabilistic program may exhibit more than one run, we are interested 
in probabilities of runs that terminate or reach some set of configurations. 
This gives rise to the following fundamental computational problems regarding 
termination:

\begin{compactenum}
\item \emph{Almost-sure termination:} A probabilistic program $P$ is 
almost-surely (a.s.) 
terminating if under each scheduler $\sigma$ and for each initial valuation 
$\vecinit\in\vecinitset$ it holds that 
$\probm^{\sigma}_{\vecinit}(\{\run\mid \run \text{ terminates}\}) = 1$, or 
equivalently, 
if for each $\sigma$ it holds $\probm^{\sigma}(\ttime<\infty)=1$. In 
almost-sure termination question for $P$ we aim to prove that $P$ 
is 
almost-surely terminating.
\item \emph{Positive termination:} A probabilistic program $P$ is 
positively
terminating if under each scheduler $\sigma$ and for each initial valuation 
$\vecinit\in\vecinitset$ it holds that 
$\E^{\sigma}_{\vecinit}(\{\ttime\})<  \infty$. In 
positive termination question for $P$ we aim to prove that $P$ 
is 
positively terminating. Note that each positively terminating program is also 
a.s. terminating, but the converse does not hold.
\end{compactenum}

\lstset{language=affprob}
\lstset{tabsize=3}
\newsavebox{\invrun}
\begin{lrbox}{\invrun}
\begin{lstlisting}[mathescape]
$x:=10$
while $x\geq 1$ do
	if prob(0.75) then $x:=x-1$	else $x:=x+1$ 
	fi
od
\end{lstlisting}
\end{lrbox}
\begin{figure}[t]
\centering
\usebox{\invrun}

\begin{tikzpicture}[x = 1.8cm]
%

\node[det] (while) at (1.5,0)  {$\loc_0$};
\node[ran] (prob) at (3,0) {$\loc_1$};
\node[det] (fin) at (0,0) {$\loc_2$};
%
\draw[tran] (while) to node[font=\scriptsize,draw, fill=white, 
rectangle,pos=0.5] {$x<1$} (fin);
\draw[tran, loop, looseness = 5, in =-65, out = -115] (fin) to (fin);
\draw[tran] (while) to node[font=\scriptsize,draw, fill=white, 
rectangle,pos=0.5] {$x\geq 1$} (prob);


\node (dum1) at (0,0.8) {};
\node (dum2) at (0,-0.8) {};

\draw[tran] (prob) -- node[font=\scriptsize,draw, fill=white, 
rectangle,pos=0.5, inner sep = 1pt] {$\frac{3}{4}$} (prob|-dum1) -- node[auto] 
{x:=x-1}
(while|-dum1)--(while);
\draw[tran] (prob) -- node[font=\scriptsize,draw, fill=white, 
rectangle,pos=0.5, inner sep = 1pt] {$\frac{1}{4}$} (prob|-dum2) 
-- node[auto,swap] {x:=x+1} (while|-dum2)--(while);
\end{tikzpicture}
\caption{An \APP{} modelling an asymmetric 1-D random walk and the associated 
pCFG. Probabilistic locations are depicted by circles, with probabilities given 
on outgoing 
transitions. Transitions are labelled by their effects. Location $\loc_0$ is 
initial and $\loc_2$ is terminal.}
\label{fig:invariant-running}
\end{figure}
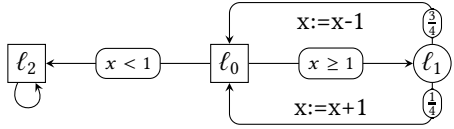

\section{Lexicographic Supermartingales}
\label{sec:lexicographic}

In this section we introduce the notion of a \emph{lexicographic ranking 
supermartingale}, which generalizes the standard notion of ranking 
supermartingales. However, to define any form of a supermartingale, we need the crucial notion of conditional expectation.

\smallskip\noindent{\bf Conditional Expectation.} 
Let $(\Omega,\mathcal{F},\probm)$ be a probability space, 
$X\colon\Omega\rightarrow 
\Rset$ an $\mathcal{F}$-measurable function, and $\mathcal{F}'\subseteq 
\mathcal{F}$ sub-sigma-algebra of $\mathcal{F}$. A \emph{conditional 
	expectation} of $X$ given $\mathcal{F}'$ is an $\mathcal{F}'$-measurable random 
variable denoted by $\E[X| \mathcal{F}']$ which satisfies, for each set $A\in 
\mathcal{F}'$, the following: 
\begin{equation}
\label{eq:cond-exp}
\E[X\cdot 1_A] = \E[\E[X|\mathcal{F}]\cdot 1_A],
\end{equation}
where $1_A \colon \Omega\rightarrow \{0,1\}$ is an \emph{indicator function} of 
$A$, i.e. function returning $1$ for 
each $\omega\in A$ and $0$ for each $\omega\in \Omega\setminus A$. Note that 
the left hand-side of~\eqref{eq:cond-exp} intuitively represents the expected 
value 
of $X(\omega)$ with domain restricted to $A$.

Note that any $\mathcal{F}'$-measurable random variable satisfying~\eqref{eq:cond-exp} can be called a conditional expectation. The definition does not guarantee that the conditional expectation is uniquely defined or that it exists at all. However, from probability theory we have the following:

\begin{proposition}
\label{prop:conditional-exp-existence}
\label{PROP:CONDITIONAL-EXP-EXISTENCE}
Let $(\Omega,\mathcal{F},\probm)$ be a probability space, 
$X\colon\Omega\rightarrow 
\Rset$ an $\mathcal{F}$-measurable function, and $\mathcal{F}'\subseteq 
\mathcal{F}$ sub-sigma-algebra of $\mathcal{F}$. Assume that one of the following conditions hold:
\begin{itemize}
\item $\E[|X|]<\infty$; or
\item $X$ is real-valued and non-negative.
\end{itemize}
Then there exists a conditional expectation of $X$ given $\mathcal{F}'$ and it is almost-surely unique, i.e. for each two $\mathcal{F}'$-measurable functions $f$, $g$ that satisfy~\eqref{eq:cond-exp} it holds $\probm(\{\omega\mid f(\omega)\neq g(\omega)\})=0$.
\end{proposition}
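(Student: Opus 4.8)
The plan is to derive everything from the Radon--Nikodym theorem, treating the non-integrable non-negative case by an extra truncation step, and then to obtain almost-sure uniqueness by a direct level-set argument.

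\emph{Existence, integrable case.} Assume first $\E[|X|]<\infty$, and start with $X\geq 0$. I would define a set function $\nu$ on the sub-sigma-algebra $\mathcal{F}'$ by $\nu(A):=\E[X\cdot 1_A]$ for $A\in\mathcal{F}'$. Using the monotone convergence theorem one checks that $\nu$ is a finite measure on $(\Omega,\mathcal{F}')$ (its total mass is $\E[X]\leq\E[|X|]<\infty$), and $\nu$ is absolutely continuous with respect to the restriction $\probm|_{\mathcal{F}'}$ because $\probm(A)=0$ forces $\E[X\cdot 1_A]=0$. The Radon--Nikodym theorem then supplies an $\mathcal{F}'$-measurable density $f\geq 0$ with $\nu(A)=\int_A f\,d\probm$ for all $A\in\mathcal{F}'$; unwinding the definitions, this is exactly condition~\eqref{eq:cond-exp}, so $f=\E[X\mid\mathcal{F}']$. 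For general integrable $X$ I would split $X=X^+-X^-$, apply the previous step to each non-negative integrable part, and set $\E[X\mid\mathcal{F}']:=\E[X^+\mid\mathcal{F}']-\E[X^-\mid\mathcal{F}']$; both terms are a.s.\ finite, so the difference is a.s.\ well defined, and linearity of expectation gives~\eqref{eq:cond-exp}.

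\emph{Existence, non-negative (possibly non-integrable) case.} Here I would truncate: put $X_n:=\min(X,n)$, which is bounded hence integrable, and let $f_n:=\E[X_n\mid\mathcal{F}']$ from the previous step. Since $X_n\leq X_{n+1}$, the defining property yields $\E[(f_{n+1}-f_n)\cdot 1_A]\geq 0$ for every $A\in\mathcal{F}'$; instantiating $A=\{f_{n+1}<f_n\}$ forces that set to be $\probm$-null, so $(f_n)$ is a.s.\ non-decreasing. Define $f:=\lim_n f_n$, an $\mathcal{F}'$-measurable function with values in $[0,\infty]$. Applying the monotone convergence theorem inside $\E[\,\cdot\,1_A]$ shows $\E[f\cdot 1_A]=\lim_n\E[f_n\cdot 1_A]=\lim_n\E[X_n\cdot 1_A]=\E[X\cdot 1_A]$, so $f$ is the required conditional expectation, now allowed to take the value $+\infty$.

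\emph{Uniqueness, and the main obstacle.} Suppose $f,g$ are both $\mathcal{F}'$-measurable and satisfy~\eqref{eq:cond-exp}, so $\E[f\cdot 1_A]=\E[g\cdot 1_A]$ for all $A\in\mathcal{F}'$. In the integrable case, for each $\epsilon>0$ the set $A_\epsilon:=\{f\geq g+\epsilon\}\in\mathcal{F}'$ satisfies $0=\E[(f-g)1_{A_\epsilon}]\geq\epsilon\,\probm(A_\epsilon)$, hence $\probm(A_\epsilon)=0$; taking $\epsilon=1/n$ and a countable union gives $\probm(f>g)=0$, and symmetrically $\probm(f<g)=0$. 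In the non-negative case I would first intersect with $\{g\leq m\}$ so that the expectations being subtracted are finite, run the same argument on $\{f\geq g+1/n\}\cap\{g\leq m\}$, and let $m\to\infty$, noting that on $\{g=\infty\}$ the strict inequality $f>g$ is impossible. None of these steps is genuinely hard since the result is classical; the one point that needs care is precisely the non-integrable non-negative case, where the conditional expectation must be permitted to be $[0,\infty]$-valued and~\eqref{eq:cond-exp} has to be re-established via the monotone-convergence/truncation passage rather than a bare appeal to Radon--Nikodym. (It is also worth flagging the harmless typo in~\eqref{eq:cond-exp}, whose right-hand side should read $\E[X\mid\mathcal{F}']$.)
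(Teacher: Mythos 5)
Your proof is correct, and it takes a meaningfully different route from the paper's in the one place where the proposition is non-standard, namely the non-negative non-integrable case. The paper applies the sigma-finite Radon--Nikodym theorem directly to the measure $\mu(A)=\E[X\cdot 1_A]$ on $\mathcal{F}'$, arguing that $\mu$ is sigma-finite because $\Omega=\bigcup_n\{X\le n\}$ with $\mu(\{X\le n\})\le n$, and then inherits both existence and almost-sure uniqueness from the Radon--Nikodym derivative. You instead truncate, $X_n=\min(X,n)$, take conditional expectations of the truncations (so that only the finite-measure case of Radon--Nikodym is ever invoked), and pass to the limit by monotone convergence, obtaining a $[0,\infty]$-valued conditional expectation; you then prove uniqueness separately by a direct level-set argument. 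Your route is in fact the more robust one: the sets $\{X\le n\}$ used in the paper to witness sigma-finiteness lie in $\mathcal{F}$ but need not lie in $\mathcal{F}'$, so $\mu$ restricted to $\mathcal{F}'$ can fail to be sigma-finite (take $\mathcal{F}'=\{\emptyset,\Omega\}$ and $\E[X]=\infty$); in that situation the conditional expectation must be allowed to take the value $+\infty$ (which the paper's definition of a random variable permits), and your truncation construction produces it correctly, whereas a bare appeal to the sigma-finite Radon--Nikodym theorem, which yields a finite-valued density, does not. Two cosmetic points: define $f:=\limsup_n f_n$ so that it is $\mathcal{F}'$-measurable everywhere rather than only off the null set where monotonicity fails, and in the uniqueness step observe that any $f$ satisfying~\eqref{eq:cond-exp} with $X\ge 0$ is automatically a.s.\ non-negative, which is what licenses the subtraction of expectations on $\{g\le m\}$. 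You are also right that the right-hand side of~\eqref{eq:cond-exp} should read $\E[X\mid\mathcal{F}']$.
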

\begin{proof}[Proof (Key ideas)]
The proof for the case when $\E[|X|]$ is standard and appears in many textbooks on probability theory (e.g.~\cite{Billingsley:book,Ash:book,Rosenthal:book}). The proof for the second case is essentially the same: the condition that $X$ is non-negative and not admitting infinite value suffices for satisfying the assumptions of Radon-Nikodym Theorem, the main theoretical tool used in the proof. For the sake of completeness we present the proof in \AppendixMaterial.
\end{proof}

Since the constraint~\eqref{eq:cond-exp} defining conditional expectation is 
phrased in terms of expected values, the almost-sure uniqueness cannot be 
strengthened to uniqueness, as re-defining a random variable on a set of 
probability zero does not change its expectation. In the following, when we say 
that a conditional expectation of a random variable $X$ satisfies some 
inequality (e.g. $\E[X\mid\mathcal{F}]\geq 0$) on set $L\subseteq \Omega$, we 
mean that for each $\mathcal{F}$-measurable function $\E[X\mid\mathcal{F}]$ 
satisfying~\eqref{eq:cond-exp} the inequality holds on some subset $L'\subseteq 
L$ such that $\probm(L')=\probm(L)$.

In context of probabilistic programs we work with probability 
spaces of the form $(\Omega,\natfilt,\probm^\sigma)$, where $\Omega$ is a 
set of runs in some $\pCFG$ and $\natfilt$ is (the smallest) sigma-algebra 
such that all the functions $\cfg{\sigma}{i}$, where $i\in \Nset_0$ and 
$\sigma$ is a scheduler, are $\natfilt$-measurable. In such a setting we can 
also consider sub-sigma-algebras $\natfilt_i$, $i\in \Nset_0$, of 
$\natfilt$, where $\natfilt_i$ is the smallest sub-sigma-algebra of 
$\natfilt$ such that all the functions $\cfg{\sigma}{j}$, $0\leq j \leq 
i$, are $\natfilt_i$-measurable. Intuitively, each set $A$ belonging to such 
an $\natfilt_i$ consists of runs whose first $i$ steps satisfy some 
property, and the probability space $(\Omega,\natfilt_i,\probm^\sigma)$ 
allows us to reason about probabilities of certain events happening in the 
first 
$i$ steps of program execution. 
Then, for each $A\in \natfilt_i$, the 
value $\E[\E[X|\natfilt_i]\cdot 1_A]$ represents the expected value of 
$X(\run)$ for the randomly generated run $\run$ provided that we restrict to 
runs whose
prefix of length $i$ satisfies the property given by $A$. 
The sequence $\natfilt_0,\natfilt_1,\natfilt_2,\dots$ forms a 
filtration of $\natfilt$, which we call a \emph{canonical filtration}.


\begin{definition}[Lexicographic Ranking Supermartingale]
\label{def:lexrsm}
Let $(\Omega,\genfilt,\probm)$ be a probability space, 
$\{\genfilt_i\}_{i=0}^{\infty}$ a filtration of $\genfilt$, $\stime$ a stopping 
time w.r.t. that filtration, and 
$\eps\geq 0$. 
An $n$-dimensional real-valued stochastic process 
$\{\vec{X}_{i}\}_{i=0}^{\infty}$ is a 
\emph{lexicographic $\eps$-ranking supermartingale for $\stime$ 
($\eps$-LexRSM)} if the 
following 
conditions hold:
\begin{compactenum}
\item For each $1\leq j \leq n$ the 1-dimensional stochastic process 
$\{\vecseq{X}{i}{j}\}_{i=0}^{\infty}$ is adapted to 
$\{\genfilt_i\}_{i=0}^{\infty}$.
\item For each $\omega \in \Omega$, $i\in \Nset_0$ and $1\leq j \leq n$ it holds 
$\vec{X}_i (\omega)[j]\geq {0}$. 
\item For each $i\in \Nset_0$ there exists a partition of the set $\{\stime>i\}$ into $n+1$ subsets $L^i_1,\dots,L^i_n,L^i_{n+1}$, all of them $\genfilt_i$-measurable, such that for each $1\leq j \leq n$:
\begin{compactitem}
	\item $\E[\vecseq{X}{i+1}{j}\mid 
	\genfilt_i]\leq\vecseq{X}{i}{j} - 
	\eps$ on $L_j^i$; 
	\item 
	 for all $1 \leq j' < j$ we have $\E[\vecseq{X}{i+1}{j'}\mid 
	 \genfilt_i]=\vecseq{X}{i}{j'}$ on $L_{j}^i$; and
	\item $\E[\vecseq{X}{i+1}{j}\mid 
	\genfilt_i]\leq\vecseq{X}{i}{j}$ on $L_{n+1}^i$. 
\end{compactitem}
\end{compactenum}
The $n$-dimensional LexRSM is \emph{strict} if $L^i_{n+1}=\emptyset$ for each $i$.
\end{definition}

An \emph{instance} of an $n$-dimensional LexRSM $\{\vec{X} \}_{i=0}^{\infty}$ is a tuple $(\{\vec{X}_{i=0}^{\infty},\{L_1^i,\dots,L_{n+1}^i\}_{i=0}^{\infty})$ where the second component is a sequence of partitions of $\Omega$ satisfying the condition in Definition~\ref{def:lexrsm}.
Intuitively, the sets $ L_j^{i}$ for $1\leq j \leq n$ represent the lexicographic ranking condition, i.e. for strict LexRSMs, we are in each step able to partition $\Omega$ into subsets such that on $j$-th subset the $j$-th component of $\vec{X}$ is expected to decrease while the previous components are not expected to increase. On additional sets $L_{n+1}^i$, none of the components is expected to increase, but decrease is not required: this will become handy later when we deal with compositional LexRSM-based proofs (Section~\ref{sec:compositional}) -- we will not require decrease in every step as long as decrease happens at least once on each cycle in the pCFG. We say that $\omega\in \Omega$ has level $j$ in step $i$ of instance $(\{\vec{X}_{i=0}^{\infty},\{L_1^i,\dots,L_{n+1}^i\}_{i=0}^{\infty})$ if $\omega\in L^i_j$. We also say that $\omega$ has level $0$ in step $i$ if $\stime(\omega)\leq i$.

The strict 1-dimensional lexicographic $\eps$-ranking supermartingale is, to a large extent, equivalent to the notion of a ranking supermartingale as studied in~\cite{HolgerPOPL,CFNH16:prob-termination}. There is one significant difference: in these works there is an additional \emph{integrability} condition imposed on the one-dimensional process $\{X_i\}_{i=0}^{\infty}$, which requires that for each $i\geq 0$ it holds $\E[|X_i|]<\infty$ (or equivalently $\E[X_i]<\infty$, as the process is required to be non-negative). We do not impose this condition, which simplifies possible application of LexRSMs to programs with non-linear arithmetic, where, as already shown in~\cite{HolgerPOPL}, integrability of program variables is not guaranteed.
The reason why integrability condition can be dropped is that it is only needed in the previous works to ensure that the conditional expectations exist and are well-defined. However, the existence of conditional expectations is also guaranteed for random variables that are real-valued non-negative, see Proposition~\ref{prop:conditional-exp-existence}. 

The following theorem states our main mathematical result on LexRSMs.

\begin{theorem}
\label{thm:lexrsm-main}
\label{THM:LEXRSM-MAIN}
Let $(\Omega,\genfilt,\probm)$ be a probability space, 
$\{\genfilt_i\}_{i=0}^{\infty}$ a filtration of $\genfilt$, $\stime$ a 
stopping 
time w.r.t. that filtration, and $\eps>0$. Assume there exists an $n$-dimensional $\eps$-LexRSM for $\stime$ and its instance $(\{\vec{X}_{i=0}^{\infty},\{L_1^i,\dots,L_{n+1}^i\}_{i=0}^{\infty})$ such that $\probm(\{\omega\in \Omega \mid \text{level of $\omega$ is $<n+1$ in infinitely many steps}\})=1$. Then $\probm(\stime<\infty)=1$. In particular, if there exists a strict $\eps$-LexRSM for $\stime$, then $\probm(\stime<\infty)=1$.
\end{theorem}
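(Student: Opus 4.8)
The plan is to argue by contradiction: suppose $\probm(\stime = \infty) > 0$, and derive a contradiction with the assumption that almost every run has level $< n+1$ infinitely often. The key idea is a "lowest infinitely-often-active component" argument, familiar from the theory of lexicographic ranking functions. For a run $\omega$ with $\stime(\omega) = \infty$, consider the sequence of levels $\mathrm{lev}_0(\omega), \mathrm{lev}_1(\omega), \dots \in \{1,\dots,n+1\}$ (these are well-defined since $\stime(\omega) > i$ for all $i$). Let $j(\omega)$ be the smallest index in $\{1,\dots,n\}$ that appears infinitely often in this sequence --- this is well-defined on the event $N := \{\stime = \infty\} \cap \{\text{level} < n+1 \text{ infinitely often}\}$, which by hypothesis has the same probability as $\{\stime = \infty\}$, hence positive probability. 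By partitioning $N$ according to the value of $j(\omega)$, there is some fixed $j \in \{1,\dots,n\}$ such that the event $N_j := N \cap \{j(\omega) = j\}$ has positive probability.

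The next step is to analyze the process $\{\vecseq{X}{i}{j}\}_{i=0}^\infty$ restricted to $N_j$. On $N_j$, past some finite (but run-dependent) time, the level is always $\geq j$, so component $j$ is a supermartingale (it never increases in expectation: on levels $> j$ and on $L^i_{n+1}$ it is nonincreasing, on level $j$ it decreases by $\eps$), and moreover component $j$ strictly decreases in expectation by $\eps$ at every step where the level equals $j$, which happens infinitely often. Since $\vecseq{X}{i}{j} \geq 0$ always, a nonnegative supermartingale that is expected to decrease by a fixed $\eps > 0$ infinitely often cannot exist with positive probability --- intuitively it would have to go negative. The technical tool here is the supermartingale convergence theorem (or a direct ``optional stopping / truncation'' estimate): a nonnegative supermartingale converges almost surely to a finite limit, but if expected decrements of size $\eps$ occur infinitely often, the sum of expected decrements diverges, contradicting $L^1$-boundedness (which follows from nonnegativity and the supermartingale property giving $\E[\vecseq{X}{i}{j}] \leq \E[\vecseq{X}{0}{j}]$, provided the latter is finite --- and one can reduce to that case by a further conditioning/truncation on the value of $\vecseq{X}{0}{j}$, or by using a localized stopping argument to sidestep integrability of $\vecseq{X}{0}{j}$ entirely).

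The main obstacle is making the ``run-dependent starting time'' rigorous: the time after which the level stays $\geq j$ is not a stopping time with respect to $\{\genfilt_i\}$ in any obvious way, since ``$j$ is never active after time $i$'' looks into the future. The standard fix is to further decompose $N_j$ over the threshold time: for each $m \in \Nset_0$, let $N_j^m$ be the subset of $N_j$ on which $j$ is the smallest infinitely-often level \emph{and} no level strictly less than $j$ occurs at any time $\geq m$. Then $N_j = \bigcup_m N_j^m$, so some $N_j^m$ has positive probability; but the event ``level is $\geq j$ at all times in $[m, i]$'' \emph{is} $\genfilt_i$-measurable, so one can define a well-behaved auxiliary supermartingale $\tilde X_i := \vecseq{X}{\max(i,m)}{j} \cdot \mathbf{1}_{A_i}$ where $A_i$ is this event (using condition~3 to check the supermartingale inequality on $A_i$, and nonnegativity/condition~2 elsewhere), apply convergence/truncation to $\tilde X$, and derive the contradiction on the positive-probability set $N_j^m \subseteq \bigcap_i A_i$. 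Finally, the ``in particular'' clause is immediate: for a strict LexRSM, $L^i_{n+1} = \emptyset$ for all $i$, so every run has level $\leq n < n+1$ in every step, hence certainly in infinitely many steps, so the hypothesis of the theorem holds vacuously and $\probm(\stime < \infty) = 1$ follows.
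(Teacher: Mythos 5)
Your proposal is correct and follows essentially the same route as the paper's proof: contradiction via the minimal infinitely-often level $j$, a further decomposition over the finite threshold time $m$ after which no level below $j$ occurs (to restore $\genfilt_i$-measurability), an auxiliary stopped/truncated non-negative process built from component $j$, and the observation that infinitely many expected $\eps$-decrements force its expectation below zero. The only informal spot is the integrability fix: the truncation must be applied to $\vecseq{X}{m}{j}$ (the value at the threshold time) rather than $\vecseq{X}{0}{j}$, which is exactly what the paper does by intersecting with a positive-probability set $\{\vecseq{X}{\fixn{i}}{\fixn{j}}\leq \fixn{B}\}$.
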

\begin{proof}
The proof proceeds by contradiction, i.e. we assume that an $\eps$-LexRSM for 
$\stime$ satisfying the above conditions exists and $\probm(\stime=\infty)>0$. For succinctness we denote the 
set $\{\omega\mid \stime(\omega)=\infty\}$ by $\genRunSet_{\infty}$.

For $\omega\in \Omega$ we denote the level of $\omega$ at step $i$ by $\levelrank{\omega}{i}$.
The value $\levelrank{\omega}{i}$ is well-defined for all 
$\omega$ and moreover, the random variable $\levelrank{}{i}$ is $\genfilt_i$-measurable. We denote by $\minlev(\omega)$ the smallest $0 \leq j \leq n$ such 
that $j$ is a level of $\omega$ at infinitely many steps. Note that $\omega \in 
\genRunSet_{\infty}$ if and only if $\minlev(\omega)\neq 0$, so $\probm(\genRunSet_{\infty})=\probm(\{\minlev \neq 0 \})$. We denote by $M_i$ 
the set of all $\omega$'s with $\minlev(\omega)=i$.

Throughout the proof we use several times the following fundamental fact: if 
$\probm(A)>0$ for some set $A$ and $A=A_1\cup A_2 \cup A_3\cdots$ for some 
sequence of sets $A_1,A_2,A_3,\dots$, then there exists $i$ such that 
$\probm(A_i)>0$.

Now $\genRunSet_{\infty}=M_1\cup\dots\cup M_n\cup M_{n+1}$ and $\probm(M_{n+1})=0$ (as the measure of $\omega$'s that have level $<n+1$ in only finitely many steps is zero, per Theorem's assumption), there must be $1\leq \fixn{j} \leq n$ 
s.t. $\probm(M_{\fixn{j}})>0$, i.e. with positive probability the smallest level appearing infinitely often is $\fixn{j}$. For each $\omega\in M_{\fixn{j}}$ there is the smallest number
${i}_{\omega,\fixn{j}}\in\Nset_0$ such that for all $i\geq {i}_{\omega,\fixn{j}}$ it holds 
$\levelrank{\omega}{i}\geq \fixn{j}$, i.e. after step ${i}_{\omega,\fixn{j}}$ the level of $\omega$ in all steps up to infinity is at least $\fixn{j}$. Denote by $S_{\fixn{j},{i}}$ the set of all $\omega$'s in $M_{\fixn{j}}$ s.t. 
$i_{\omega,\fixn{j}}={i}$. 
Since $M_{\fixn{j}} = M_{\fixn{j},1} \cup M_{\fixn{j},2} \cup M_{\fixn{j},3} \cup \cdots$, there is $\fixn{i}\in 
\Nset_0$ 
s.t. 
$\probm(M_{\fixn{j},\fixn{i}})>0$. That is, there is a point in time such that with positive probability, after this point the level of $\omega$ is at least $\fixn{j}$, and it is equal to $\fixn{j}$ infinitely many times. Continuing on the same note, for each $B\in \Nset$ we 
denote by $M_{\fixn{j},\fixn{i}}^{B}$ the set off all $\omega$'s in $M_{\fixn{j},\fixn{i}}$ s.t. 
$\vecseq{X}{\fixn{i}}{\fixn{j}}(\omega)\leq B$. Since $M_{\fixn{j},\fixn{i}}= M_{\fixn{j},\fixn{i}}^{1} \cup M_{\fixn{j},\fixn{i}}^{2} 
\cup M_{\fixn{j},\fixn{i}}^{3} \cup \cdots  $, there is $\fixn{B}\in \Nset$ s.t. 
$\probm(M_{\fixn{j},\fixn{i}}^{\fixn{B}})>0$. 

So there is a set of positive probability $M=M_{\fixn{j},\fixn{i}}^{\fixn{B}}$  such that for all $\omega$'s in the set: after step $\fixn{i}$ the level of $\omega$ is at least $\fixn{j}$ (which, intuitively, means that $\vec{X}[\fixn{j}]$ does not have a tendency to increase after this time step on $\omega$'s in $M$), the level of $\omega$ is infinitely often equal to $\fixn{j}$ (intuitively, $\vec{X}[\fixn{j}]$ has infinitely often the tendency to decrease by $\geq \eps$ for $\omega$'s in $M$), and at time $\fixn{i}$ the value of $\vec{X}[\fixn{j}]$ is bounded (by $B$) on $M$. This should, again intuitively, lead to a conclusion, that when ``restricted to $M$'', $\vec{X}[\fixn{j}]$ has tendency to decrease unboundedly over time, a contradiction with non-negativeness of $\vec{X}[\fixn{j}]$. However, proving this intuitive result is much more intricate: most importantly, it is not clear what ``restricted to $M$'' stands for. The stochastic process $\{\vec{X}\}_{i=0}^{\infty}$ as well as the LexRSM conditions are tied to the filtration $\{\genfilt_i\}_{i=0}^{\infty}$, but the set $M$ is not necessarily $\genfilt_i$ measurable for any concrete $i$, since whether $\omega$ belongs to $M$ depends on values of $\levelrank{\omega}{i}$ for \emph{infinitely many $i$}. Hence, we use a work-around.

Let $D$ be the set of all $\omega\in \Omega$ such that 
$\vecseq{X}{\fixn{i}}{\fixn{j}}(\omega)\leq \fixn{B}$. Note that $M \subseteq D$ and $D\in 
\genfilt_{\fixn{i}}$ (and thus also $D\in \genfilt_{i'}$ for all $i'\geq \fixn{i}$). 
Define a stopping time $F$ w.r.t. filtration $\{\genfilt_i\}_{i=0}^{\infty}$ as follows: for all $\omega \in \Omega$ 
we put 
$F(\omega)=\inf\{k\in \Nset_0 \mid k\geq \fixn{i} \text{ and }
\levelrank{\omega}{k} < \fixn{j} \}$.

Define a 
(one-dimensional) stochastic process $\{Y_k\}_{k=0}^{\infty}$ as 
follows:
$$
Y_k(\omega) = \begin{dcases}
0 & \text{if $\omega \not\in D$}\\
\fixn{B} & \text{if $\omega \in D$ and $k<\fixn{i}$ }\\
\vecseq{X}{k}{\fixn{j}}(\omega) & \text{if $\omega \in D$, $k\geq \fixn{i}$ and 
$F(\omega)>k$ 
}\\
\vecseq{X}{F(\omega)}{\fixn{j}}(\omega) & \text{if $\omega\in D$, $k\geq 
\fixn{i}$ and 
$F(\omega)\leq k$ }.
\end{dcases}
$$

Intuitively, the process $\{Y_k\}_{k=0}^{\infty}$ is an over-approximation of what we would like to call ``$\vec{X}[\fixn{j}]$ restricted to $M$.''
We prove several properties of the process. 
First, clearly for all $k\geq 0$, $Y_k(\omega)\geq 0$. Second, for each $k\geq \fixn{i}$, the variable $Y_k$ is $\genfilt_k$-measurable, as $D\in \genfilt_{\fixn{i}}$, $\{\vecseq{X}{i}{\fixn{j}}(\omega)\}_{i=0}^{\infty}$ is adapted to the filtration $\{\genfilt_i\}_{i=0}^{\infty}$ and $F$ is a stopping time w.r.t. this filtration. Finally, for any $k\in \Nset_0$ denote by 
$\noofdec_k$ the random variable such that $\noofdec_k(\omega) = |\{i'\in 
\Nset\mid \fixn{i} \leq i' < k \text{ and } \levelrank{\omega}{i'}=\fixn{j}\} |$, i.e. $\noofdec_k(\omega)$ counts the number of steps between $\fixn{i}$ and $k$ in which level is $\fixn{j}$.
We prove
that for each $k\geq \fixn{i}$ 
it holds 
\begin{equation}
\label{eq:lexrsm-soundness-main}
\E[Y_k]\leq \fixn{B}\cdot \probm(D) - \eps\cdot\sum_{\ell=0}^{k-\fixn{i}} \ell\cdot\probm(D 
\cap \{F\geq k\} \cap \{\noofdec_k 
= 
\ell\}).
\end{equation}

The proof of~\eqref{eq:lexrsm-soundness-main} goes by induction on $k$. The computations being somewhat technical, we defer them to \AppendixMaterial. 

Now according to~\eqref{eq:lexrsm-soundness-main} it holds  $\E[Y_k]\leq \fixn{B}\cdot \probm(D) - \eps\cdot\sum_{\ell=0}^{k-\fixn{i}} \ell\cdot\probm(D 
\cap \{F\geq k\} \cap \{\noofdec_k 
= 
\ell\})$ for all $k\geq \fixn{i}$. Let $m=3\fixn{B}\cdot \probm(D)/(\eps\cdot \probm(M))$. For each $\omega\in M$ we see level $\fixn{j}$ infinitely often, so there exists step $k(\omega)\geq \fixn{i}$ such that $\noofdec_{k(\omega)}\geq m$, i.e. $\omega$ has level $\fixn{j} $ at least in $m$ steps between steps $\fixn{i}$ and $k(\omega)$. Clearly, $M = \bigcup_{\ell=\fixn{i}}^{\infty} (M \cap \{k(\omega) \leq \ell\})$ and hence $\probm(M)=\lim_{\ell\rightarrow \infty}\probm(M \cap \{k_{m} \leq \ell\})$. Thus, there exists $\ell_0 \geq m+\fixn{i}$ such that $\probm(M \cap \{\noofdec_{\ell_0}\geq m\}) \geq \probm(M)/2$. Clearly $M \cap \{\noofdec_{\ell_0}\geq m\} \subseteq D\cap \{F\geq \ell_0 \} \cap \{\noofdec_{\ell_0}\geq m \}$. From \eqref{eq:lexrsm-soundness-main} it follows that
\begin{align*}
\E[Y_{\ell_0}]&\leq \fixn{B}\cdot \probm(D) - \eps\cdot\sum_{\ell=0}^{\ell_0-\fixn{i}} \ell\cdot\probm(D 
\cap \{F\geq \ell_0\} \cap \{\noofdec_{\ell_0} 
=
\ell\}) \\
&= \fixn{B}\cdot \probm(D) - \eps\cdot \sum_{\ell=1}^{\ell_0-\fixn{i}} \probm(D 
\cap \{F\geq \ell_0\} \cap \{\noofdec_{\ell_0} 
\geq 
\ell\})\\
& \leq \fixn{B}\cdot \probm(D) - \eps\cdot \sum_{\ell=1}^{m} \probm(D 
\cap \{F\geq \ell_0\} \cap \{\noofdec_{\ell_0} 
\geq 
\ell\}) \\&\leq \fixn{B}\cdot \probm(D) - \eps\cdot \sum_{\ell=1}^{m} \probm(D 
\cap \{F\geq \ell_0\} \cap \{\noofdec_{\ell_0} 
\geq 
m\})\\
&\leq \fixn{B}\cdot \probm(D) - \eps\cdot m \cdot \probm(M)/2 < 0, 
\end{align*}where the second line follows by standard re-arranging of terms, the third line follows from the fact that $m\leq \ell_0 + \fixn{i}$, the fourth line follows from $\{\noofdec_{\ell_0} 
\geq 
m \} \subseteq\{ \noofdec_{\ell_0} 
\geq 
\ell\}$ for each $\ell \leq m$, the first inequality on the last line follows 
by  using $\probm(M)/2\leq \probm(M \cap \{\noofdec_{\ell_0}\geq m\}) \leq 
\probm(D\cap \{F\geq \ell_0 \} \cap \{\noofdec_{\ell_0}\geq m \})$, and
the last inequality follows by expanding the definition of $m$. But for each $k$ the random variable $Y_k$ is non-negative, so it must also have a non-negative expectation, a contradiction. Finally, note that for strict $n$-dimensional LexRSMs the condition of level $<n+1$ appearing infinitely many times is trivially satisfied.
\end{proof}

\section{Applying Lexicographic Supermartingales to Probabilistic Programs}
\label{sec:lex-programs}

We now discuss how to leverage the mathematical results of the previous section 
to provide a sound proof rule for almost-sure termination of probabilistic 
programs. Hence, for the rest of this section we fix a \PP{} $\program$ and the 
associated pCFG 
$\pCFG_\program=(\locs,\pvars,\locinit,\vecinitset,\transitions,\updates,\probdist,\guards)$.

We aim to define a function assigning a non-negative vector to each 
configuration (so called measurable map) such that in each point of 
computation, the expected value of the function after performing one more 
computational step is smaller (in lexicographic ordering) than the current one. 
We formalize this property below.

\begin{definition}[Measurable Maps and Linear Expression Maps]
A 1-dimensional \emph{measurable map} for a \PP{} $\program$ is a  
real-valued function $\lem$ 
assigning to each program location $\loc$ of $\pCFG_P$ a Borel-measurable function $\lem(\loc)$  of program variables, i.e. each $\lem(\loc)$  is a function of type $\Rset^{|\pvars|}\rightarrow \Rset$. As a special case, if all the functions $\lem(\loc)$ are affine, then we call $\lem$ a 1-dimensional \emph{linear expression map (LEM)}. 
Am $n$-dimensional measurable/linear expression map is a vector $\vec{\lem}=(\lem_1,\dots,\lem_n)$ of 1-dimensional measurable/linear expression maps. 
\end{definition}

Each 1-dimensional measurable map $\lem$ and location $\loc$ determines a function $\lem(\loc)$ 
which takes as an argument an $|\pvars|$-dimensional vector. We use $\lem(\loc,\vec{x})$ as a shorthand 
notation for $\lem(\loc)(\vec{x})$.

We now formalize the notion of a transition in a pCFG being ranked by a 
measurable map. We first define this notion for transitions that do not go out 
of a probabilistic branching location, as these require a special treatment.

\begin{definition}
\label{def:rank1}
Let $\lem$ be a measurable map, $(\loc,\vec{x})$ be a configuration such that 
$\loc\not\in \locsPB$ and let 
$\tau=(\loc,\loc')$ be a 
transition outgoing from $\loc$. For an $\eps\geq 0$ we say that $\tau$ is 
\emph{$\eps$-ranked by $\lem$ from $(\loc,\vec{x})$} if the following 
conditions are 
satisfied, depending on the type of $\loc$:
\begin{compactitem} 
		\item 
		if $\loc$ is a deterministic or non-deterministic branching location, 
		then
		$$\lem(\loc',\vec{x}) \leq 
		\lem(\loc,\vec{x})-\eps;$$
		
		
		\item 
		If $\loc$ is an assignment location, then we distinguish 
		three cases, depending on $\updates(\tau)=(j,\up)$ (recall that $\up$ 
is an update element):
		\begin{compactitem}
			\item If $\up\colon \Rset^{|\pvars|}\rightarrow \Rset$ is a 
Borel-measurable function, then we require
			$$\lem(\loc',\vec{x}(j\leftarrow 
\up(\vec{x}))) \leq \lem(\loc,\vec{x})-\eps$$
			\item If $\up$ is a distribution $d$, then we require $$ 
			\lem(\loc',\vec{x}(j\leftarrow \E[d])) \leq 
			\lem(\loc,\vec{x})-\eps,$$ 
			where $\E[d]$ is the 
expected value of the 
			distribution $d$.
			\item 
			If $\up$ is a set, then we require $$\sup_{a\in\up}
			\lem(\loc',\vec{x}(j \leftarrow a)) \leq \lem(\loc,\vec{x})-\eps.$$
		\end{compactitem}
		
	\end{compactitem}
\end{definition}

Since ranking supermartingales are required to decrease on average, for 
individual transitions outgoing from $\locsPB$ it does not make sense to say 
that they are ranked or not. Instead, for each $\loc\in\locsPB$ we consider all 
outgoing transitions together.

\begin{definition}
\label{def:rank2}
Let $\lem$ be a measurable map, and let $(\loc,\vec{x})$ be a configuration with
$\loc\in \locsPB$. For an $\eps\geq 0$ we say that $\loc$ is $\eps$-ranked by 
$\lem$ from $(\loc,\vec{x})$ if 
$$\sum_{(\loc,\loc')\in\transitions} 
		Pr_{\loc}\left(\loc,\loc'\right)\cdot
		\lem(\loc',\vec{x}\leq \lem(\loc,\vec{x})-\eps.$$
\end{definition}

To capture the specific of $\locsPB$, we introduce the 
notion of \emph{generalized transition}.

\begin{definition}
A generalized transition of a pCFG $\pCFG$ is either a transition of $\pCFG$ 
outgoing 
from location not in $\locsPB$ or a location $\loc\in\locsPB$.
\end{definition}

Intuitively, we represent the set of transitions outgoing from $\loc\in\locsPB$ 
by the source location $\loc$.  For generalized transitions 
$\tilde{\tau}=\loc\in\locsPB$ we say that $\tilde{\tau}$ is outgoing from 
$\loc$.

Definitions~\ref{def:rank1} and~\ref{def:rank2} 
define when is a generalized transition $\eps$-ranked by $\lem$ from 
configuration $(\loc,\vec{x})$. We say that a 
generalized transition is \emph{unaffected} by $\lem$ from $(\loc,\vec{x})$ if 
it is $0$-ranked by $\lem$ from $(\loc,\vec{x})$.

As in termination analysis of non-probabilistic programs, our LexRSMs are typically supported by \emph{invariants}, i.e. overapproximations of the set of reachable configuration. 

\begin{definition}[Invariant Map and Linear Invariant Map]
An \emph{invariant map} for a \PP{} $\program$ is a function $\inv$ assigning to each location of $\pCFG_{\program}$ a Borel-measurable set $\inv({\loc})\subseteq \Rset^{|\pvars|}$ of variable valuations, so called invariant of $\loc$, such that for each configuration $(\loc,\vec{x})$ reachable from the initial configuration it holds $\vec{x}\in \inv(\loc)$. Additionally, if each set $\inv(\loc)$ is of the form $\{\vec{x}\mid\vec{x}\models \Psi^\ell \}$ for some propositionally linear predicate $\Psi^\ell$, then we call $\inv$ a \emph{linear invariant map} (LIM).
\end{definition}

Slightly abusing the notation, we view each LIM equivalently as a function assigning linear predicates (whose satisfaction sets overapproximate the set of reachable valuations) to program locations.

We now have all the ingredients needed to define the notion of LexRSM maps for probabilistic programs. For notational convenience, we extend the function $\guards$ (which assigns guards to deterministic transitions) to the set of all generalized transition: for a generalized transition $\tau'$ which is not a standard transition outgoing from deterministic location, we put $\guards(\tau')=0\leq 0 \equiv \mathit{true}$.

\begin{definition}[Lexicographic Ranking Supermartingale Map]
Let $\eps>0$. An $n$-dimensional \emph{lexicographic $\eps$-ranking supermartingale map} ($\eps$-LexRSM map) for a program $\program$ supported by an invariant map $\inv$ is an $n$-dimensional measurable map $\vec{\lem}=(\lem_1,\dots,\lem_n)$ for $\program$ such that for each configuration $(\loc,\vec{x})$ where $\loc\neq \locterm$ and $\vec{x}\in \inv(\loc)$ the following conditions are satisfied:
 \begin{compactitem}
 	\item
 	for all $1\leq j \leq n$, $\lem_j(\loc,\vec{x})\geq 0$; and
 	\item 
 	for each generalized transition $\tilde{\tau}$ outgoing from $\loc$ such that $\vec{x}\models\guards(\tilde\tau)$ there 
 	exists $1\leq j 
 	\leq$ n such that
 	\begin{compactitem}
 	\item
 	$\tilde{\tau}$ is $\eps$-ranked by $\lem_j$ from $(\loc,\vec{x})$
 	\item
 	for all $1\leq j'<j$ we have that $\tilde{\tau}$ is unaffected by 
 	$\lem_{j'}$ from $(\loc,\vec{x})$.
 	\end{compactitem}
 \end{compactitem}
If additionally $\lem$ is a linear expression map, then we call it a linear $\eps$-LexRSM map ($\eps$-LinLexRSM).
\end{definition}


The main result is the soundness of $\eps$-LexRSM maps for proving a.s. 
termination.

\begin{theorem}
\label{thm:lexrsm-programs}
Let $\program$ be a probabilistic program. Assume that there exists an $\eps>0$ 
and an $n$-dimensional $\eps$-LexRSM map $\vec{\lem}=(\lem_1,\dots,\lem_n)$ for 
$\program$ supported 
by some 
invariant map $\inv$. 
Then $\program$ terminates almost surely.
\end{theorem}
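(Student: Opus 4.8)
The plan is to deduce the statement from the purely mathematical Theorem~\ref{thm:lexrsm-main}. Fix an initial valuation $\vecinit\in\vecinitset$ and a scheduler $\sigma$; it suffices to prove $\probm^{\sigma}_{\vecinit}(\ttime<\infty)=1$, and in fact it is enough to treat \emph{deterministic} schedulers, since, by a standard de-randomization argument, the run distribution induced by a measurable scheduler is a mixture of run distributions of deterministic schedulers (an alternative, sketched at the end, is to work with a filtration refined by a half-step that resolves each scheduler choice). Having fixed $\sigma$ and $\vecinit$, I would work in the probability space $(\OmegaRun,\natfilt,\probm^{\sigma}_{\vecinit})$ with the canonical filtration $\{\natfilt_i\}_{i\ge 0}$; there $\ttime$ is a stopping time because $\{\ttime\le i\}=\bigcup_{k\le i}\{\lr{\sigma}{k}=\locterm\}\in\natfilt_i$.

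Next I would turn the LexRSM map $\vec{\lem}=(\lem_1,\dots,\lem_n)$ into a stochastic process, setting $\vec{X}_i(\run)[j]:=\lem_j(\cfg{\sigma}{i}(\run))$ whenever $i<\ttime(\run)$ and $\vec{X}_i(\run)[j]:=0$ otherwise (consistently with the convention $\lem_j(\locterm,\cdot)\equiv 0$), and verify the three clauses of Definition~\ref{def:lexrsm}. Adaptedness is immediate: each $\lem_j(\loc)$ is Borel-measurable, each $\cfg{\sigma}{i}$ is $\natfilt_i$-measurable, there are finitely many locations, and $\{\ttime>i\}\in\natfilt_i$. Non-negativity holds because for $i<\ttime(\run)$ the configuration $\cfg{\sigma}{i}(\run)$ is reachable, so by the invariant map its valuation lies in the invariant of its location and its location is not $\locterm$; the non-negativity clause of the LexRSM-map definition then applies, and for $i\ge\ttime$ the value is $0$. (All conditional expectations below are well defined, the process being non-negative, by Proposition~\ref{prop:conditional-exp-existence}.) Finally, for the partition of $\{\ttime>i\}$: on that event the run sits in a configuration $(\loc,\vec{x})$ with $\loc\neq\locterm$ and $\vec{x}\in\inv(\loc)$; as $\sigma$ is deterministic, the generalized transition $\tilde{\tau}$ leaving $(\loc,\vec{x})$ is determined $\natfilt_i$-measurably, and the LexRSM-map definition yields a level $j\in\{1,\dots,n\}$ with $\tilde{\tau}$ $\eps$-ranked by $\lem_j$ and unaffected by every $\lem_{j'}$, $j'<j$; I put the run in $L^i_j$ and set $L^i_{n+1}=\emptyset$.

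The core of the verification is that these sets satisfy the conditional-expectation inequalities. Given $\natfilt_i$, the law of $\cfg{\sigma}{i+1}$ is exactly the one-step successor distribution of the pCFG semantics, so a case split on the type of $\loc$ identifies $\E[\vec{X}_{i+1}[j']\mid\natfilt_i]$ with the left-hand side of the relevant clause of Definition~\ref{def:rank1} or~\ref{def:rank2}: at a deterministic, non-deterministic-branching, function-assignment, or (under the deterministic scheduler) non-deterministic-assignment location it is $\lem_{j'}$ evaluated at the unique successor; at a probabilistic-branching location it is $\sum_{(\loc,\loc')\in\transitions}Pr_{\loc}(\loc,\loc')\,\lem_{j'}(\loc',\vec{x})$; and at a distribution-assignment location it is $\lem_{j'}(\loc',\vec{x}(j\leftarrow\E[d]))$ provided $\lem_{j'}(\loc',\cdot)$ is affine in the sampled coordinate (automatic for linear expression maps). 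Substituting that $\tilde{\tau}$ is $\eps$-ranked by $\lem_j$ and $0$-ranked by each $\lem_{j'}$ with $j'<j$ gives $\E[\vec{X}_{i+1}[j]\mid\natfilt_i]\le\vec{X}_i[j]-\eps$ on $L^i_j$ and $\E[\vec{X}_{i+1}[j']\mid\natfilt_i]\le\vec{X}_i[j']$ on $L^i_j$ for $j'<j$. Hence $\{\vec{X}_i\}$ together with the partitions $\{L^i_1,\dots,L^i_n,\emptyset\}$ is (modulo the $\le$-versus-$=$ point noted in the last paragraph) a strict $\eps$-LexRSM for $\ttime$, so Theorem~\ref{thm:lexrsm-main} applies and gives $\probm^{\sigma}_{\vecinit}(\ttime<\infty)=1$; since $\sigma$ and $\vecinit$ were arbitrary (and randomized schedulers reduce to deterministic ones), $\program$ terminates almost surely.

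I expect the main obstacle to be the measurability of the level sets $\{L^i_j\}$ under non-determinism: at a non-deterministic-branching location the level produced by the LexRSM map depends on which successor is taken, and under a randomized scheduler that is $\natfilt_{i+1}$-information rather than $\natfilt_i$-information, so the partition cannot be chosen $\natfilt_i$-measurably in the naive way. This is exactly why I first reduce to deterministic schedulers; alternatively one can double the time index, inserting a half-step in which the scheduler's choice is revealed, attribute the required decrease to that half-step, place the intervening step in $L_{n+1}$, and observe that the ``level $<n+1$ infinitely often'' hypothesis of Theorem~\ref{thm:lexrsm-main} still holds. Two further points need care but are routine: the behaviour at $\locterm$, handled by freezing the process at $\vec{0}$; and the fact that ``unaffected'' only yields ``$\le$'' where Definition~\ref{def:lexrsm} writes the equality $\E[\vec{X}_{i+1}[j']\mid\natfilt_i]=\vec{X}_i[j']$ --- harmless, since the proof of Theorem~\ref{thm:lexrsm-main} only ever uses non-increase of the single critical component and so goes through verbatim with ``$\le$''.
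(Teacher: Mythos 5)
Your proposal follows the same route as the paper's proof: define $\vec{X}_i[j]=\lem_j(\cfg{\sigma}{i})$, partition $\{\ttime>i\}$ by the level of the generalized transition taken at step $i$, verify the clauses of Definition~\ref{def:lexrsm}, and invoke Theorem~\ref{thm:lexrsm-main}. In that sense it is correct and not a different argument. What distinguishes it is that you have correctly identified the one place where the paper's one-paragraph proof is genuinely incomplete: the paper defines $L^i_j$ via ``the $(i+1)$-th transition on $\run$,'' which at a non-deterministic branching location under a \emph{randomized} measurable scheduler is $\natfilt_{i+1}$-information, so the paper's partition need not be $\natfilt_i$-measurable as Definition~\ref{def:lexrsm}(3) requires; and averaging over the scheduler's branch distribution mixes an $\eps$-decrease with mere non-increase, which does not yield a uniform $\eps$-decrease. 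Of your two fixes, the half-step filtration is the cleaner one: it is self-contained, it is exactly what the non-strict case of Theorem~\ref{thm:lexrsm-main} (with the ``level $<n{+}1$ infinitely often'' hypothesis) was set up to absorb, and the i.o.\ condition holds because every other step of the doubled process has level $<n{+}1$ on a non-terminating run. The de-randomization route is morally right but should not be called ``standard'' without support: representing the run measure of a measurable randomized scheduler over an uncountable configuration space as a mixture of measures of \emph{measurable deterministic} schedulers is a Kuhn-type theorem that needs a measurable-selection argument or a citation. Your remaining observations are also correct and useful: the equality in Definition~\ref{def:lexrsm}(3) for $j'<j$ is only ever used as a $\leq$ in the proof of Theorem~\ref{thm:lexrsm-main} (and the paper's own proof of the present theorem only establishes $\leq$), and the identification of the conditional expectation at a distribution-assignment with $\lem_{j'}(\loc',\vec{x}(j\leftarrow\E[d]))$ indeed presupposes affineness in the sampled coordinate, a point Definition~\ref{def:rank1} leaves implicit for general measurable maps.

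One caveat on a point you label routine: freezing the process at $\vec{0}$ from time $\ttime$ onward secures clause (2) of Definition~\ref{def:lexrsm} (which the paper's unfrozen process may violate, since the LexRSM-map definition does not constrain $\lem_j$ at $\locterm$), but it can break clause (3) at the terminal step: on $L^i_j\cap\{\ttime=i{+}1\}$ you need $0\leq\lem_j(\loc,\vec{x})-\eps$, which does not follow when $\lem_j(\loc,\vec{x})<\eps$ at the pre-terminal configuration. The paper's version has the dual defect (decrease holds, non-negativity may fail at $\locterm$), and the proof of Theorem~\ref{thm:lexrsm-main} does evaluate $\vec{X}_{F(\omega)}[\fixn{j}]$ at what may be a terminal configuration, so neither treatment is airtight as written. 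The honest repair is to additionally require (or arrange, e.g.\ by a harmless modification of the map at $\locterm$ together with relaxing the decrease at the final step to non-negativity of the predecessor value) that $\lem_j$ be non-negative at $\locterm$; this is a definitional infelicity of the paper rather than a flaw in your argument, but it deserves a sentence rather than the word ``routine.''
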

\begin{proof}
Let $\sigma$ be any measurable scheduler and $\vecinit\in\vecinitset$ any 
initial variable valuation in $\program$.
We define an $n$-dimensional stochastic process 
$\{\vec{X}_{i}\}_{i=0}^{\infty} $ on the probability space 
$(\OmegaRun,\natfilt,\probm^{\sigma}_{\vecinit})$ such 
that for each 
$i\geq 0$ and $1\leq j 
\leq n$ and each run $\run$ we put $\vecseq{X}{i}{j}(\run) = 
\lem_j(\cfg{\sigma}{i}(\run))$. We claim that $\{\vec{X}_{i}\}_{i=0}^{\infty}$ 
is a strict $n$-dimensional $\eps$-LexRSM for the termination time $\ttime$ of $\program$. Clearly 
the process is real-valued, componentwise non-negative, and adapted to the 
canonical filtration of $\natfilt$. It remains to prove that condition (3) in 
Definition~\ref{def:lexrsm} is satisfied. To this end, for each $i\geq 0$ we 
define an almost-sure partition of the set $\{\run\in \OmegaRun\mid 
\ttime(\run) >i\}$ into sets $L^{i}_1,\dots,L^{i}_n$ by putting $L^i_j$ to be 
the set of all runs $\run$ such that $\ttime(\run)>i$ and for $\run$ the index 
$j$ is the smallest one such that the $(i+1)$-th transition on $\run$ is ranked 
by $\lem_j$ from $\cfg{\sigma}{i}(\run)$. Due to 
definition of an $\eps$-LexRSM map such a $j$ exists for all 
$\run\in\{\ttime>i\}$ and hence we indeed have a partition (so $L^i_{n+1}=\emptyset$ for all $i$). It remains to prove 
that irrespective of the initial choice of $\sigma$ and $\vecinit$ it holds, 
for each $1\leq j 
\leq n$ and $j'<j$, that $\E^\sigma_{\vecinit}[\vecseq{X}{i+1}{j}\mid 
\natfilt_i]\leq X_i[j]-\eps $ on $L_j^i$ and 
$\E^\sigma_{\vecinit}[\vecseq{X}{i+1}{j}\mid 
\natfilt_i]\leq X_i[j] $ on $L_{j'}^i$. This follows easily from 
the definition of $L^{i}_1,\dots,L^{i}_n$ and from the definition of a 
transition being $\eps$-ranked by $\lem_j$.

Since  $\{\vec{X}_{i}\}_{i=0}^{\infty}$ 
is an $\eps$-LexRSM for $\ttime$, from Theorem~\ref{thm:lexrsm-programs} it 
follows that $\probm^{\sigma}_{\vecinit}(\ttime <\infty) =1$, irrespective of 
$\sigma$ and $\vecinit$.
\end{proof} 

We conclude this section by showing that (Lin)LexRSMs can, unlike 1-dimensional 
RSMs, prove a.s. 
termination of programs whose expected termination time is infinite.
\lstset{language=affprob}
\lstset{tabsize=2,escapechar=\&}
\newsavebox{\infas}
\begin{lrbox}{\infas}
	\begin{lstlisting}[mathescape]
$x:=1$;$c:=1$ 
while $c\geq 1$ do				
	if prob(0.5) then			
		$x:=2\cdot x$			
	else						
		$c:=0$					
	fi							
od								
while $x\geq 0$ do $x:=x-1$ od	
	\end{lstlisting}
\end{lrbox}
\newsavebox{\infast}
\begin{lrbox}{\infast}
	\begin{lstlisting}[mathescape]

$(6c+2,x)$
$(6c+1,x)$
$(6c+3,x)$

$(3,x)$


$(0,x)$	
	\end{lstlisting}
\end{lrbox}
\newsavebox{\infastinv}
\begin{lrbox}{\infastinv}
	\begin{lstlisting}[mathescape]

$[c\geq 0 \wedge x\geq 1]$
$[c\geq 1 \wedge x\geq 1]$
$[c\geq 1 \wedge x\geq 1]$

$[c\geq 1 \wedge x \geq 1]$


$[x\geq 0]$	
	\end{lstlisting}
\end{lrbox}
\begin{figure}[t]
	\centering
	\usebox{\infas}
	\hspace{0.1cm}
	\usebox{\infast}
	\hspace{0.1cm}
	\usebox{\infastinv}
\caption{An a.s. terminating program with infinite expected termination time. A 
2-dimensional $1$-LinLexRSM map for the program is given on the right, along 
with the 
supporting invariants in square brackets. The invariants and a LinLexRSM on 
each 
line belong to the program location in which the program is \emph{before} 
executing the command on that line. The function is indeed a $1$-LinLexRSM, 
since in the probabilistic branching location $\loc$ we have 
$\preexp{\lem}(\loc,(x,c))=3c+3$ and $3c+3\leq 6c+1-1$ for all $c\geq 1$.} 
\label{fig:inftime}
\end{figure}

\begin{example}
\label{ex:infinite-time}
Consider the program in Figure~\ref{fig:inftime}. It is easy to see that it 
terminates a.s., but the expected termination time is infinite: to see this, 
note that that the expected value of variable $x$ upon reaching the second loop 
is $\frac{1}{2}\cdot 1 + \frac{1}{4}\cdot 2 + \frac{1}{8}\cdot 4 + \cdots = 
\frac{1}{2}+\frac{1}{2}+\frac{1}{2}+\cdots=\infty$ and that the time needed to 
get out of the second loop is equal to the value of $x$ upon entering the loop. 
However, a.s. termination of the program is proved by a 2-dimensional LinLexRSM 
$\lem$ pictured in the figure.
\end{example}

\section{Algorithmic Aspects}
\label{sec:algo}

In this section we describe a polynomial-time algorithm for synthesizing linear 
$\eps$-LexRSM maps in affine probabilistic programs supported by a given linear 
invariant map $\inv$. The algorithm, based on iterative solving of linear 
constraints, is a generalization of an algorithm for finding 
lexicographic ranking functions in non-probabilistic 
programs~\cite{ADFG10:lexicographic}. Hence, we provide only a high-level 
description, focusing on the new aspects.

The main idea is to iteratively synthesize 1-dimensional linear expression map that $1$-rank a subset of generalized transitions. These maps form the individual components of the sought-after $1$-LinLexRSM map. In each iteration, we start with a set $U$ of the yet-unranked generalized transitions. We seek a 1-dimensional LEM which ranks the maximal number of elements if $U$, and is unaffected by the remaining elements of $U$ (here by ranking a generalized transition $\tilde\tau$ we mean ranking it from each configuration $(\loc,\vec{x})$, where $\loc$ is the source location of $\tilde\tau$ and $\vec{x}\in\inv(\loc)$). If no 1-dimensional LEM that would rank at least one element in $U$ exists, then there is no LinLexRSM map for the program. Otherwise, we remove the newly ranked elements from $U$ and continue into the next iteration, until $U$ becomes empty. The process is summarized in Algorithm~\ref{algo:linlexrsm}.

Hence, the main computational task of the algorithm is to check, for a given set of generalized transition $U$, whether there exists a 1-dimensional LEM $\lem$ such that:
\begin{compactenum}
\item for each location $\loc\in\locs$ and all $\vec{x}\in \inv(\loc)$ it holds $\lem(\loc,\vec{x})\geq 0$;
\item for each $\tilde{\tau}\in U$ and  each configuration $(\loc,\vec{x})$ where $\loc$ is the source of $\tilde{\tau}$ and $\vec{x}\in \inv(\loc)\cap\{\vec{x}'\mid \vec{x}'\models \guards(\tilde{\tau})\}$ we have that $\tilde{\tau}$ is unaffected by $\lem$ from $(\loc,\vec{x})$; and
\item there is $\tilde{\tau}\in U$ that is $1$-ranked by $\lem$, from each configuration $(\loc,\vec{x})$ where $\loc$ is the source of $\tilde{\tau}$ and $\vec{x}\in \inv(\loc)\cap\{\vec{x}'\mid \vec{x}'\models \guards(\tilde{\tau})\}$; we then say that $\lem$ ranks $\tilde{\tau}$ w.r.t. $\inv$.
\end{compactenum}
  Moreover, if such an LEM $\lem$ exists, the algorithm has to find one that 
  maximizes the number of gen. transitions in $U$ ranked by it. Both these 
  tasks can be accomplished by the standard  method of linear constraints based 
  on the use of Farkas's lemma, which was widely use for synthesis of 
  termination proofs in both probabilistic and non-probabilistic 
  programs~\cite{DBLP:conf/tacas/ColonS01,DBLP:conf/vmcai/PodelskiR04,SriramCAV,CFNH16:prob-termination}.
   That is, the algorithm first constructs, for each location $\loc$ a 
  \emph{template} for $\lem$, i.e. an expression of the form $a_1^{\loc}x_1 + 
  \cdots + a_{|\pvars|}^{\loc}x_{|\pvars|} = b^{\loc} $, where 
  $x_1,\dots,x_{|\pvars|}$ are program variables and 
  $a_1^{\loc},\dots,a_{|\pvars|}^{\loc},b^{\loc} $ are yet unknown 
  coefficients. That is, supplying concrete values for all the unknown 
  coefficients yields an LEM. Now the conditions (1) and (2) above can be 
  expressed using linear constraints on the coefficients. More precisely, using 
  the construction provided e.g. in~\cite{SriramCAV,CFNH16:prob-termination} 
  (which includes a use of the Farkas's lemma) we construct in polynomial time, 
  for each generalized transition $\tilde{\tau}$, a system of linear 
  constraints $\linsystem_{\tilde{\tau}}$ over set of variables $\{ 
  a_1^{\loc},\dots,a_{|\pvars|}^{\loc},b^{\loc}\mid \loc\in\locs \}\cup 
  \{\eps_{\tilde\tau}\}\cup F$, where $F$ is the set of fresh variables (not 
  appearing in any template) and $\eps_{\tilde\tau} $ is constrained to be 
  non-negative. Each solution of the system $\linsystem_{\tilde{\tau}}$ yields 
  a LEM which satisfies the constraints (1) and (2) for $\tilde{\tau}$. 
  Moreover, each solution of $\linsystem_{\tilde{\tau}}$ yields a LEM which 
  $\eps_{\tilde{\tau}}$-ranks $\tilde{\tau}$. To find a LEM which satisfies all 
  constraints (1)--(3) as well as maximizes the number of 1-ranked elements of 
  $U$ it is sufficient to construct $\linsystem_{\tilde{\tau}}$ for each 
  $\tilde{\tau}\in U$ and solve the following linear program $\lp_{U}$:
  \begin{align*}
\text{maximize }  &\sum_{\tilde{\tau} \in U} \eps_{\tilde\tau} \text{ subject to constraints}\\[0.3cm]
&\linsystem_{\tilde{\tau}}\,;\quad\quad \quad\quad\tilde{\tau}\in U\\
&0 \leq \eps_{\tilde{\tau}} \leq 1\,; \quad \tilde{\tau}\in U
  \end{align*}
  
Each system $\linsystem_{\tilde{\tau}}$ is constructed in such a way that if it admits a solution with some $\eps_{\tilde{\tau}}$ positive, then decreasing the value of $\eps_{\tilde{\tau}}$ in that solution to any non-negative value still yield a valid solution (this corresponds to the fact that if some transition is $\eps$-ranked by $\lem$, than it is $\eps'$-ranked by $\lem$ for each $0\leq \eps'\leq \eps$). Morever, each solution where $\eps_{\tilde{\tau}}$ is positive can be rescaled into another solution in which $\eps_{\tilde{\tau}}$ is at least $1$. It follows that of $\lp_{U}$ has at least one feasible solution, then it has an optimal solution in which each $\eps_{\tilde{\tau}}$ is either $0$ or $1$. If the system does not have a solution or all the $\eps_{\tilde{\tau}}$ are equal to zero then there is no LEM satisfying (1)--(3). Otherwise, the optimal solution of $\lp_{U}$ yields a LEM $\lem$ which satisfies (1)--(3) and maximizes the number of 1-ranked elements of $U$.

This polynomial-time linear-programming step is used as a sub-procedure in Algorithm~\ref{algo:linlexrsm} for LinLexRSM synthesis. 

\begin{algorithm}
\SetKwInOut{Input}{input}\SetKwInOut{Output}{output}
\DontPrintSemicolon

\Input{An \APP{} $\program$ together with an invariant map $\inv$.}
\Output{A multi-dimensional LinLexRSM if it exists, otherwise ``No LinLexRSM''}
$U \leftarrow \text{ all generalized transitions of } \pCFG_{\program}$\;
$d\leftarrow 0$\;
\While{$U$ is non-empty}{
	$d \leftarrow d+1$\;
	construct and solve $\lp_{U}$\;
	\If{$\lp_U$ does not have feasible solution or optimal value is $0$}{\Return{No LinLexRSM}}
	\Else{$\sol\leftarrow$ optimal solution of $\lp_U$\;
		$\lem_d \leftarrow $ the LEM $\lem$ induced by $\sol$\;
		$U\leftarrow U \setminus \{\tilde{\tau}\mid \eps_{\tilde{\tau}}=1 \text{ in }\sol \}$}
}
\Return{$(\lem_1,\dots,\lem_d)$}
\caption{Synthesis of LinLexRSMs for \APP{}s}
\label{algo:linlexrsm}
\end{algorithm}

Both soundness and relative completeness of Algorithm~\ref{algo:linlexrsm} 
follow by arguments identical to those presented 
in~\cite{ADFG10:lexicographic}. 
\begin{theorem}
\label{thm:algo}
\label{THM:ALGO}
	Suppose that Algorithm~\ref{algo:linlexrsm} is run on an \APP{} $\program$ 
	together with linear invariant map $\inv$. If the algorithm returns a 
	$d$-dimensional LEM $\vec{\lem}=(\lem_1,\dots,\lem_d)$, then $\vec{\lem}$ 
	is a $1$-LinLexRSM map for $\program$ supported by $\inv$. Conversely, if 
	the algorithm returns ``No LinLexRSM'', then for any $d'\in \Nset$ and 
	$\eps>0$ there is no $d'$-dimensional $\eps$-LinLexRSM for $\program$  
	supported by $\inv$. If guards of all conditionals and loops in $\APP{}$ 
	are linear assertion, then the
	algorithm runs in time polynomial in size of 
	$\program$ and $\inv$.
\end{theorem}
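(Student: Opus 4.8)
The plan is to establish the three claims --- soundness, relative completeness, and the complexity bound --- separately, following the proof of the analogous result for non-probabilistic programs in~\cite{ADFG10:lexicographic} and relying on the fact that the linear-constraint encoding $\linsystem_{\tilde\tau}$ and the linear program $\lp_U$, together with their key properties (feasibility via Farkas's lemma, the rescaling property making $\eps_{\tilde\tau}\in\{0,1\}$ achievable at the optimum), are already available from prior work~\cite{SriramCAV,CFNH16:prob-termination}.

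\textbf{Soundness.} Suppose Algorithm~\ref{algo:linlexrsm} returns $\vec\lem=(\lem_1,\dots,\lem_d)$. Each $\lem_i$ instantiates an affine template, so $\vec\lem$ is a linear expression map; I would verify the defining conditions of a $1$-LinLexRSM map supported by $\inv$. Write $U_i$ for the value of $U$ at the start of iteration $i$. Non-negativity of every $\lem_i$ on every invariant is exactly constraint~(1) built into $\lp_{U_i}$. For the lexicographic ranking condition, associate with every generalized transition $\tilde\tau$ the iteration $j(\tilde\tau)$ in which it is removed from $U$, which is well defined since the loop exits only when $U=\emptyset$; then $\eps_{\tilde\tau}=1$ in the optimal solution of $\lp_{U_{j(\tilde\tau)}}$, so by construction of $\linsystem_{\tilde\tau}$ the map $\lem_{j(\tilde\tau)}$ $1$-ranks $\tilde\tau$ (Definitions~\ref{def:rank1}--\ref{def:rank2}) from every configuration $(\loc,\vec{x})$ with $\vec{x}\in\inv(\loc)\cap\sat{\guards(\tilde\tau)}$, while for each $j'<j(\tilde\tau)$ we have $\tilde\tau\in U_{j'}$, so constraint~(2) of $\lp_{U_{j'}}$ leaves $\tilde\tau$ unaffected by $\lem_{j'}$ there. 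Taking $j:=j(\tilde\tau)$ witnesses the required property, so $\vec\lem$ is a $1$-LinLexRSM map supported by $\inv$.

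\textbf{Relative completeness.} I would argue the contrapositive: if for some $\eps>0$ there is a $d'$-dimensional $\eps$-LinLexRSM map $\vec\mu=(\mu_1,\dots,\mu_{d'})$ supported by $\inv$, then the algorithm does not output ``No LinLexRSM''. After rescaling we may take $\eps=1$. The crux is an exchange lemma: whenever $U_i\neq\emptyset$, the program $\lp_{U_i}$ has a feasible point with positive objective. For $\tilde\tau\in U_i$ let $\mathrm{rk}(\tilde\tau)$ be the least index $j$ such that $\mu_j$ $1$-ranks $\tilde\tau$ while $\mu_1,\dots,\mu_{j-1}$ all leave it unaffected (this exists because $\vec\mu$ is an $\eps$-LexRSM map), and set $j^\star:=\min_{\tilde\tau\in U_i}\mathrm{rk}(\tilde\tau)$. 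Then $\mu_{j^\star}$ is non-negative on all invariants, being a component of $\vec\mu$, and $1$-ranks some element of $U_i$; moreover it leaves every $\tilde\tau'\in U_i$ unaffected, since either $\mathrm{rk}(\tilde\tau')=j^\star$ --- so $\mu_{j^\star}$ $1$-ranks, hence does not affect, $\tilde\tau'$ --- or $\mathrm{rk}(\tilde\tau')>j^\star$, so $j^\star\le\mathrm{rk}(\tilde\tau')-1$ and $\mu_{j^\star}$ is among the components leaving $\tilde\tau'$ unaffected. Hence the template instantiation corresponding to $\mu_{j^\star}$ is feasible for $\lp_{U_i}$ with some $\eps_{\tilde\tau}>0$, so its optimum is positive and the algorithm proceeds; since each iteration removes at least one generalized transition from $U$, the loop terminates and the algorithm outputs a LEM.

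\textbf{Complexity and main obstacle.} Each iteration builds the systems $\linsystem_{\tilde\tau}$ via the standard Farkas-lemma encoding and solves the resulting linear program $\lp_U$; linear programs are solvable in polynomial time, and when all loop and conditional guards are linear assertions the sets $\inv(\loc)\cap\sat{\guards(\tilde\tau)}$, as well as the negated guards appearing on sibling transitions of deterministic locations, stay conjunctions (or short disjunctions) of linear inequalities, so each $\linsystem_{\tilde\tau}$ has size polynomial in $|\program|$ and $|\inv|$ and is constructible in polynomial time. The number of iterations is bounded by the number of generalized transitions of $\pCFG_\program$, which is polynomial in $|\program|$, so the overall running time is polynomial. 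I expect the only genuinely delicate point to be the exchange lemma in the completeness part --- specifically the observation that the minimal-rank component $\mu_{j^\star}$ automatically leaves \emph{all} currently-unranked transitions unaffected --- which is what lets the greedy linear program succeed at every iteration irrespective of the order in which transitions happen to get ranked; soundness and the complexity bound are essentially bookkeeping over the linear-programming encoding already used in prior work.
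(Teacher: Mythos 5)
Your soundness and complexity arguments are correct and essentially the same as the paper's: soundness is bookkeeping over the nested sets $U_1\supseteq U_2\supseteq\cdots$ (each transition removed at stage $i$ is $1$-ranked by $\lem_i$, and each transition still present in $U_{j'}$ is unaffected by $\lem_{j'}$ via constraint~(2) of $\lp_{U_{j'}}$), and the complexity bound reduces to polynomial-time constructibility of each $\linsystem_{\tilde\tau}$, where the only point needing care is that negating a linear-assertion guard yields a propositionally linear predicate of polynomial size.

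The gap is in the completeness argument, at the parenthetical ``this exists because $\vec\mu$ is an $\eps$-LexRSM map.'' The definition of an $\eps$-LexRSM map quantifies over configurations first: for \emph{each} $(\loc,\vec{x})$ with $\vec{x}\in\inv(\loc)$ and each enabled generalized transition $\tilde\tau$ there exists an index $j$ --- depending on $(\loc,\vec{x})$ --- such that $\mu_j$ $\eps$-ranks $\tilde\tau$ \emph{from that configuration} and $\mu_1,\dots,\mu_{j-1}$ are unaffecting there. Your $\mathrm{rk}(\tilde\tau)$ requires a single index that works uniformly over all of $\inv(\loc)\cap\sat{\guards(\tilde\tau)}$, and such a uniform index need not exist. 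For instance, with two components whose expected-change functions on a guard polyhedron $[0,1]$ are $f_1(x)=-x$ and $f_2(x)=x-1$, every point is ranked at level $1$ or at level $2$, yet neither level ranks the whole polyhedron; in that example the sum $\mu_1+\mu_2$ happens to rank uniformly, but this trick fails in general because components above the pointwise level may increase without bound. This per-configuration-versus-per-transition mismatch is exactly the ``intricate'' part of the completeness proof, which the paper resolves by transferring the geometric arguments of~\cite{ADFG10:lexicographic}, exploiting that all ranking and unaffectedness conditions are affine in the program variables. Once a uniform rank is available, the remainder of your exchange argument --- that the minimal-rank component $\mu_{j^\star}$ ranks some element of $U_i$ and is unaffecting on all of $U_i$, so $\lp_{U_i}$ has positive optimum for every nonempty $U_i$ and the greedy choice can never dead-end --- is correct and is the right structure. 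So the step you flagged as the delicate one is in fact the easy half, and the half you dismissed in a parenthesis is where the real work lies.
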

\begin{proof}[Proof (Key Ideas)]
For soundness, let $U_i$ and $U_{i}$ denote the content of $U$ just before the $i$-th iteration of the while loop. Then each $\tilde{\tau}\in U_{i}\setminus U_{i+1}$ is $1$-ranked by $\lem_i$, for each $1\leq i \leq d$. Since $U_{d}=\emptyset$, each generalized transition of $\pCFG_{\program}$ is 1-ranked by some component of $\vec{\lem}$. Non-negativity of each $\lem_i$ is ensured directly by $\lp_{U_i}$. Hence, it remains to show that each $\tilde{\tau}\in U_i$ is unaffected by $\lem_1,\dots,\lem_{i-1}$. But this follows from $U_i \subseteq U_{i-1}\subseteq \cdots\subseteq U_1$ and from the fact that each gen. transition in $U_j$ is unaffected by $\lem_j$.

Proving completeness is more intricate; as pointed out 
in~\cite{ADFG10:lexicographic}, one needs to show that the greedy strategy of 
selecting LEM that ranks maximal number of remaining transitions does not cut 
off some possible LinLexRSMs. In~\cite{ADFG10:lexicographic} the completeness 
of the greedy strategy is proved by using several geometric arguments that 
exploit the fact that the underlying programs are affine. The same geometric 
properties hold for our generalization to \APP{}s (all ranking conditions in 
Definitions~\ref{def:rank1} and~\ref{def:rank2} are linear in program 
variables), so the result is easily transferable. The complexity argument is 
rather standard and we present it in~\AppendixMaterial.
\end{proof}

\section{Compositionality of Ranking Supermartingales Revisited}
\label{sec:compositional}

\subsection{One-Dimensional Compositional Proofs of Almost-Sure Termination}

Compositionality in the context of termination proving means providing the 
proof of termination step-by-step, handling one loop at a time, rather than 
attempting to construct the proof (in our case, a LexRSM) at 
once~\cite{KSTW10:compositional-transition-invariants}. 
In the context of 
probabilistic programs, the work~\cite{HolgerPOPL} attempted to provide a 
compositional notion of almost-sure termination proof based on the 
\emph{probabilistic variant rule (V-rule),} which we explain in a more detail 
below. However, for the method to work,~\cite{HolgerPOPL} imposes a 
technical~\emph{uniform integrability} condition, whose checking is hard to 
automatize. In this 
section we show that using our insights into LexRSMs we 
can obtain a different notion of a probabilistic V-rule which is sound 
without any additional assumptions, and which can be used to compositionally 
prove termination 
of programs that the previous method cannot handle.

Let $\program$ be a \PP{} of the form $\textbf{while } \Psi \textbf{ do } 
\programbody \textbf{ od}$, and let $\pCFG_{\program}$ be the associated pCFG, whose set of locations we denote by $\locs$. We denote by $\loops(\program)$ the set of all locations of $\pCFG_{\program}$ that belong to a sub-pCFG of $\pCFG_{\program}$ corresponding to some nested loop of $\program$. We also define $\slice(\program)$ to be the set $\locs\setminus\loops(\program)$ of locations that do not belong to any nested sub-loop.  A formal definition of 
both functions is given in \AppendixMaterial, we illustrate them in the 
following example.

\begin{example}
Consider the program $\program$ in Figure~\ref{fig:uniint4} and its associated 
pCFG. Then $\loops(\program)=\{\loc_2,\loc_3\}$ and 
$\slice(\program)=\{\loc_0,\loc_1,\loc_4,\loc^{\lout}\}$.
\end{example}

Given an invariant map $\inv$, we say that a 1-dimensional measurable map 
$\lem$ for $\pCFG_{\program}$ is $\eps$-$\inv$-ranking/unaffecting in location $\loc$, if 
for each $\vec{x}\in\inv(\loc)$  each generalized transition~$\tilde{\tau}$ 
outgoing from $\loc$ is $\eps$-ranked/unaffected by $\lem$. 

We recall the notion of compositional ranking supermartingale as introduced 
in~\cite{HolgerPOPL}. We call it a PV supermartingale, as it is based on so 
called probabilistic variant rule. Due to differences in syntax and semantics, 
the definition is syntactically slightly different from~\cite{HolgerPOPL}, but 
the essence is the same. A measurable map $\lem$ is propositionally linear, if 
each function $\lem(\loc)$ is of the form $\indicator{G_1}\cdot E_1 + \cdots 
\indicator{G_k}\cdot E_k$, where each $\indicator{G_i}$ is an indicator 
function of some polyhedron and each $E_i$ is a linear expression.

\begin{definition}[PV-supermartingale {\cite[Definition 7.1.]{HolgerPOPL}}]
A 1-dimensional propositionally linear map $\lem$ is a PV supermartingale (PVSM) for a program $\program$ supported by an invariant map $\inv$ if there exists $\eps>0$ such that $\lem$ is $\eps$-$\inv$-ranking an $\inv$-non-negative in each location $\loc\in\slice{(\program)}$ and $\inv$-unaffected in each $\loc\in\loops(\program)$.
\end{definition}

As a matter of fact, the condition that $\lem$ should be non-negative in locations of $\slice{(\program)}$ is not explicitly mentioned in \cite{HolgerPOPL}. However, it is implicitly used in some of the proofs and one can easily construct an example where, if the non-negativity in $\slice{(\program)}$ is not required, the Theorem~\ref{thm:holger-comp} below, which also comes from~\cite{HolgerPOPL}, does not hold. Hence, we state the condition explicitly.

In~\cite{HolgerPOPL} they show that even if all nested loops were already 
proved to terminate a.s. and there is a PVSM for the program, then the program 
itself might not terminate a.s. Then they impose a \emph{uniform integrability} 
constraint on the PVSM under which a 
PVSM together with a proof of a.s. termination of each nested sub-loop of 
$\program$ entails termination of the whole program $\program$. 
Uniform integrability is a deep 
concept from probability and measure theory: a sequence $X_0,X_1,X_2,\dots$ of 
random variables is uniformly integrable if for each $\delta>0$ there exists an 
$K\in \Nset$ such that for all $n\geq 0$ it holds $\E[|X_n|\cdot\indicator{X_n 
	\geq K}]\leq \delta$.
Apart from uniform 
integrability being somewhat restrictive in itself, in~\cite{HolgerPOPL} it is 
argued that proving uniform integrability is beyond the capability of 
state-of-the-art automated theorem provers. As a substitute for 
these,~\cite{HolgerPOPL} introduces a type system that can be used to 
automatically prove 
uniform integrability of ranking supermartingales for a restricted class of 
programs.
We do not repeat the precise definition of the typesystem here, we just say 
that a PVSM satisfying the condition imposed by the type system 
\emph{typechecks} correctly. 
In~\cite{HolgerPOPL} the following was proved:

\begin{theorem}[\cite{HolgerPOPL}]
\label{thm:holger-comp}
Let $\program$ be a \PP{} of the form $\textbf{while } \Psi \textbf{ do } 
\programbody \textbf{ od}$. Assume that each nested loop of $\program$ 
terminates almost surely from each reachable configuration, and that there 
exists a PVSM for $\program$ that typechecks correctly. Then $\program$ 
terminates almost surely.
\end{theorem}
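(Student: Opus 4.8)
The plan is to reduce the statement to the one-dimensional case ($n=1$) of Theorem~\ref{thm:lexrsm-main}, by distilling from the PVSM $\lem$ a genuine ranking supermartingale that is sampled only at the visits to the outer loop head. Fix a measurable scheduler $\sigma$ and an initial valuation $\vecinit\in\vecinitset$, and let $X_i:=\lem(\cfg{\sigma}{i})$ be the induced one-dimensional process on $(\OmegaRun,\natfilt,\probm^{\sigma}_{\vecinit})$. Two facts are used as black boxes. First, since $\lem$ typechecks correctly, the type system of~\cite{HolgerPOPL} guarantees that $\{X_i\}_{i\ge 0}$ is uniformly integrable (in particular $\sup_i\E^{\sigma}_{\vecinit}[|X_i|]<\infty$); the soundness of that type system is orthogonal to the present argument and we do not reprove it. Second, because every nested loop of $\program$ terminates almost surely from every reachable configuration, under $\sigma$ every excursion of a run into the sub-pCFG of a nested loop almost surely lasts finitely many steps, so a run cannot remain in $\loops(\program)$ forever.

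Let $\loc_0\in\slice(\program)$ be the outer loop head, and let $\theta_0<\theta_1<\cdots$ be the successive times at which the run is at $\loc_0$; these are stopping times for $\{\natfilt_i\}_i$, and on the probability-zero set of runs that get stuck inside a nested loop (and after the outer loop is exited) only finitely many $\theta_k$ are defined, in which case we freeze the sequence. Put $Y_k:=X_{\theta_k}$; since $\lem$ is $\inv$-non-negative on $\slice(\program)$ we have $Y_k\ge 0$. Let $\Theta$ be the least $k$ with $\vec{x}_{\theta_k}\not\models\Psi$, and $\Theta=\infty$ if there is none; this is a stopping time for $\{\natfilt_{\theta_k}\}_k$, and since $\program$ is exactly $\textbf{while }\Psi\textbf{ do }\programbody\textbf{ od}$, exiting the outer loop transitions directly to $\loc^{\lout}$, whence $\{\Theta<\infty\}\subseteq\{\ttime<\infty\}$.

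The core step is to show that $\{Y_k\}_{k\ge 0}$ is a strict one-dimensional $\eps$-LexRSM for $\Theta$, i.e. $\E^{\sigma}_{\vecinit}[Y_{k+1}\mid\natfilt_{\theta_k}]\le Y_k-\eps$ on $\{\Theta>k\}$. On that event the run performs its $(k+1)$-st outer iteration; its first step out of $\loc_0$ takes a transition that is $\eps$-ranked by $\lem$, so $\E^{\sigma}_{\vecinit}[X_{\theta_k+1}\mid\natfilt_{\theta_k}]\le X_{\theta_k}-\eps$ by the same computation that underlies the proof of Theorem~\ref{thm:lexrsm-programs}. From time $\theta_k+1$ until the return time $\theta_{k+1}$ the run moves only through body locations — those in $\slice(\program)$, where each step is $\eps$-ranked, and those in $\loops(\program)$, where each step is $\inv$-unaffected — so every one of these steps is conditionally non-increasing for $\lem$, i.e. $\{X_i\}_{\theta_k+1\le i\le\theta_{k+1}}$ is a supermartingale. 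On $\{\Theta>k\}$ the body execution terminates almost surely by the second black box, hence $\theta_{k+1}<\infty$ almost surely there, and uniform integrability lets us invoke the Optional Stopping Theorem for uniformly integrable supermartingales to obtain $\E^{\sigma}_{\vecinit}[X_{\theta_{k+1}}\mid\natfilt_{\theta_k+1}]\le X_{\theta_k+1}$. Chaining this with the one-step bound through the tower property gives $\E^{\sigma}_{\vecinit}[Y_{k+1}\mid\natfilt_{\theta_k}]\le Y_k-\eps$ on $\{\Theta>k\}$, and taking $L^k_1=\{\Theta>k\}$, $L^k_2=\emptyset$ exhibits $\{Y_k\}$ as the desired strict $\eps$-LexRSM. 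Theorem~\ref{thm:lexrsm-main} with $n=1$ then yields $\probm^{\sigma}_{\vecinit}(\Theta<\infty)=1$, hence $\probm^{\sigma}_{\vecinit}(\ttime<\infty)=1$; as $\sigma$ and $\vecinit$ were arbitrary, $\program$ terminates almost surely.

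The main obstacle — and the precise reason the typechecking hypothesis cannot be dropped — is the single application of the Optional Stopping Theorem across a nested-loop excursion: although every individual step inside such an excursion is conditionally non-increasing for $\lem$, the conditional expectation of $\lem$ at the exit of the excursion may strictly exceed its value at the entry unless integrability is controlled, which is exactly the failure phenomenon motivating the theorem and exactly what uniform integrability rules out. I would also remark that the more direct attempt to present $\{X_i\}$ itself as a one-dimensional LexRSM with $L^i_2$ equal to ``currently inside a nested loop'' fails, because $\lem$ is guaranteed non-negative only on $\slice(\program)$, so $X_i$ may dip below $0$ inside a nested loop and violate condition~(2) of Definition~\ref{def:lexrsm}; sampling at the loop head restores non-negativity. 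The remaining points — measurability of the $\theta_k$ and of $\Theta$, the freezing convention, and the tower-property bookkeeping — are routine.
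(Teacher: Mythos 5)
The paper does not actually prove this statement: it is imported verbatim from \cite{HolgerPOPL} (note the citation in the theorem header) and is used purely as a black box, e.g.\ as the fallback branch of Algorithm~\ref{algo:compositional}. So there is no in-paper proof to compare yours against, and I can only judge the reconstruction on its own terms. Its outline is sound and matches the known argument behind the probabilistic V-rule: sample $\lem$ at the successive visits $\theta_k$ to the outer loop head, use $\eps$-ranking on $\slice(\program)$ for the first step of each iteration, use unaffectedness on $\loops(\program)$ together with almost-sure termination of the nested loops and optional stopping for the rest of the iteration, and then feed $Y_k=X_{\theta_k}$ into the one-dimensional case of Theorem~\ref{thm:lexrsm-main}. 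You also correctly isolate why the typechecking hypothesis is indispensable (the optional-stopping step across a nested-loop excursion) and why one cannot instead run Theorem~\ref{thm:lexrsm-main} on $\{X_i\}$ directly (non-negativity of a PVSM is only guaranteed on $\slice(\program)$). Two points need more precision than you give them. First, what the type system of \cite{HolgerPOPL} certifies is uniform integrability of the PVSM \emph{along nested-loop excursions} (in effect, a class-(D) condition on the process stopped at the return time $\theta_{k+1}$), not uniform integrability of the global sequence $\{X_i\}_{i\geq 0}$; moreover, plain uniform integrability of a supermartingale sequence does not by itself license conditional optional sampling at an arbitrary a.s.\ finite stopping time, so you should state the black box in the form that the optional-sampling step $\E[X_{\theta_{k+1}}\mid\natfilt_{\theta_k+1}]\leq X_{\theta_k+1}$ actually requires. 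Second, condition (2) of Definition~\ref{def:lexrsm} demands $Y_k\geq 0$ for \emph{all} $\omega$ and all $k$, so your freezing convention on runs where the $\theta_k$ run out (post-termination, or the measure-zero set stuck in a nested loop) must be chosen to preserve non-negativity; this is easy but should be said. Neither point is a fatal gap given that the theorem's hypotheses are engineered in \cite{HolgerPOPL} to supply exactly these ingredients.
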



The intricacies of uniform integrability are shown in the following example.

\begin{example}
\label{ex:uniform}
\label{EX:UNIFORM}
Consider the two \APP{}s in Figure~\ref{fig:uniint}, that differ only in one 
coefficient in the 
assignment on line 5. For the inner loop there exists (in both cases) a 
1-dimensional linear ranking supermartingale whose value in each location is 
equal to $c+d_{\loc}$, where $d_{\loc}$ is a location-specific constant. Since 
the expected change of $c$ in each loop step is $-0.5$, this is indeed a LRSM. 
Also, in both cases, a LEM of the form $x+d_{\loc}'$, again for some suitable 
location-specific constants $d_{\loc}'$, is a PVSM for the outer loop, as $x$ 
decreases and is non-negative within the outer loop and its expected change is 
non-negative within the inner loop (more precisely, the inner-loop expected 
change of $x$ zero in the left program and $-\frac{x}{4} - \frac{3}{4}$ in the 
right 
program). 
However, the variable $x$ is uniformly integrable within the inner loop of the 
right program while for the left program this does not hold: we show this 
in~\AppendixMaterial. The example shows that proving uniform integrability 
requires intricate reasoning about quantitative behaviour of the program. 
Moreover, as shown below, none of the two programs have a PVSM that typechecks.

\end{example}

\lstset{language=affprob}
\lstset{tabsize=2}
\newsavebox{\uniinto}
\begin{lrbox}{\uniinto}
\begin{lstlisting}[mathescape]
while $x\geq 0$ do
	$c:=1$;
	while $x\geq 1$ and $c\geq 1$ do
		if prob(0.5) then $x:=0$
			else $x:=2(x-1)$ fi;
		if prob(0.5) then $c:=0$ 
			else skip fi
	od;
	$x:=x-1$
od
\end{lstlisting}
\end{lrbox}
\newsavebox{\uniintt}
\begin{lrbox}{\uniintt}
	\begin{lstlisting}[mathescape]
while $x\geq 0$ do
	$c:=1$;
	while $x\geq 1$ and $c\geq 1$ do
		if prob(0.5) then $x:=0$
			else $x:=\frac{3}{2}(x-1)$ fi;
		if prob(0.5) then $c:=0$ 
			else skip fi
	od;
	$x:=x-1$
od
	\end{lstlisting}
\end{lrbox}
\begin{figure}[t]
	\usebox{\uniinto}
	\hspace{1cm}
	\usebox{\uniintt}
\caption{Examples of programs with (right) and without (left) uniformly 
integrable PVSMs.}
\label{fig:uniint}
\end{figure}

\lstset{language=affprob}
\lstset{tabsize=2}
\newsavebox{\unintthree}
\begin{lrbox}{\unintthree}
	\begin{lstlisting}[mathescape]
while $x\geq 0$ do
	while $y\geq 0$ do
		$z:=x$;
		while $z\geq 0$ do
			$z:=z-1$;
			$x:=x-1$
		od;
		$y:=y-1$
	od;
	$x:=x+$sample$(\mathit{Uniform[-3,1]})$
od
	\end{lstlisting}
\end{lrbox}
\newsavebox{\unintfour}
\begin{lrbox}{\unintfour}
	\begin{lstlisting}[mathescape]
while $x\geq 0 $ do
	$y:=x$;
	while $y\geq 1$ do
		$y:=y+$sample$(\mathit{Uniform[-3,1]})$
	od;
	$x:=x-1$
od
	\end{lstlisting}
\end{lrbox}
\begin{figure}[t]
	\begin{subfigure}{0.45\textwidth}
		\usebox{\unintfour}
		\vspace{0.2cm}
		\begin{tikzpicture}[x=1.8cm,y=1.4cm]
		\node[det] (init) at (0,0) {$\loc_0$};
		\node[det] (fin) at (0,-1) {$\loc^{\lout}$};
		\node[det] (as1) at (1,0) {$\loc_1$};
		\node[det] (inner) at (2,0) {$\loc_2$};
		\node[det] (as2) at (2,-1) {$\loc_3$};
		\node[det] (as3) at (3,0) {$\loc_4$};
		\draw[tran] (init) to node[font=\scriptsize,draw, fill=white, 
		rectangle,pos=0.5] {$x<0$} (fin);
		\draw[tran] (init) to node[font=\scriptsize,draw, fill=white, 
		rectangle,pos=0.5] {$x\geq 0$} (as1);
		\draw[tran] (as1) -- node[auto,font=\scriptsize] {$y:=x$} (inner);
		\draw[tran] (inner) to node[font=\scriptsize,draw, fill=white, 
		rectangle,pos=0.5] {$y<1$} (as3);
		\draw[tran] (inner) to node[font=\scriptsize,draw, fill=white, 
		rectangle,pos=0.5] {$y\geq 1$} (as2);
		\draw[tran,rounded corners] (as2) -- (1.25,-1) -- 
		node[font=\scriptsize, 
		label={[xshift=-0.4cm,yshift=-0.2cm,font=\scriptsize] 
		{$y:=\dots$}}] 
		{} 
		(inner);
		\draw[tran, rounded corners] (as3) -- (3,0.5) -- node[font=\scriptsize, 
		label={[font=\scriptsize, yshift=-0.1cm] {$x:=x-1$}}] {} (0,0.5) -- 
		(init);
		\end{tikzpicture}
		\caption{Example for slicing illustration: program and its pCFG.}
		\label{fig:uniint4}
	\end{subfigure}
	\hspace{0.5cm}
	\begin{subfigure}{0.45\textwidth}
	\usebox{\unintthree}
	\caption{Program where the outer loop does not have a PVSM that 
		typechecks.}
	\label{fig:uniint2}
	\end{subfigure}
\caption{Program illustrations.}
\label{fig:comp-ex}
\end{figure}

Indeed, taking a closer look at typesystem in~\cite{HolgerPOPL}, there are several reasons for typechecking of PVSM to fail. The major ones are:
\begin{compactenum}
\item
A PVSM $\lem$ for \PP{} $\program$ will not typecheck if $\program$ has a 
nested loop in which the value of $\lem$ can change unboundedly in a single 
step (see Figure~\ref{fig:uniint}). 
\item
A PVSM $\lem$ for  \PP{} $\program$ will not typecheck if $\program$ has a 
nested loop which itself has a nested loop in which some variable appearing in 
some expression in $\lem$ is modified, see 
Figure~\ref{fig:uniint2}.\footnote{Both these statements regarding typechecking 
failure are somewhat simplified, even in these two cases the PVSM might 
sometimes typecheck correctly, in case where the nested loops are followed by 
assignments which completely overwrite the effect of these loops, e.g. if 
the program in Figure~\ref{fig:uniint2} contained an assignment $x:=0$. 
However, the statements intuitively 
summarize the major reasons for typechecking failure. }
\end{compactenum}

Thus, the typechecking algorithm may rule out programs where the termination-controlling variable represents e.g. a length of an array, which can be doubled/halved in some sub-program do to (de)allocation, merging, or splitting.
To overcome the rather strict typechecking, we use the results on LexRSMs to define a new notion of compositional ranking supermartingales, which we call \emph{nob-negative compositional (NC)} supermartingales. For the sake of generality, we allow NC martingales to be general measurable maps, not necessarily propositionally linear.

\begin{definition}
\label{def:nonneg-comp}
A 1-dimensional measurable map $\lem$ is an NC supermartingale (NCSM) for a program $\program$ supported by an invariant map $\inv$ if there exists $\eps>0$ such that $\lem$ is:
\begin{compactenum}
\item  non-negative in each $(\loc,\vec{x})$ where $\loc$ is a location of $\pCFG_{\program}$ and $\vec{x}\in \inv(\loc)$;
\item 
 $\eps$-$\inv$-ranking in each location $\loc\in\slice{(\program)}$; and
\item 
  $\inv$-unaffecting in each $\loc\in\loops(\program)$.
\end{compactenum}
A (propositionally) linear NCSM (LinNCSM) is a NCSM which is also a (propositionally) linear expression map.
\end{definition}

We can prove that NCSMs are a sound method for proving a.s. termination in a compositional way, without any additional assumptions.

\begin{theorem}
\label{thm:nonneg-comp}
Let $\program$ be a \PP{} of the form $\textbf{while } \Psi \textbf{ do } 
\programbody \textbf{ od}$. Assume that each nested loop of $\program$ 
terminates almost surely from each reachable configuration, and that there 
exists a NCSM for $\program$ supported by some invariant map. Then $\program$ 
terminates almost surely from each initial configuration.
\end{theorem}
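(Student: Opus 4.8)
The plan is to reduce the claim to the mathematical soundness theorem, Theorem~\ref{thm:lexrsm-main}, applied in dimension $n=1$ but in its \emph{non-strict} form, exploiting the clause that it suffices for the level to be $<n+1$ infinitely often. The NCSM $\lem$ will play the role of the single component: it becomes the one-dimensional process, the set $L^i_1$ will correspond to being at a $\slice(\program)$ location (where $\lem$ is required to decrease), and $L^i_2 = L^i_{n+1}$ to being at a $\loops(\program)$ location (where $\lem$ is only required not to increase). Almost-sure termination of the nested loops will be used precisely to guarantee that the ``slice levels'' recur infinitely often along almost every run.

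First I would fix an arbitrary measurable scheduler $\sigma$ and initial valuation $\vecinit\in\vecinitset$ and, on the run space $(\OmegaRun,\natfilt,\probm^{\sigma}_{\vecinit})$, define the one-dimensional process $X_i(\run)=\lem(\cfg{\sigma}{i}(\run))$, exactly as in the proof of Theorem~\ref{thm:lexrsm-programs}. Since every configuration occurring on a run is reachable and $\lem$ is non-negative on the invariant (condition (1) of Definition~\ref{def:nonneg-comp}), the process is real-valued, non-negative and adapted to the canonical filtration. For each $i$ I would then partition $\{\ttime>i\}$ into the $\natfilt_i$-measurable sets $L^i_1=\{\ttime>i\}\cap\{\lr{\sigma}{i}\in\slice(\program)\}$ and $L^i_2=\{\ttime>i\}\cap\{\lr{\sigma}{i}\in\loops(\program)\}$, and, translating the notions of an $\eps$-ranked and an unaffected generalized transition (Definitions~\ref{def:rank1} and~\ref{def:rank2}) into conditional-expectation inequalities on the run space as in Theorem~\ref{thm:lexrsm-programs}, conclude that $\E^{\sigma}_{\vecinit}[X_{i+1}\mid\natfilt_i]\le X_i-\eps$ on $L^i_1$ (using condition~(2) of Definition~\ref{def:nonneg-comp}) and $\E^{\sigma}_{\vecinit}[X_{i+1}\mid\natfilt_i]\le X_i$ on $L^i_2$ (using condition~(3)). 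Hence $\{X_i\}$ together with this instance is a one-dimensional $\eps$-LexRSM for $\ttime$, with $L^i_{n+1}=L^i_2$.

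The crux is to verify the hypothesis of Theorem~\ref{thm:lexrsm-main}: almost every run has level $<2$ in infinitely many steps. A terminating run has level $0$ at every step past $\ttime$, and a non-terminating run that visits $\slice(\program)$ infinitely often has level $1$ infinitely often; so it suffices to show that the event $N$ that the run is non-terminating and stays inside $\loops(\program)$ from some step onward has probability $0$. I would write $N=\bigcup_{i_0\ge 0}N_{i_0}$, where $N_{i_0}$ is the event that $\lr{\sigma}{i_0}\in\slice(\program)$ but $\lr{\sigma}{j}\in\loops(\program)$ for all $j>i_0$. By the structured form of $\program$ (control enters a nested loop only through its header and can leave it only back into $\slice(\program)$), on $N_{i_0}$ control enters at step $i_0+1$ the header of exactly one top-level nested loop $L$ of $\program$ and never leaves $L$'s sub-pCFG, i.e.\ $L$ does not terminate along that run. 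Conditioning on the prefix of length $i_0+2$, the continuation is distributed as a run of $L$ started from the configuration $\cfg{\sigma}{i_0+1}$ under the residual scheduler, which is again measurable; by the hypothesis that every nested loop terminates almost surely from every reachable configuration, this continuation reaches $L$'s exit with conditional probability $1$, so $\probm^{\sigma}_{\vecinit}(N_{i_0})=0$. As there are countably many $i_0$ and finitely many nested loops, $\probm^{\sigma}_{\vecinit}(N)=0$, hence the level is $<2$ infinitely often almost surely.

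Applying Theorem~\ref{thm:lexrsm-main} (with $n=1$, $\stime=\ttime$, and the given $\eps>0$) now yields $\probm^{\sigma}_{\vecinit}(\ttime<\infty)=1$; since $\sigma$ and $\vecinit$ were arbitrary, $\program$ terminates almost surely from each initial configuration. I expect the main obstacle to be the measure-theoretic bookkeeping in the previous paragraph: making precise the ``residual scheduler'' of a nested loop inside a run of the whole program, checking that the events $N_{i_0}$ and ``control enters nested loop $L$ at step $i_0+1$'' are measurable, and carrying out the conditioning (a strong-Markov-type argument for pCFGs with measurable schedulers) that lets us invoke almost-sure termination of a sub-loop locally along a run. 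These steps are routine for this model but need care; the rest is a direct adaptation of the proof of Theorem~\ref{thm:lexrsm-programs}.
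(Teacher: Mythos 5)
Your proposal is correct and follows essentially the same route as the paper's (sketched) proof: view the NCSM as a one-dimensional \emph{non-strict} $\eps$-LexRSM with $L^i_1$ the slice locations and $L^i_{n+1}$ the nested-loop locations, use almost-sure termination of the sub-loops to establish that the level is $<2$ infinitely often almost surely, and invoke Theorem~\ref{thm:lexrsm-main}. Your decomposition of the bad event into the countably many $N_{i_0}$ and the conditioning argument simply fills in details that the paper's ``Key Idea'' proof leaves implicit.
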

\begin{proof}[Proof (Key Idea)]
Let $\{X_i\}_{i=0}^{\infty}$ be a stochastic process returning the value of 
NCSM $\lem$ in step $i$. Then $\{X_i\}_{i=0}^{\infty}$ is a (non-strict) $1$-dimensional $\eps$-LexRSM for the termination time of the program, for some $\eps>0$. Since all sub-loops of $\program$ terminate, with probability one each run has level $<2$ in infinitely many steps. From theorem $\ref{thm:lexrsm-main}$ it follows that $\probm^\sigma_{\vecinit}(\ttime<\infty)=1$, for all $\sigma$ and $\vecinit$.
\end{proof}

Hence, NCSMs effectively trade the uniform integrability condition for 
non-negativity over the whole program. We believe that the latter condition is 
substantially easier to impose and check automatically, especially in the case 
of affine probabilistic programs and (propositionally) linear NCSMs: for 
\APP{}s, synthesizing NCSM entails synthesizing sufficient program invariants 
(for which there is a good automated tool support~\cite{FG10:aspic})  encoding 
the ranking, 
unaffection, and non-negativity conditions into a collection of linear 
constraints (as for general LinLexRSMs in Section~\ref{sec:algo}). 
Figures~\ref{fig:uniint} and~\ref{fig:uniint2} show instances where attempts to 
to prove of a.s. termination via PVSMs fail while proofs via LinNCSMs work. 

\begin{example}
For all programs in Figures~\ref{fig:uniint} and~\ref{fig:uniint2} there are 
LinNCSMs for all the loops in the program, which shows that the programs 
terminate 
a.s. In Figure~\ref{fig:uniint}, for both programs the inner loops have 
LinNCSMs of the form $c+d_{\loc}$, for $d_{\loc}$ a location-specific constant, 
while the outer loops have LinNCSMs of the form $x+d_{\loc}'$. In 
Figure~\ref{fig:uniint2} the program similarly has LinNCSMs defined, proceeding 
from the innermost loop and neglecting the location-specific constants, by 
variables $z$, $y$, $x$.
\end{example}

Using LinNCSMs we can devise the following simple algorithm for compositional proving of almost-sure termination of \APP{}s:

\begin{algorithm}
\SetKwInOut{Input}{input}\SetKwInOut{Output}{output}
\DontPrintSemicolon

\Input{An \APP{} $\program$ together with an invariant map $\inv$.}
$d\leftarrow \text{ depth of loop nesting in $\program$}$\;
\For{$i \leftarrow d$ \KwTo $0$}{
\ForEach{sub-loop $\program'$ of $\program$ nested $i$ levels below the main loop}{
\label{algoline:one}
$\linsystem \leftarrow$ system of lin. constraints encoding the existence of LinNCSM for $\program'$ supported by $\inv$\;
\label{algoline:two}\If{$\linsystem$ not solvable}{check existence of a PVSM for $\program'$~\cite{HolgerPOPL}
\;
\lIf{PVSM does not exist}{\Return{``cannot prove a.s. termination of $\program$''}}
}
}
}
\Return{``$\program$ terminates a.s.'' }
\caption{Compositional Termination Proving}
\label{algo:compositional}
\end{algorithm}

The soundness of the algorithm follows from Theorems~\ref{thm:nonneg-comp} and 
\ref{thm:holger-comp}. Note that we use the PVSM-based algorithm 
of~\cite{HolgerPOPL} as a back-up sub-procedure for the case when LinNCSM-based 
proof fails. Hence, Algorithm~\ref{algo:compositional} can compositionally 
prove a.s. termination of strictly larger class of programs than the PVSM-based 
algorithm alone.

To summarize, the novelty of NCSMs is the following:
\begin{compactenum}
\item
NCSMs allow compositional, fully automated proofs of a.s. termination without the need for reasoning about uniform integrability.
\item LinNCSMs are capable of proving a.s. termination of programs for which no uniformly integrable PVSMs exist (and hence the method of~\cite{HolgerPOPL} cannot be used at all on such programs).
\item LinNCSMs are capable of proving a.s. termination of programs for which the method of~\cite{HolgerPOPL} cannot be applied in an automated way, do to failure of the typechecking procedure.
\end{compactenum}

\subsection{Multidimensional Compositional Ranking}

 Above, we defined NCSMs as one-dimensional objects, to make them analogous to PVSMs for better comparison. However, we can also define a multi-dimensional version of NCSMs, to take advantage of the fact that LexRSMs can handle loops for which no 1-dimensional linear RSM exists. We say that, given an invariant map $\inv$, an $n$-dimensional measurable map is $\eps$-$\inv$-ranking in a location $\loc$ if for each $\vec{x}\in \inv(\loc)$ and each gen. transition $\tilde{\tau}$ outgoing from $\loc$ there exists $1\leq j \leq n$ such that $\tilde\tau$ is $\eps$-ranked by $\lem_j$ and for each $j'<j$ we have that  $\tilde{\tau}$ is unaffected by $\lem_{j'}$.

\begin{definition}
An $n$-dimensional measurable map $\vec{\lem}=(\lem_1,\dots,\lem_n)$ is an NC supermartingale (NCSM) for a program $\program$ supported by an invariant map $\inv$ if there exists $\eps>0$ such that:
\begin{compactenum}
	\item  for each $1\leq i \leq n$, $\lem_i$ is non-negative in each location of $\pCFG_{\program}$;
	\item 
	$\vec{\lem}$ is $\eps$-$\inv$-ranking in each location $\loc\in\slice{(\program)}$; and
	\item 
	for each $1\leq i \leq n$, $\lem_i$ is
	unaffected in each $\loc\in\loops(\program)$.
\end{compactenum}
An $n$-dimensional linear NCSM (LinNCSM) is an $n$-dimensional NCSM which is also a linear expression map.
\end{definition}

The following theorem can be proved in essentially the same way as Theorem~\ref{thm:nonneg-comp}.

\begin{theorem}
Let $\program$ be a \PP{} of the form $\textbf{while } \Psi \textbf{ do } 
\programbody \textbf{ od}$. Assume that each nested loop of $\program$ terminates almost surely from each reachable configuration, and that there exists a NCSM for $\program$ supported by some invariant map. Then $\program$ terminates almost surely.
\end{theorem}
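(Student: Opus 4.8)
The plan is to lift the proof of Theorem~\ref{thm:nonneg-comp} from one to $n$ dimensions. Fix an arbitrary measurable scheduler $\sigma$ and an initial valuation $\vecinit\in\vecinitset$, and on the probability space $(\OmegaRun,\natfilt,\probm^\sigma_{\vecinit})$ define the $n$-dimensional process $\{\vec{X}_i\}_{i=0}^\infty$ by $\vecseq{X}{i}{j}(\run)=\lem_j(\cfg{\sigma}{i}(\run))$ for all $i\geq 0$ and $1\le j\le n$. By clause~(1) of the definition of a (multidimensional) NCSM this process is real-valued and componentwise non-negative, and it is adapted to the canonical filtration of $\natfilt$. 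The goal is to exhibit an instance turning it into a (non-strict) $n$-dimensional $\eps$-LexRSM for the termination time $\ttime$ of $\program$, and then to apply Theorem~\ref{thm:lexrsm-main}.

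First I would construct the partitions $\{L^i_1,\dots,L^i_{n+1}\}$. For each $i$ and each run $\run$ with $\ttime(\run)>i$, inspect the location $\loc$ of $\cfg{\sigma}{i}(\run)$. If $\loc\in\slice(\program)$, then the $\eps$-$\inv$-ranking condition (clause~(2)) yields a least index $j$ such that the $(i+1)$-st generalized transition on $\run$ is $\eps$-ranked by $\lem_j$ and unaffected by $\lem_{j'}$ for every $j'<j$; put $\run\in L^i_j$. If $\loc\in\loops(\program)$, put $\run\in L^i_{n+1}$; here clause~(3) (each $\lem_i$ is unaffected in every nested-loop location) guarantees that no component of $\vec{X}$ increases in conditional expectation on this set, which is precisely the requirement imposed on $L^i_{n+1}$ in Definition~\ref{def:lexrsm}. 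Measurability of the sets $L^i_j$ follows exactly as in the proof of Theorem~\ref{thm:lexrsm-programs}, and the required decrease/non-increase inequalities on $L^i_1,\dots,L^i_n$ follow from the definitions of a generalized transition being $\eps$-ranked and unaffected by a measurable map, in the same way as in that proof. Hence $\{\vec{X}_i\}_{i=0}^\infty$ together with $\{L^i_1,\dots,L^i_{n+1}\}_{i=0}^\infty$ forms an $\eps$-LexRSM instance for $\ttime$.

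The remaining, and principal, step is to verify the hypothesis of Theorem~\ref{thm:lexrsm-main}: that $\probm^\sigma_{\vecinit}$-almost every run has level $<n+1$ in infinitely many steps. A run has level $<n+1$ at step $i$ whenever $\ttime(\run)\le i$, or the location at step $i$ lies in $\slice(\program)$. Consider the (at most countably many) random times at which a run passes from $\slice(\program)$ into $\loops(\program)$; these are stopping times. Since the terminal location $\locterm$ lies in $\slice(\program)$, a run that has entered a nested loop can reach $\locterm$ only after returning to $\slice(\program)$; and by the assumption that every nested loop of $\program$ terminates almost surely from each reachable configuration (uniformly over schedulers), conditioned on any such entry time the run almost surely returns to $\slice(\program)$ after finitely many steps. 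A union bound over the countably many entry times (and the finitely many nested loops) then shows that, almost surely, every excursion of the run into $\loops(\program)$ is finite. Consequently, almost surely a non-terminating run visits $\slice(\program)$ infinitely often and thus has level $\le n$ infinitely often, while a terminating run has level $0<n+1$ from time $\ttime$ onwards; in either case the hypothesis of Theorem~\ref{thm:lexrsm-main} is met. That theorem then yields $\probm^\sigma_{\vecinit}(\ttime<\infty)=1$, and since $\sigma$ and $\vecinit$ were arbitrary, $\program$ terminates almost surely.

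The step I expect to be the main obstacle is the union-bound argument that every excursion into $\loops(\program)$ is almost surely finite: it must be phrased carefully with respect to the filtration, using that the entry times are stopping times and that the a.s.\ termination of each nested sub-pCFG holds from \emph{every} reachable configuration and under \emph{every} scheduler, so that the strong Markov-type reasoning ``conditioned on entering a nested loop, the run a.s.\ exits'' is legitimate. Everything else is a routine adaptation of the proofs of Theorems~\ref{thm:lexrsm-programs} and~\ref{thm:nonneg-comp}.
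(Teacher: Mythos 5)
Your proposal is correct and follows essentially the same route as the paper, which proves this multidimensional statement by declaring it goes ``in essentially the same way'' as Theorem~\ref{thm:nonneg-comp}: define the process from the NCSM, observe that it is a non-strict $\eps$-LexRSM with $L^i_{n+1}$ covering the steps spent in $\loops(\program)$, use the a.s.\ termination of all nested loops to conclude that almost every run has level $<n+1$ infinitely often, and invoke Theorem~\ref{thm:lexrsm-main}. Your write-up simply fills in the measurability and excursion/union-bound details that the paper's one-paragraph ``Key Idea'' sketch leaves implicit.
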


For \APP{}s, we can generalize Algorithm~\ref{algo:compositional} by changing line~\ref{algoline:one} to ``check existence of a multi-dimensional LinNCSM for $\program'$'' and line~\ref{algoline:two} to ``\textbf{if} a multi-dimensional LinNCSM does not exist.'' The check of existence of a multi-dimensional LinNCSM for $\program'$ can be done by algorithm presented in Section~\ref{sec:algo}, modified so as to only pursue ranking for generalized transitions outgoing from locations belonging to $\slice(\program')$. (I.e., only these gen. transitions have $\eps_{\tau}$ included in the objective function, and algorithm terminates once all such transitions are ranked.) 




\section{Bounds on Expected Termination Time}

As shown in Example~\ref{ex:infinite-time}, already LinLexRSM maps are capable 
of proving almost-sure termination of programs whose expected termination time 
is infinite. However, it is often desirable to obtain bounds on expected 
runtime of a program. In this section, we present a LexRSM-based proof rule for 
obtaining bounds on expected runtime, and we show how to automatize the usage 
of this proof rule to obtain bounds on expected runtime in a 
subclass of \PP{}s.

As in the case of a.s. termination we start with general mathematical statement 
about LexRSMs. We define a restricted class of strict LexRSMs with \emph{bounded 
expected conditional increase} property. Recall from 
definition~\ref{def:lexrsm} that strict
LexRSM for a stopping time $\stime$ is characterized by the possibility to 
a.s. partition, for each $i\in \Nset_0$, the set $\{\omega\in \Omega\mid 
\stime(\omega)>i \}$ into $n$ sets $L^i_1,\dots,L^i_d$ such that, intuitively, 
on $L^i_j$ the conditional expectation of $\vec{X}_{i+1}[j]$ given $\genfilt_i$ 
is smaller than $\vec{X}_i[j]$, and for all $j'<j$, on $L^i_{j'}$ the 
conditional expectation of $\vec{X}_{i+1}[j]$ given $\genfilt_i$ 
is no larger than $\vec{X}_i[j]$. This leaves the opportunity of conditional 
expectation of $\vec{X}_{i+1}[j]$ being larger than $\vec{X}_i[j]$ on 
$L^i_{j''}$ with $j''>j$. The conditional expected increase property bounds the 
possibility of this increase.

\begin{definition}
	\label{def:lexrsm-eci}
Let $\{\vec{X}_{i}\}_{i=0}^{\infty}$ be an 
$n$-dimensional strict LexRSM for some stopping time $\stime$, defined w.r.t. some 
filtration $\{\genfilt_i\}_{i=0}^{\infty}$. We say that 
$\{\vec{X}_{i}\}_{i=0}^{\infty}$ has \emph{$\vec{c}$-bounded expected 
conditional 
increase (ECI),} for 
some non-negative vector $\vec{c}\in\Rset^d$, if there exists an instance $(\{\vec{X}_{i=0}^{\infty},\{L_1^i,\dots,L_{n+1}^i\}_{i=0}^{\infty})$ of the strict LexRSM (i.e. $L^i_{n+1}=\emptyset$ for all $i$) such that for each $i\in \Nset_0 $ and 
each $1\leq 
j \leq n $ it holds 
$\E[\vecseq{X}{i+1}{j}\mid \genfilt_i]\leq \vecseq{X}{i}{j}+\vec{c}[j]$ on 
$L^i_{j''}$, 
for all $j''>j$ 
(here $L^i_1,\dots,L^i_n$ are as in Definition~\ref{def:lexrsm}).
\end{definition}

For strict LexRSMs with $\vec{c}$-bounded ECI we have the following result. For 
simplicity, 
we formulate the result for 1-LexRSMs, though it is easy to prove analogous 
result for general $\eps$-LexRSMs, $\eps>0$, at the cost of obtaining less 
readable formula.

\begin{theorem}
\label{thm:runtime-bound}
Let $\{\vec{X}_{i}\}_{i=0}^{\infty}$ be an 
$n$-dimensional strict LexRSM with $c$-bounded ECI for some stopping time $\stime$. 
Then $\E[\stime]\leq  
\sum_{j=1}^{n}\E[\vecseq{X}{0}{j}]\cdot(\vec{c}[j]+1)^{n-j}$.
\end{theorem}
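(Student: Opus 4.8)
The plan is to collapse the $n$ coordinates into a single one-dimensional non-negative ranking supermartingale and then run the standard stopped-process plus monotone-convergence argument. If $\E[\vec{X}_0[j]]=\infty$ for some $j$ the claimed inequality is vacuous (its right-hand side is $+\infty$), so assume $\E[\vec{X}_0[j]]<\infty$ for every $j$. Set $\gamma_j:=(\vec{c}[j]+1)^{n-j}$ for $1\le j\le n$ (equivalently, the $\gamma_j$ arise from the downward recursion $\gamma_n=1$, $\gamma_j=1+\sum_{m>j}\vec{c}[m]\gamma_m$), and define $W_i:=\sum_{j=1}^n\gamma_j\,\vec{X}_i[j]$. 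Since each $\{\vec{X}_i[j]\}_i$ is non-negative and adapted to $\{\genfilt_i\}$ and each $\gamma_j>0$, the process $\{W_i\}$ is non-negative and adapted, with $\E[W_0]=\sum_{j=1}^n\gamma_j\E[\vec{X}_0[j]]<\infty$.

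The crux is to show that $\E[W_{i+1}\mid\genfilt_i]\le W_i-1$ on $\{\stime>i\}$. Fix $i$ and an instance $(\{\vec{X}_i\}_i,\{L^i_1,\dots,L^i_{n+1}\}_i)$ witnessing the strict LexRSM with $\vec{c}$-bounded ECI, so that $\eps=1$ and $L^i_{n+1}=\emptyset$. On the $\genfilt_i$-measurable set $L^i_j$ we combine three facts: $\E[\vec{X}_{i+1}[m]\mid\genfilt_i]=\vec{X}_i[m]$ for $m<j$ (no increase in earlier coordinates), $\E[\vec{X}_{i+1}[j]\mid\genfilt_i]\le\vec{X}_i[j]-1$ (coordinate $j$ ranks), and $\E[\vec{X}_{i+1}[m]\mid\genfilt_i]\le\vec{X}_i[m]+\vec{c}[m]$ for $m>j$ (this is the only use of the ECI hypothesis, and it is precisely what makes the growth of the later coordinates quantitatively controllable). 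Multiplying by $\gamma_m$, summing, and using $\gamma_j\ge 1+\sum_{m>j}\gamma_m\vec{c}[m]$ gives $\E[W_{i+1}\mid\genfilt_i]-W_i\le-\gamma_j+\sum_{m>j}\gamma_m\vec{c}[m]\le-1$ on $L^i_j$; since the sets $L^i_j$, $1\le j\le n$, partition $\{\stime>i\}$, the inequality holds on all of $\{\stime>i\}$. A measure-theoretic care point: the $\vec{X}_i[j]$ are only assumed non-negative, not integrable, so one invokes Proposition~\ref{prop:conditional-exp-existence} for existence of the conditional expectations, and the three displayed clauses pin them to finite values on $\{\stime>i\}$, so $\E[W_{i+1}\mid\genfilt_i]$ is a well-defined finite random variable there.

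Finally, put $\widehat{W}_i:=W_{\min(i,\stime)}$ (well-defined since $\stime$ is a stopping time). From the previous step $\{\widehat{W}_i\}_i$ is a non-negative supermartingale with $\E[\widehat{W}_i]\le\E[W_0]$, and moreover $\{\widehat{W}_i+\min(i,\stime)\}_i$ is a non-negative supermartingale: on $\{\stime>i\}$ one has $\E[\widehat{W}_{i+1}+(i+1)\mid\genfilt_i]\le\widehat{W}_i-1+(i+1)=\widehat{W}_i+i$, while on $\{\stime\le i\}\in\genfilt_i$ both terms are constant. Hence $\E[\min(i,\stime)]\le\E[\widehat{W}_i+\min(i,\stime)]\le\E[W_0]$, and letting $i\to\infty$ the monotone convergence theorem gives $\E[\stime]\le\E[W_0]=\sum_{j=1}^n\E[\vec{X}_0[j]]\,(\vec{c}[j]+1)^{n-j}$, as required.

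The main obstacle is the weight choice in the first paragraph together with the verification in the second: one must pick the coefficients $\gamma_j$ large enough (growing geometrically from the top coordinate down) that the conditional increases of coordinates $m>j$ permitted by ECI are strictly dominated by the enforced unit decrease of coordinate $j$; without the ECI bound those increases could be unbounded and no fixed linear combination of the coordinates would be a supermartingale. The remaining work — the stopped process and handling of conditional expectations of non-negative, possibly non-integrable, random variables — is routine. An alternative route leading to the same recursion is to bound, level by level, the expected number $\E[T_j]$ of steps spent at level $j$ by $\E[\vec{X}_0[j]]+\vec{c}[j]\cdot\E[\sum_{j'<j}T_{j'}]$ and solve for $\sum_j\E[T_j]$; I would nonetheless write up the combined-supermartingale version as it is slightly cleaner.
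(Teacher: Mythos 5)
Your overall strategy is genuinely different from the paper's, and arguably cleaner. The paper counts, for each level $j$, the expected number $\E[\noofdecrank_j]$ of steps spent at level $j$, establishes the recursion $\E[\noofdecrank_j]\le\vec{c}[j]\sum_{j'<j}\E[\noofdecrank_{j'}]+\E[\vecseq{X}{0}{j}]$ by a separate induction on truncated counts, and then sums; you instead collapse the coordinates into the single weighted process $W_i=\sum_j\gamma_j\vec{X}_i[j]$, verify a one-step drift of $-1$ on $\{\stime>i\}$, and finish with the standard stopped-supermartingale argument. The two are equivalent in content (as you note yourself), but your version concentrates all the work into one conditional inequality and replaces the paper's double induction with off-the-shelf supermartingale reasoning. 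Your measure-theoretic care (reducing to $\E[\vecseq{X}{0}{j}]<\infty$, invoking Proposition~\ref{prop:conditional-exp-existence} for non-negative, possibly non-integrable variables) is correct, and you also read the ECI condition the way the paper's own proof uses it --- the increase of coordinate $m$ is bounded by $\vec{c}[m]$ on levels $j<m$ --- rather than the way Definition~\ref{def:lexrsm-eci} literally types it (on $L^i_{j''}$ with $j''>j$ the condition would be vacuous).

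There is, however, a concrete algebraic slip in your weights. The recursion $\gamma_n=1$, $\gamma_j=1+\sum_{m>j}\vec{c}[m]\gamma_m$ does \emph{not} have solution $\gamma_j=(\vec{c}[j]+1)^{n-j}$; it telescopes to $\gamma_j=\prod_{m>j}(1+\vec{c}[m])$, and the two coincide only when the entries of $\vec{c}$ are all equal. For non-constant $\vec{c}$ the inequality $\gamma_j\ge 1+\sum_{m>j}\gamma_m\vec{c}[m]$, on which your second paragraph rests, can fail for the power form: with $n=2$ and $\vec{c}=(0,5)$ you get $\gamma_1=1$ while $1+\vec{c}[2]\gamma_2=6$. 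With the product weights your argument goes through verbatim and yields $\E[\stime]\le\sum_{j}\E[\vecseq{X}{0}{j}]\prod_{m>j}(1+\vec{c}[m])$, which is exactly what the paper's recursion implies as well; the formula $(\vec{c}[j]+1)^{n-j}$ in the theorem statement agrees with this only for constant $\vec{c}$ (and is in fact violated by a deterministic two-level example with $\vec{c}=(0,5)$ in which level $1$ pumps coordinate $2$ up by $5$ per step: the true expected time is $6K$ where the stated formula gives $K$). So the error you make is precisely the one hidden in the paper's closing ``follows by an easy induction''; replace the weights by $\prod_{m>j}(1+\vec{c}[m])$ and your proof is complete.
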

\begin{proof}
Fix an instance $(\{\vec{X}_{i=0}^{\infty},\{L_1^i,\dots,L_{n+1}^i\}_{i=0}^{\infty})$ satisfying Definition~\ref{def:lexrsm-eci}. Denote $\noofdecrank_j(\omega)$ the number of steps $i$ in which $\omega\in 
L_j^i$. Since by Theorem~\ref{thm:lexrsm-main} the existence of strict LexRSM entails 
$\probm(\stime<\infty)=1$, the value $\noofdecrank_j(\omega)$ is a.s. finite for all $1\leq j \leq n$. 
We prove that for each $1\leq j \leq n$ it holds $\E[\noofdecrank_j]\leq 
\vec{c}[j]\cdot\left(\sum_{j'<j}\E[\noofdecrank_{j'}]\right) + 
\E[\vecseq{X}{0}{j}].$ Since $\stime(\omega)=\sum_{1\leq j \leq n} 
\noofdecrank_j(\omega)$, for each $\omega\in \Omega$ (and hence, due to 
linearity of expectation $\E[\stime]=\sum_{1\leq j \leq n} 
E[\noofdecrank_j]$), the statement of the 
Theorem follows by an easy induction. 

To prove the required inequality, let $\nodecrank{k}{j}(\omega)$ be the number 
of steps $i$ \emph{within the first $k$} steps such that $\omega\in L_j^i$. We 
prove, by induction on $k$, that for each $k$ it holds 
$\E[\nodecrank{k}{j}]\leq 
\vec{c}[j]\cdot\left(\sum_{j'<j}\E[\nodecrank{k}{j'}] \right)+ 
\E[\vecseq{X}{0}{j}] - \E[\vecseq{X}{k}{j}]$. Once this is proved, the 
desired inequality follows by taking $k$ to $\infty$, since 
$\lim_{k\rightarrow \infty}\E[\nodecrank{k}{j}] = \E[\noofdecrank_j]$ and 
$\lim_{k\rightarrow \infty}\E[\vecseq{X}{k}{j}] \geq 0$.

The base case $k=0$ is simple as both sides of the inequality are zero. Assume 
that the inequality holds for some $k\geq 0$. We have 
$\E[\nodecrank{k+1}{j}]=\E[\nodecrank{k}{j}]+\probm{(L_j^{k})}$, so from 
induction hypothesis we get 
\begin{equation}
\label{eq:time1}
\E[\nodecrank{k+1}{j}]\leq 
\vec{c}[j]\cdot\left(\sum_{j'<j}\E[\nodecrank{k}{j'}] \right)+ 
\E[\vecseq{X}{0}{j}] - \E[\vecseq{X}{k}{j}] + \probm{(L_j^{k})}.\end{equation} 
Now denote 
$L^k_{<j} = L^k_1 \cup \dots\cup L^k_{j-1}$ and $L^k_{>j}= 
L^k_{j+1}\cup\dots\cup L^k_{d}$. We have  
$\E[\vecseq{X}{k}{j}] = \E[\vecseq{X}{k}{j}\cdot 
\indicator{ L_{<j}^k}] + \E[\vecseq{X}{k}{j}\cdot 
\indicator{ L_j^k}] + \E[\vecseq{X}{k}{j}\cdot 
	\indicator{L_{>j}^k}] \geq 
\E[\vecseq{X}{k+1}{j}\cdot 
	\indicator{ L_{<j}^k}] -\vec{c}[j]\cdot \probm(L_{<j}^k) + 
	\E[\vecseq{X}{k+1}{j}\cdot 
	\indicator{ L_{j}^k}] + \probm(L_{j}^k) + \E[\vecseq{X}{k+1}{j}\cdot 
	\indicator{ L_{>j}^k}]= \E[\vecseq{X}{k+1}{j}] -\vec{c}[j]\cdot 
	\probm(L_{<j}^k)+ \probm(L_{j}^k)$. Plugging this 
	into~\ref{eq:time1} yields
\begin{align*}
\E[\nodecrank{k+1}{j}]&\leq 
\vec{c}[j]\cdot\left(\sum_{j'<j}\E[\nodecrank{k}{j'}] \right)+ 
\E[\vecseq{X}{0}{j}] - \E[\vecseq{X}{k+1}{j}] + \vec{c}[j]\cdot 
\probm(L_{<j}^k)\\
&=\vec{c}[j]\cdot\left(\sum_{j'<j}\E[\nodecrank{k+1}{j'}] \right)+ 
\E[\vecseq{X}{0}{j}] - \E[\vecseq{X}{k+1}{j}].
\end{align*}
%
\end{proof}

To transfer this mathematical result to probabilistic programs, we want to 
impose a restriction on LexRSM maps that ensures that all components of a 
LexRSM map have, from each reachable configuration, an expected one-step 
increase of at most $c$. Here $c$ can be a constant, but it can also be a value 
that depends on the initial configurations: this is to handle cases where some 
variables are periodically reset to a value related to the initial variable 
values, such as variable $z$ in Figure~\ref{fig:uniint2}. To this end, let 
$\program$ be a \PP{} with a pCFG $\pCFG_\program$ and let 
$\vec{\lem}=(\lem_1,\dots,\lem_n)$ be an 
$n$-dimensional $1$-LexRSM map for $\program$. Consider an $n$-dimensional 
vector 
$\vec{\bar{c}}=(\bar{c}_1,\dots,\bar{c}_n)$ whose each component is an 
expression over variables of 
$\program$. We say 
that $\vec{\lem}$ has 
$\vec{\bar{c}}$-bounded ECI w.r.t. invariant map $\inv$ if the following holds 
for each initial configuration $(\locinit,\vecinit)$ with $\vecinit\in 
\vecinitset$: for 
each 
configuration $(\loc,\vec{x})$ with $\vec{x}\in \inv(\loc)$ and generalized 
transition $\tilde{\tau}$ 
of 
$\pCFG_\program$ outgoing from $(\loc,\vec{x})$ it holds that if $j$ is 
the smallest index such that 
$\tilde{\tau}$ is $1$-ranked by $\lem_j$ from $(\loc,\vec{x})$, then for all 
$j'>j$ the gen. transition $\tilde{\tau}$ is $f$-ranked by $\lem_{j'}$ from 
$(\loc,\vec{x})$, where $f=-c_{j'}(\vecinit)$. From 
Theorem~\ref{thm:runtime-bound} we have the following:

\begin{corollary}
\label{col:runtime-progs}
Let $\program$ be a probabilistic program. Assume that there exists an 
$n$-dimensional $\eps$-LexRSM map $\vec{\lem}=(\lem_1,\dots,\lem_d)$ for 
$\program$ supported 
by some 
invariant map $\inv$, scuh that $\vec{\lem}$ has $\vec{\bar{c}}$-bounded ECI 
(w.r.t. $\inv$) 
for some vector of expressions $\vec{\bar{c}}=(\bar{c}_1,\dots,\bar{c}_n)$. 
Then under each scheduler $\sigma$ and for each initial valuation of program 
variables $\vecinit\in\vecinitset$ it holds $\E^{\sigma}_{\vecinit}[\ttime]\leq 
\sum_{j=1}^{n}\lem_j(\locinit,\vecinit)\cdot(\bar{c}_j(\vecinit))^{n-j}$.
\end{corollary}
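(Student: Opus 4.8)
The plan is to reduce the statement to Theorem~\ref{thm:runtime-bound} via exactly the construction used in the proof of Theorem~\ref{thm:lexrsm-programs}. Fix a measurable scheduler $\sigma$ and an initial valuation $\vecinit\in\vecinitset$, and on the probability space $(\OmegaRun,\natfilt,\probm^{\sigma}_{\vecinit})$ with its canonical filtration define the $n$-dimensional process $\{\vec{X}_i\}_{i=0}^{\infty}$ by $\vecseq{X}{i}{j}(\run)=\lem_j(\cfg{\sigma}{i}(\run))$. By the proof of Theorem~\ref{thm:lexrsm-programs} this process is a strict $n$-dimensional $\eps$-LexRSM for the termination time $\ttime$, whose canonical instance uses the partition with $L^i_{n+1}=\emptyset$ and $L^i_j$ the set of runs $\run$ with $\ttime(\run)>i$ for which $j$ is the least index such that the $(i+1)$-th transition on $\run$ is $\eps$-ranked by $\lem_j$ from $\cfg{\sigma}{i}(\run)$. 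Since $\cfg{\sigma}{0}=(\locinit,\vecinit)$ almost surely, we have $\E^\sigma_{\vecinit}[\vecseq{X}{0}{j}]=\lem_j(\locinit,\vecinit)$, which will supply the coefficients of the claimed bound.

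The core step is to check that this instance has $\vec{c}$-bounded ECI in the sense of Definition~\ref{def:lexrsm-eci}, with $\vec{c}$ the constant vector given by $\vec{c}[j]=\bar{c}_j(\vecinit)$ (genuinely a constant, once $\vecinit$ is fixed). Concretely, for every $i$ and every $j''<j$ I would show $\E^\sigma_{\vecinit}[\vecseq{X}{i+1}{j}\mid\natfilt_i]\le\vecseq{X}{i}{j}+\bar{c}_j(\vecinit)$ on $L^i_{j''}$. On $L^i_{j''}$ the $(i+1)$-th transition $\tilde\tau$ taken along the run has $j''$ as its least ranking index, so because $j>j''$ the $\vec{\bar{c}}$-bounded ECI hypothesis on the map tells us that $\tilde\tau$ is $\bigl(-\bar{c}_j(\vecinit)\bigr)$-ranked by $\lem_j$ from $\cfg{\sigma}{i}(\run)$. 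Unwinding Definitions~\ref{def:rank1} and~\ref{def:rank2} and translating ``$f$-ranked by $\lem_j$'' into the corresponding inequality on $\E^\sigma_{\vecinit}[\vecseq{X}{i+1}{j}\mid\natfilt_i]$ — via a case split on the type of the current location (deterministic, non-deterministic branching, assignment with a Borel function, with a distribution, with a set update, and the probabilistic-branching location viewed as a generalized transition), exactly as already carried out for $\eps$-ranking in the proof of Theorem~\ref{thm:lexrsm-programs} — then yields the required bound; for the non-deterministic cases the supremum in the ranking definition dominates whatever distribution $\sigma$ selects, so the inequality holds uniformly in $\sigma$.

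With $\{\vec{X}_i\}_{i=0}^{\infty}$ and its canonical instance now a strict $n$-dimensional LexRSM with $\vec{c}$-bounded ECI for $\ttime$, I would invoke Theorem~\ref{thm:runtime-bound} (taking $\eps=1$, as holds for $1$-LexRSM maps; the general $\eps$ case follows by the analogous computation noted in that theorem) to obtain $\E^\sigma_{\vecinit}[\ttime]\le\sum_{j=1}^{n}\E^\sigma_{\vecinit}[\vecseq{X}{0}{j}]\cdot(\vec{c}[j]+1)^{n-j}=\sum_{j=1}^{n}\lem_j(\locinit,\vecinit)\cdot(\bar{c}_j(\vecinit)+1)^{n-j}$, which is the asserted bound (under the convention that $\bar{c}_j$ absorbs the trailing $+1$). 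Since $\sigma$ and $\vecinit$ were arbitrary, the corollary follows.

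I expect the main obstacle to be the second paragraph: the bookkeeping that turns the map-level ``$f$-ranked'' conditions into the process-level conditional-expectation inequalities demanded by Definition~\ref{def:lexrsm-eci}, carried out uniformly over all location types, over all measurable schedulers (so that the resulting runtime bound is scheduler-independent), and with the separate treatment of probabilistic-branching locations. The remaining issue is purely cosmetic: reconciling the $(\vec{c}[j]+1)^{n-j}$ factor of Theorem~\ref{thm:runtime-bound} with the $(\bar{c}_j(\vecinit))^{n-j}$ written in the corollary's statement, together with the $\eps$-versus-$1$ normalization.
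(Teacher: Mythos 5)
Your proposal is correct and is exactly the route the paper intends: the corollary is stated with no explicit proof beyond ``From Theorem~\ref{thm:runtime-bound} we have the following,'' and the intended argument is precisely your reduction --- lift $\vec{\lem}$ to the process $\vecseq{X}{i}{j}(\run)=\lem_j(\cfg{\sigma}{i}(\run))$ as in the proof of Theorem~\ref{thm:lexrsm-programs}, verify $\vec{c}$-bounded ECI with $\vec{c}[j]=\bar{c}_j(\vecinit)$ from the map-level ranking conditions, and invoke Theorem~\ref{thm:runtime-bound}. Your observation about the discrepancy is also right: the honest bound coming out of Theorem~\ref{thm:runtime-bound} carries the factor $(\bar{c}_j(\vecinit)+1)^{n-j}$, and the corollary's printed $(\bar{c}_j(\vecinit))^{n-j}$ appears to be a typo (it would even fail for $\bar{c}_j=0$), so you should state the bound with the $+1$ rather than absorb it into $\bar{c}_j$.
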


Now assume that we have synthesized a $1$-LexRSM map $\vec{\lem}$ and we want 
to check if there exists $\bar{\vec{c}}$ such that $\vec{\lem}$ has 
$\bar{\vec{c}}$-bounded ECI. In the linear setting (i.e. the program is an 
\APP{}, masp $\lem$ and $\inv$ are linear, and we seek $\vec{\bar{c}}$ which is 
a vector of affine expressions) we can encode the existence of $\vec{\bar{c}}$ 
into a system of linear inequalities in the similar way as the existence of a 
linear LexRSM maps was encoded in Section~\ref{sec:lex-programs}. That is, we 
set up a linear template with unknown coefficients for each component of 
$\vec{\bar{c}}$ and using Farkas's lemma we set up a system of linear 
constraints, which includes the unknown coefficients as variables, encoding the 
fact that $\vec{\lem}$ has $\bar{\vec{c}}$-bounded ECI. Details are provided 
in~\AppendixMaterial. In this way, we can check in polynomial time if 
Corollary~\ref{col:runtime-progs} can be applied to $\vec{\lem}$, and if yes, 
we can synthesize the witness vector $\bar{\vec{c}}$. Since $\bar{\vec{c}}$ 
consists of affine expressions, Corollary~\ref{col:runtime-progs} provides a 
polynomial (in the size of 
initial variable valuation) upper bound on expected runtime.



\section{Experimental Results}\label{sec:experiments}
We have implemented the algorithm of Section~\ref{sec:algo} (in C++) and present two sets of experimental results.
For all the experimental results we use the tool \textsc{Aspic}~\cite{FG10:aspic} for 
invariant generation, 
and our algorithm requires linear-programming solver for which we use 
\textsc{CPlex}~\cite{cplex}.
All our experimental results were obtained on the following platform: 
Ubuntu16.04, 7.7GB, Intel-Core i3-4130 CPU 3.40GHz QuadCore 64-bit.

\smallskip\noindent{\em Abstraction of real-world programs.}
We consider benchmarks from abstraction of real-world non-probabilistic programs 
in~\cite{ADFG10:lexicographic}.
Note that there are no abstraction tools available for probabilistic programs 
(to the best of our knowledge). 
Hence we consider the abstract programs obtained from real-world benchmarks as 
considered in the benchmark suite of~\cite{ADFG10:lexicographic}.
Given these non-probabilistic programs we obtain probabilistic programs in two ways:
(a)~{\em probabilistic loops} where the existing while loops are made probabilistic 
by executing the existing statements with probability~1/2, and with remaining 
probability executing skip statements; and
(b)~{\em probabilistic assignments} where the existing assignments are perturbed uniformly in range $[-1,1]$ 
(i.e., we consider additional variables whose value is, in each loop iteration, generated by 
probabilistic assignment uniformly in the range $[-1,1]$, and we add such variable 
to the RHS of an existing assignment).
We report our results on twenty five benchmarks in Table~\ref{tab:exp1} (we 
consider around fifty benchmark examples and the results on the remaining ones 
are presented in \AppendixMaterial.
The experimental results show that the time taken by our approach is always less 
than $1/10$-th of second.
In the table, along with the benchmark name, and time in seconds, we show 
whether a solution exists or not (i.e., whether linear lexicographic RSMs exist
or not), and if the solution exists we present the dimension of the lexicographic RSM
we obtain. 
The final two columns of the table represent whether the non-probabilistic 
program is extended with probabilistic loops and/or probabilistic assignments.

\smallskip\noindent{\em Synthetic examples of large programs.} 
The programs obtained as abstractions of real-world programs in the benchmarks 
as mentioned above have between 10-100 lines of code. 
To test the how does our approach scale with larger codes we consider synthetic examples 
of large probabilistic programs generated as follows.
Given $n$ additional Boolean-like variables, we consider probabilistic while loops,
with some nondeterministic conditional branches, and generate all possible $2^n$ if conditions
based on the Boolean variables. 
Hence given $n$ variables we have probabilistic programs of size $O(2^n)$.
For such programs we first run an invariant generation tool, followed by our algorithm.
In all these examples lexicographic RSMs exist, and has dimension at most~3.
Even for programs with around 12K lines of codes the total time taken is around 
one hour, where the invariant generation (i.e., running Aspic) takes the maximum time, 
and our algorithm requires around two minutes.
The results are presented in Table~\ref{tab:exp2} where we present the number of 
variables, then lines of code, followed by the time taken for invariant generation
by Aspic, then the time taken by our algorithm, and finally the total time.


\begin{center}
\begin{table}[]
  \centering
   \begin{tabular}{c|c|c|c|c|c}
    
{Benchmark} & {Time (s)} & {Solution} & {Dimension} & {Prob. loops} & {Prob. Assignments} \\\hline \hline
{alain} & {0.11} & {yes} & {2} & {yes} & {yes} \\\hline
{catmouse} & {0.08} & {yes} & {2} & {yes} & {yes} \\\hline
{counterex1a} & {0.1} & {no} & {} & {no} & {no} \\\hline
{counterex1c} & {0.11} & {yes} & {3} & {yes} & {yes} \\\hline
{easy1} & {0.09} & {yes} & {1} & {yes} & {yes} \\\hline
{exmini} & {0.09} & {yes} & {2} & {yes} & {yes} \\\hline
{insertsort} & {0.1} & {yes} & {3} & {yes} & {yes} \\\hline
{ndecr} & {0.09} & {yes} & {2} & {yes} & {yes} \\\hline
{perfect} & {0.11} & {yes} & {3} & {yes} & {yes} \\\hline
{\multirow{2}{*}{perfect2}} & {0.1} & {yes} & {3} & {yes} & {no} \\\cline{2-6}
{} & {0.11} & {no} & {} & {yes} & {yes} \\\hline
{real2} & {0.09} & {no} & {} & {no} & {no} \\\hline
{realbubble} & {0.22} & {yes} & {3} & {yes} & {yes} \\\hline
{realselect} & {0.11} & {yes} & {3} & {yes} & {yes} \\\hline
{realshellsort} & {0.09} & {no} & {} & {yes} & {no} \\\hline
{serpent} & {0.1} & {yes} & {1} & {yes} & {yes} \\\hline
{sipmabubble} & {0.1} & {yes} & {3} & {yes} & {yes} \\\hline
{speedDis2} & {0.09} & {no} & {} & {no} & {no} \\\hline
{speedNestedMultiple} & {0.1} & {yes} & {3} & {yes} & {yes} \\\hline
{speedpldi2} & {0.09} & {yes} & {2} & {yes} & {yes} \\\hline
{speedpldi4} & {0.09} & {yes} & {3} & {yes} & {yes} \\\hline
{speedSimpleMultipleDep} & {0.09} & {no} & {} & {no} & {no} \\\hline
{\multirow{2}{*}{speedSingleSingle2}} & {0.12} & {yes} & {2} & {yes} & {no} \\\cline{2-6}
{} & {0.1} & {no} & {} & {yes} & {yes} \\\hline
{\multirow{2}{*}{unperfect}} & {0.1} & {yes} & {2} & {yes} & {no} \\\cline{2-6}
{} & {0.16} & {no} & {} & {yes} & {yes} \\\hline
{wcet1} & {0.11} & {yes} & {2} & {yes} & {yes} \\\hline
{while2} & {0.1} & {yes} & {3} & {yes} & {yes} \\\hline
    
\end{tabular}
\caption{Experimental results for benchmarks from~\cite{ADFG10:lexicographic} extended with probabilistic loops and/or probabilistic assignments.} \label{tab:exp1}
\end{table}
\end{center}

\begin{center}
\begin{table}[]
  \centering
   \begin{tabular}{c|c|c|c|c}

{Variables} & {LOC} & {Inv Time (s)} & {Our Time (s)} & {Total Time (s)} \\\hline \hline
{2} & {20} & {0.06} & {0.03} & {0.08} \\\hline
{3} & {32} & {0.07} & {0.03} & {0.09} \\\hline
{4} & {56} & {0.08} & {0.04} & {0.11} \\\hline
{5} & {104} & {0.14} & {0.06} & {0.19} \\\hline
{6} & {200} & {0.36} & {0.1} & {0.46} \\\hline
{7} & {392} & {1.31} & {0.3} & {1.61} \\\hline
{8} & {776} & {7.56} & {0.7} & {8.25} \\\hline
{9} & {1544} & {33.07} & {2.5} & {35.57} \\\hline
{10} & {3080} & {164.09} & {8.77} & {172.86} \\\hline
{11} & {6152} & {817.92} & {35.37} & {853.29} \\\hline
{12} & {12296} & {4260.96} & {145.18} & {4406.14} \\\hline
\end{tabular}
\caption{Experimental results for synthetic examples.}\label{tab:exp2}
\end{table}
\end{center}

\section{Related Work}
In this section we discuss the related works.

\smallskip\noindent{\em Probabilistic programs and termination.}
In early works the termination for concurrent probabilistic programs was studied 
as fairness~\cite{SPH84}, which ignored precise probabilities.
For countable state space a sound and complete characterization of almost-sure termination 
was presented in~\cite{HS85}, but nondeterminism was absent.
A sound and complete method for proving termination of finite-state programs
was given in~\cite{EGK12}.
For probablistic programs with countable state space and without 
nondeterminism, the {\em Lyapunov ranking functions} provide a sound and 
complete method to prove positive termination~\cite{BG05,Foster53}.
For probabilistic programs with nondeterminism, but restricted to discrete probabilistic
choices, the termination problem was studied 
in~\cite{MM04,MM05}.
The RSM-based (ranking supermatingale-based) approach extending ranking functions was first 
presented in~\cite{SriramCAV} for probabilistic programs without non-determinism,
but with real-valued variables, and its extension for probabilistic programs
with non-determinism has been studied 
in~\cite{HolgerPOPL,CF17,CFNH16:prob-termination,CFG16,CNZ17,MM16:proofrule-arxiv}. 
Supermartingales were also 
considered for other liveness and safety 
properties~\cite{CVS16:martingale-recurrence-persistence,BEFH16:doob}.
While all these results deeply clarify the role of RSMs for probabilistic programs, 
the notion of lexicographic RSMs to obtain a practical approach for termination 
analysis for probabilistic programs has not been studied before, which we consider in 
this work.

\smallskip\noindent{\em Compositional a.s. termination proving.} 
A 
compositional rule for proving almost-sure termination was studied 
in~\cite{HolgerPOPL} under the uniform integrability assumption. 
In~\cite{MM05}, a soundness of the probabilistic variant rule is proved for 
programs with finitely many configurations.

\smallskip\noindent{\em Other approaches.}
Besides RSMs, other approaches has also been considered for probabilistic programs.
Logical calculi for reasoning about properties of 
probabilistic programs (including termination) were studied 
in~\cite{Kozen:prob-semantics,FH:prdl,Kozen:probabilistic-PDL,Feldman:propositional-probdl}
 and extended to programs with demonic non-determinism 
 in~\cite{MM04,MM05,KKMO16:wp-expected-runtime,OKKM16:recursive-prob-wp-calculus,GKI14:prob-semantics,
  DBLP:conf/sas/KatoenMMM10}. However, none of these approaches is readily 
  automatizable.
A sound approach~\cite{DBLP:conf/sas/Monniaux01} for almost-sure termination 
is to explore the exponential decrease of probabilities upon 
bounded-termination 
through abstract interpretation~\cite{DBLP:conf/popl/CousotC77}. A method for 
a.s. termination of weakly finite programs (where number of reachable 
configurations is finite from each initial configuration) based on 
\emph{patterns} was presented in~\cite{EGK12}.

\smallskip\noindent{\em Non-probabilistic programs.}
Termination analysis of non-probabilistic programs has also been extensively 
studied~\cite{PR04:transition-invariants, 
CPR06:terminator,DBLP:conf/cav/BradleyMS05,DBLP:conf/tacas/ColonS01, 
DBLP:conf/vmcai/PodelskiR04,DBLP:conf/pods/SohnG91,BMS05b,LJB01,KSTW10:compositional-transition-invariants,
 CPR11:termination-cacm}.
Ranking functions are at the heart of the termination analysis, and lexicographic 
ranking function has emerged as one of the most efficient and practical approaches
for termination analysis~\cite{CSZ13,ADFG10:lexicographic,GMR15:rank-extremal}, 
being used e.g. in 
the prominent \texttt{T2} temporal prover~\cite{BCIKP16:T2}.
In this work we extend lexicographic ranking functions to probabilistic 
programs,
and present lexicographic RSMs for almost-sure termination analysis of probabilistic programs
with non-determinism. Theoretical complexity of synthesizing lexicographic 
ranking functions in non-probabilistic programs was studied 
in~\cite{BG13:integer-ranking,BG15:lexicographic-complexity}.

\section{Conclusion and Future Work}
In this work we considered lexicographic RSMs for termination analysis of probabilistic
programs with non-determinism.
We showed it presents a sound approach for almost-sure termination, that is
algorithmically efficient, enables compositional reasoning about termination, 
and leads to approach that can handle realistic programs.
There are several interesting directions of future work.
Lexicographic ranking functions has been considered in several works to 
provide different practical methods for analysis of non-probabilistic programs.
First, while our work presents the foundations of lexicographic RSMs for probabilistic 
programs, extending other practical methods based on lexicographic ranking 
functions to 
lexicographic RSMs is an interesting direction of future work.
Second, while our algorithmic approaches focus on the linear case, it would be interesting
to consider other non-linear, and polynomial functions in the future.

\bibliographystyle{ACM-Reference-Format}
\bibliography{bibliography-master,new-popl18,new}
\clearpage
\appendix

\begin{center}
{\Large Supplementary Material}
\end{center}


\section{Details of Program Syntax}
In this subsection we present the details of the syntax of (affine) 
probabilistic 
programs.

Recall that $\vars$ 
is a collection of 
\emph{variables}.
Moreover, let $\mathcal{D}$ be a set of \emph{probability distributions} on 
real numbers.
The abstract syntax of affine
probabilistic programs (\APP s)
is given by the grammar in Figure~\ref{fig:syntax}, where
the expressions $\langle \mathit{pvar}\rangle$ and $\langle
\mathit{dist}\rangle$  range over $\vars$ and $\mathcal{D}$, respectively.
We allow for non-deterministic assignments, expressed by a statement $x:=
\text{\textbf{ndet($\mathit{dom}$)}}$, where $\mathit{dom}$ is a \emph{domain
specifier} determining the set from which the value can be chosen: for general 
programs it can be any Borel-measureable set, for \APP{}s it has to be an 
interval (possibly of infinite length).  
The grammar is such that $\langle \mathit{expr} \rangle$ 
may evaluate to an arbitrary affine expression over the
program variables.
Next, $\langle
\mathit{bexpr}\rangle$ may evaluate to an arbitrary propositionally linear
predicate. 

For general (not necessary affine) \PP{}s we set $\langle\mathit{expr}\rangle$ 
to be the set of all expressions permitted by the set of mathematical 
operations of the underlying language. Similarly, $\langle\mathit{bexpr} 
\rangle$ is the set of all predicates, as defined in Section~\ref{sec:prelim}.

The guard of each if-then-else statement is either $\star$, 
representing a (demonic) non-deterministic choice between the branches,
a keyword \textbf{prob}($p$), where $p\in [0,1]$ is a number given in decimal
representation (represents a probabilistic choice, where  the if-branch is
executed with probability $p$ and the then-branch with probability $1-p$), or
the guard is a propositionally linear predicate, in which case the statement
represents a standard deterministic conditional branching.

Regarding distributions, for each $d\in \mathcal{D}$ we assume the
existence of a program primitive denoted by '\textbf{sample($d$)}' implementing 
sampling from $d$. In practice, the distributions appearing in a program would 
be those for which sampling is 
provided by suitable libraries (such as 
uniform distribution over some interval, Bernoulli, geometric, etc.), 
but we 
abstract away from these implementation details. For the purpose of our 
analysis, it is sufficient that for each distribution $d$ appearing in the 
program the following characteristics: expected value $\expv[d]$ of $d$ and a 
set $SP_d$ containing \emph{support} of $d$  (the support of $d$ is the 
smallest 
closed set of real numbers whose complement has probability zero 
under $d$). \footnote{In 
particular, 
a support of a \emph{discrete} probability 
distribution $d$ is simply the at most countable set of all points on a real 
line that have positive probability under $d$. For continuous distributions, 
e.g. a normal distribution, uniform, etc., the support is typically either 
$\Rset$ or some closed real interval. } For \APP{}s, $SP_d$ is required to be 
an 
interval. 
\begin{figure}
\begin{align*}
\langle \mathit{stmt}\rangle &::= 
\langle \mathit{assgn} \rangle \mid \text{'\textbf{skip}'} \mid 
\langle\mathit{stmt}\rangle \, \text{';'} \, \langle \mathit{stmt}\rangle \\
&\mid   \text{'\textbf{if}'} \,
\langle\mathit{ndbexpr}\rangle\,\text{'\textbf{then}'} \, \langle
\mathit{stmt}\rangle \, \text{'\textbf{else}'} \, \langle \mathit{stmt}\rangle
\,\text{'\textbf{fi}'}
\\
&\mid  \text{'\textbf{while}'}\, \langle\mathit{bexpr}\rangle \,
\text{'\textbf{do}'} \, \langle \mathit{stmt}\rangle \, \text{'\textbf{od}'}
\\
\langle \mathit{assgn} \rangle &::= 
\,\langle\mathit{pvar}\rangle
\,\text{'$:=$'}\, \langle\mathit{expr} \rangle \mid 
\langle\mathit{pvar}\rangle \,\text{'$:=$}\,
\text{\textbf{ndet($\langle\mathit{dom}\rangle$)}'}
\\
&\mid \langle\mathit{pvar}\rangle \,\text{'$:=$}\,
\text{\textbf{sample($\langle\mathit{dist}\rangle$)}'}
%
%
%
%
%
%
\\
\vspace{0.5\baselineskip}
\langle\mathit{expr} \rangle &::= \langle \mathit{constant} \rangle \mid
\langle\mathit{pvar}\rangle
\mid \langle \mathit{constant} \rangle \,\text{'$\cdot$'} \,
\langle\mathit{pvar}\rangle
\\
&\mid \langle\mathit{expr} \rangle\, \text{'$+$'} \,\langle\mathit{expr} \rangle
\mid \langle\mathit{expr} \rangle\, \text{'$-$'} \,\langle\mathit{expr} \rangle
\\
%
%
\langle \mathit{bexpr}\rangle &::=  \langle \mathit{affexpr} \rangle \mid
\langle \mathit{affexpr} \rangle \, \text{'\textbf{or}'} \,
\langle\mathit{bexpr}\rangle
\vspace{0.5\baselineskip}
\\
%
\langle\mathit{affexpr} \rangle &::=  \langle\mathit{literal} \rangle\mid
\langle\mathit{literal} \rangle\, \text{'\textbf{and}'}
\,\langle\mathit{affexpr} \rangle
\\
\langle\mathit{literal} \rangle &::= \langle\mathit{expr} \rangle\,
\text{'$\leq$'} \,\langle\mathit{expr} \rangle \mid \langle\mathit{expr}
\rangle\, \text{'$\geq$'} \,\langle\mathit{expr} \rangle
\\
&\mid \neg \langle \mathit{literal} \rangle
\\
%
\langle\mathit{ndbexpr} \rangle &::= {\star}\mid
\text{'\textbf{prob($p$)}'} \mid \langle\mathit{bexpr} \rangle
\end{align*}
\caption{Syntax of affine probabilistic programs (\APP 's).}
\label{fig:syntax}
\end{figure}

\section{Details of Program Semantics}

\begin{remark}[Use of random variables]
In the paper we sometimes work with random variables 
that are functions of the type $R\colon\Omega \rightarrow S$ for some finite 
set $S$. These can be captured by the definition given in 
Section~\ref{sec:prelim} by identifying the 
elements of $S$ with distinct real numbers.\footnote{This is equivalent to 
saying that a function $R\colon \Omega\rightarrow S$, with $S$ finite, is a 
random variable if for each $s\in S$ the set $\{\omega\in \Omega\mid 
R(\omega)=s\}$ belongs to $\mathcal{F}$.} The exact choice of numbers is 
irrelevant in such a case, as we are not interested in, e.g. computing expected 
values of such random variables, or similar operations. 
\end{remark}

\paragraph*{From Programs to pCFGs}
To every probabilistic program $P$ we can assign a pCFG $\pCFG_P$ whose 
locations correspond to the values of the
program counter of $P$ and whose transition relation captures the behaviour of
$P$. We illustrate the construction for \APP{}s, for general programs it is 
similar. To obtain $\pCFG_{P}$, we first rename 
the variables in $P$ to 
$x_1,\dots,x_n$, where $n$ is the number of distinct variables in the program. 
The
construction of $\pCFG_P$ can be described inductively.
For each program $P$ the pCFG $\pCFG_P$ contains two distinguished
locations, $\ell^{\lin}_{P}$ and $\ell^{\lout}_{P}$, the latter one being always
deterministic, that intuitively represent the state of the program counter
before and after executing $P$, respectively. In the following, we denote by 
$\id_1$ a function such that for each $\vec{x}$ we have 
$\id_{1}(\vec{x})=\vec{x}[1]$.
\begin{compactenum}
\item {\em Deterministic Assignments and Skips.}
For $P= {x_j}{:=}{E}$ where $x_j$ is a program variable and $E$ is an 
expression, or $P = \textbf{skip}$, the pCFG $\pCFG_P$ consists only of
locations $\ell^{\lin}_P$ and $\ell^{\lout}_P$ (first assignment location, 
second one deterministic) and a
transition $(\ell^{\lin}_{P},\ell^{\lout}_P)$. In the first case, 
$\updates(\ell^{\lin}_{P},\ell^{\lout}_P)=(j,E)$.
\item {\em Probabilistic and Non-Deterministic Assignemnts}
For $P= {x_j}{:=}{\textbf{sample($d$)}}$ where $x_j$ is a program variable and 
$d$ is a distribution, the pCFG $\pCFG_P$ consists locations $\ell^{\lin}_P$ 
and $\ell^{\lout}_P$ and a
transition $\tau=(\ell^{\lin}_{P},\ell^{\lout}_P)$ with $\updates(\tau)=(j,d)$. For 
$P= 
{x_j}{:=}{\textbf{ndet($\mathit{dom}$)}}$, the construction is similar, with 
the only transition being $\tau=(\ell^{\lin}_{P}\ell^{\lout}_P)$ and 
$\updates(\tau)=(j,D)$, where 
$D$ is 
the set specified by the domain specifier $\mathit{dom}$.

\item {\em Sequential Statements.}
For $P = Q_1;Q_2$ we take the pCFGs $\pCFG_{Q_1}$, $\pCFG_{Q_2}$ and
join them by identifying the location $\ell^{\lout}_{Q_1}$ with
$\ell^{\lin}_{Q_2}$, putting $\ell^{\lin}_{P}=\ell^{\lin}_{Q_1}$ and
$\ell^{\lout}_{P}=\ell^{\lout}_{Q_2}$.

\item {\em While Statements.}
For $P = \textbf{while $\phi$ do }Q \textbf{ od}$ we add a new deterministic
location $\ell^{\lin}_{P}$ which we identify with $\ell^{\lout}_{Q}$, a new
deterministic location $\ell^{\lout}_{P}$, and transitions
$\tau=(\ell^{\lin}_{P},\ell^{\lin}_{Q})$,
$\tau'=(\ell^{\lin}_{P},\ell^{\lout}_{P})$ such that $G(\tau)=\phi$ and
$G(\tau')=\neg\phi$.

\item {\em If Statements.}
Finally, for $P = \textbf{if $\mathit{ndb}$ then }Q_1 \textbf{ else } Q_2
\textbf{ fi}$ we add a new location $\ell^{\lin}_{P}$ (which is not an 
assignment location) together with two
transitions $\tau_1 = (\ell^{\lin}_{P},\ell^{\lin}_{Q_1})$, $\tau_2 =
(\ell^{\lin}_{P},\ell^{\lin}_{Q_2})$, and we identify the locations 
$\ell^{\lout}_{Q_1}$ and $\ell^{\lout}_{Q_1}$ with $\ell^{\lout}_{P}$. (If both
$Q_j$'s consist of a single statement, we also identify $\ell^\lin_{P}$ with 
$\ell^{\lin}_{Q_j}$'s.) In this
case the newly added location $\ell^\lin_{P}$ is non-deterministic branching if 
and only 
if
$ndb$ is the keyword '$\star$'. If
$\mathit{ndb}$ is of the form $\textbf{prob($p$)}$, the location $\ell^\lin_{P}$
is probabilistic branching with $\probdist_{\ell^\lin_{P}}(\tau_1)=p$ and
$\probdist_{\ell^\lin_{P}}(\tau_2)=1-p$. Otherwise (i.e. if $\mathit{ndb}$ is a
predicate), $\ell^\lin_{P}$ is a deterministic location
with $G(\tau_1)=\mathit{ndb}$ and $G(\tau_2)=\neg \mathit{ndb}$.
\end{compactenum}
Once the pCFG $\pCFG_P$ is constructed using the above rules, we put
$G(\tau)=\textit{true}$ for all transitions $\tau$ outgoing from deterministic
locations whose guard was not set in the process, and finally we add a self-loop
on the location $\ell^{\lout}_P$. This ensures that the assumptions in
Definition~\ref{def:stochgame} are satisfied.
Furthermore note that for pCFG obtained for a program $P$, since the only
branching is if-then-else branching, every location $\loc$ has at most two
successors $\loc_1,\loc_2$.

\def\xx{\ref{prop:conditional-exp-existence}}

\section{Proof of Proposition~\xx}

We first recall the general statement of the Radon-Nikodym theorem. Given two measurable spaces\footnote{A generalization of a probability space where the measure of $\Omega$ does not have to be 1, but any non-negative number or even infinity.} $(\Omega,\genfilt,\mu)$ and $(\Omega,\genfilt,\nu)$, we say that $\nu$ dominates $\mu$, written $\mu<<\nu$ if for all $A\in \genfilt$, $\nu(A)=0$ implies $\mu(A)=0$. Radon-Nikodym theorem states that if both $\mu$ and $\nu$ are sigma-finite (that is, $\Omega$ is a union of countably many sets of finite measure under $\nu$ and $\mu$), then $\mu<<\nu$ implies that there exists an almost-surely unique $\genfilt$-measurable function $f\colon \Omega\rightarrow [0,\infty)$ such that for each $A \in\genfilt$, the Lebesgue integral of the function $f\cdot\indicator{A}$ in measurable space  $(\Omega,\genfilt,\nu)$ is equal to $\mu(A)$. The function $f$ is called a Radon-Nikodym derivative of $\mu$ w.r.t. $\nu$, and we denote in by $\frac{d\mu}{d\nu}$.

Now assume that $X$ is a non-negative real-valued random variable in some probability space $(\Omega,\genfilt,\probm)$ and $\genfilt'$ is a sub-sigma algebra of $\genfilt$. Note that $\probm$ is sigma-finite. Define a measure $\mu$ on $\genfilt'$ by putting $\mu(A)=\E[X\cdot\indicator{A}]$, for each $A\in \genfilt$ (here $\E$ is the expectation operator, i.e. the Lebesgue integral, in probability space $(\Omega,\genfilt',\probm')$, where $\probm'$ is a restriction of $\probm$ to $\genfilt'$). Then $\mu$ is sigma-finite: indeed, for any $n\in \Nset$ let $A_n = \{\omega\in\Omega\mid X(\omega)\leq n\}$. Then $\mu(A_n)\in [0,n]$, in particular it is finite, and since $X$ is real-valued, we have $\Omega=\bigcup_{n=1}^{\infty} A_n$. Hence, $\frac{d\mu}{d\probm'}$ exists and is almost-surely unique. It is now easy to check that $\frac{d\mu}{d\probm'}$ satisfies the condition defining the conditional expectation $\E[X\mid \genfilt']$: indeed, the condition is equivalent to $\E[X\mid \genfilt']$ being a derivative of $\mu$ w.r.t. $\probm'$. This concludes the proof.


\section{Computations for the proof of Theorem~\ref{THM:LEXRSM-MAIN}}

Recall that we aim to prove equation~\eqref{eq:lexrsm-soundness-main}.

For 
$k=\fixn{i}$ the sum on the right-hand side equals $0$, so the 
inequality immediately follows from the definition of $Y_k$. Now assume 
that~\eqref{eq:lexrsm-soundness-main} holds for some $k\geq \fixn{i}$. We have that 
\begin{align}
\label{eq:lexrsm-ind-1}
\E[Y_{k+1}] &= \underbrace{\E[Y_{k+1}\cdot \indicator{\Omega\setminus D}]}_{=0=\E[Y_k\cdot\indicator{\Omega\setminus D}]} +  \underbrace{\E[Y_{k+1}\cdot \indicator{D \cap \{F \leq k\}}]}_{=\E[Y_k\cdot\indicator{D\cap \{F\leq k\}}]} + \underbrace{\E[Y_{k+1}\cdot \indicator{D \cap \{F > k\}}]}_{=\E[\vecseq{X}{k+1}{\fixn{j}}\cdot\indicator{D\cap \{F>k \} }]},
\end{align}
where the equality $\E[Y_{k+1}\cdot \indicator{D \cap \{F \leq k\}}] =\E[Y_k\cdot\indicator{D\cap \{F\leq k\}}] $ follows from the fact that $Y_{k+1}(\omega)=Y_k({\omega})=\vecseq{X}{F(\omega)}{\fixn{j}}(\omega)$ for $\omega\in \{F\leq k\}$, and similarly for the last term. We prove that 
\begin{equation}
\label{eq:lexrsm-ind-2}
\E[\vecseq{X}{k+1}{\fixn{j}}\cdot \indicator{D\cap \{F > k \}}] \leq \E[Y_k\cdot \indicator{D\cap \{F > k \}} -\eps\cdot \indicator{D\cap \{F>k\} \cap \{\levelrank{}{k}= \fixn{j}\} }] .
\end{equation}
Indeed, it holds
\begin{equation}
\label{eq:lexrsm-ind-3}
\E[\vecseq{X}{k+1}{\fixn{j}}\cdot \indicator{D\cap \{F > k \}}] = \E[\vecseq{X}{k+1}{\fixn{j}}\cdot \indicator{D\cap \{F > k \} \cap \{\levelrank{}{k}=\fixn{j} \} }]  + \E[\vecseq{X}{k+1}{\fixn{j}}\cdot \indicator{D\cap \{F > k \} \cap \{\levelrank{}{k}>\fixn{j}\} }], 
\end{equation}
since $\levelrank{\omega}{k}\geq \fixn{j}$ for all $\omega \in \{F>k\}$. Since the set $D\cap \{F > k \} \cap \{\levelrank{}{k}=\fixn{j} \}$ is $\genfilt_k$-measurable, we get
\begin{align}
\E[\vecseq{X}{k+1}{\fixn{j}}\cdot \indicator{D\cap \{F > k \} \cap \{\levelrank{}{k}=\fixn{j} \} }] &= \E[\E[\vecseq{X}{k+1}{\fixn{j}}\mid \genfilt_k]\cdot \indicator{D\cap \{F > k \} \cap \{\levelrank{}{k}=\fixn{j} \} }] \label{eq:lexrsm-ind-4}\\
&\leq \E[(\vecseq{X}{k}{\fixn{j}} - \eps)\cdot \indicator{D\cap \{F > k \} \cap \{\levelrank{}{k}=\fixn{j} \}}] \label{eq:lexrsm-ind-5}\\
&=\E[(Y_k - \eps)\cdot \indicator{D\cap \{F > k \} \cap \{\levelrank{}{k}=\fixn{j} \}}] \label{eq:lexrsm-ind-6},
\end{align}
where~\eqref{eq:lexrsm-ind-4} follows from the definition of conditional expectation~\eqref{eq:cond-exp},~\eqref{eq:lexrsm-ind-5} follows from the definition of $\{\levelrank{}{k}=\fixn{j}\}$, and~\eqref{eq:lexrsm-ind-6} holds since $Y_k(\omega)=\vecseq{X}{k}{\fixn{j}}(\omega)$ for $\omega$ with $F(\omega)> k$. Almost identical argument shows that
\begin{equation}
\label{eq:lexrsm-ind-7}
\E[\vecseq{X}{k+1}{\fixn{j}}\cdot \indicator{D\cap \{F > k \} \cap \{\levelrank{}{k}>\fixn{j} \} }] \leq \E[Y_k\cdot \indicator{D\cap \{F > k \} \cap \{\levelrank{}{k}>\fixn{j} \}}].
\end{equation}
Plugging~\eqref{eq:lexrsm-ind-6} and~\eqref{eq:lexrsm-ind-7} into~\eqref{eq:lexrsm-ind-3} yields~\eqref{eq:lexrsm-ind-2}. Now we can plug~\eqref{eq:lexrsm-ind-2} into~\eqref{eq:lexrsm-ind-1} to get
\begin{align}
\E[Y_{k+1}]&\leq \E[Y_k] - \eps\cdot \E[\indicator{D \cap \{F>k\} \cap \{\levelrank{}{k} = \fixn{j} \} }] = \E[Y_k] - \eps\cdot \probm(D \cap \{F>k\} \cap \{\levelrank{}{k} = \fixn{j} \} ) \nonumber\\
&\leq \fixn{B}\cdot \probm(D) - \eps\cdot\left(\sum_{\ell=0}^{k-\fixn{i}} \ell\cdot\probm(D 
\cap \{F\geq k\} \cap \{\noofdec_k = \ell\})\right) -\eps\cdot \probm(D \cap \{F>k\} \cap \{\levelrank{}{k} = \fixn{j} \}),
\end{align}
where the last inequality follows from induction hypothesis. Hence, using $D_{k,\ell}$ as a shorthand for $D 
\cap \{F\geq k\} \cap \{\noofdec_k = \ell\}$, to prove~\eqref{eq:lexrsm-soundness-main} it remains to show that
\begin{equation}
\label{eq:lexrsm-ind-8}
\sum_{\ell=0}^{k-\fixn{i}} \ell\cdot\probm(D_{k,\ell}) + \probm(D \cap \{F>k\} \cap \{\levelrank{}{k} = \fixn{j} \}) = \sum_{\ell=0}^{k+1-\fixn{i}} \ell\cdot\probm(D_{k+1,\ell}).
\end{equation}
The left-hand side of~\eqref{eq:lexrsm-ind-8} is equal to
\begin{align}
&\phantom{+}\;\sum_{\ell=0}^{k-\fixn{i}} \ell\cdot\probm(D_{k,\ell} \cap \{\levelrank{}{k}=\fixn{j}\}) + \sum_{\ell=0}^{k-\fixn{i}} \ell\cdot\probm(D_{k,\ell}\cap \{\levelrank{}{k}>\fixn{j}\} )\nonumber \\ 
&+\sum_{\ell=0}^{k-\fixn{i}}\probm(\underbrace{D \cap \{F>k\} \cap \{\levelrank{}{k} = \fixn{j}\} \cap \{\noofdec_k = \ell\}}_{=D_{k,\ell} \cap \{\levelrank{}{k} = \fixn{j}\}})
\nonumber \\
&= \sum_{\ell=0}^{k-\fixn{i}}(\ell+1) \cdot\probm(D_{k,\ell}\cap \{\levelrank{}{k} = \fixn{j}\}) +  \sum_{\ell=0}^{k-\fixn{i}} \ell\cdot\probm(D_{k,\ell}\cap \{\levelrank{}{k}>\fixn{j}\} ) \nonumber\\
&=\sum_{\ell=0}^{k-\fixn{i}} (\ell+1)\cdot \probm{(D_{k+1,\ell+1} \cap \{\levelrank{}{k} = \fixn{j}\})} +  \sum_{\ell=0}^{k-\fixn{i}} \ell\cdot\probm(D_{k+1,\ell}\cap \{\levelrank{}{k}>\fixn{j}\} ) \label{eq:lexrsm-ind-9}\\
&= \sum_{\ell=1}^{k+1-\fixn{i}} \ell\cdot \probm{(D_{k+1,\ell} \cap \{\levelrank{}{k} = \fixn{j}\})} +  \sum_{\ell=0}^{k-\fixn{i}} \ell\cdot\probm(D_{k+1,\ell}\cap \{\levelrank{}{k}>\fixn{j}\} )
\nonumber \\
&= (k+1-\fixn{i})\cdot \probm{(D_{k+1,k+1-\fixn{i}}\cap\{\levelrank{}{k} =\fixn{j} \} )} + \sum_{\ell=1}^{k-\fixn{i}} \ell \cdot \probm(D_{k+1,\ell})  \nonumber\\
&= (k+1-\fixn{i})\cdot\probm(D_{k+1,k+1-\fixn{i}}) + \sum_{\ell=1}^{k-\fixn{i}}\ell\cdot\probm(D_{k+1,\ell}) = \text{ right-hand side of~\eqref{eq:lexrsm-ind-8}} \label{eq:lexrsm-ind-11}.
\end{align}
The individual steps in the above computation are justified as follows: 
in~\eqref{eq:lexrsm-ind-9} we use the facts that for all $\omega$'s whose level 
in step $k$ is $\fixn{j}$ it holds that 
$\noofdec_k(\omega)+1=\noofdec_{k+1}(\omega)$, and similarly, for $\omega$'s 
whose level in step $k$ is $>\fixn{j}$ it holds 
$\noofdec_k(\omega)=\noofdec_{k+1}(\omega)$. Moreover, for all $\omega\in 
D_{k,\ell}$ it holds that $\levelrank{k}{\omega}\geq \fixn{j} \Rightarrow 
F(\omega)\geq k+1$. Finally, in~\eqref{eq:lexrsm-ind-11} we use the fact that 
all $\omega\in D_{k+1,k+1-\fixn{i}}$ need to have level $\fixn{j}$ in step $k$, 
since otherwise such an $\omega$ would need to have level $\fixn{j}$ for at 
least $k+1-\fixn{i}$ times within steps $\{\fixn{i},\fixn{i}+1,\dots,k-1\}$, 
but there are $k-\fixn{i}$ such steps, a contradiction. This concludes the 
proof of~\eqref{eq:lexrsm-soundness-main}.

\section{Complexity Clarification for Theorem~\ref{THM:ALGO}}

Since instances of linear programming is in~\textsc{PTIME}, it remains to show 
that each system $\linsystem_{\tilde{\tau}}$ is constructible in polynomial 
time. In~\cite{CFNH16:prob-termination} it shown that this can be done provided 
that guard of each transition in pCFG is a propositionally linear predicate. 
Now all transition guards in $\pCFG_\program$ are of the form $\phi$ or 
$\neg\phi$, where $\phi$ is a guard of a conditional or of while-loop in 
$\program$. If $\phi$ is a linear assertion, then $\neg\phi$ can be converted 
into a propositionally linear predicate in polynomial time, after which the 
construction of~\cite{CFNH16:prob-termination} can be used. 

\section{Additional Computation for Example~\ref{EX:UNIFORM}}

 To see that $x$ is uniformly integrable in the left program and not in the 
 right one (within the inner loop), 
 imagine the inner loop as a stand-alone program and let $X_n$ be the value of 
 variable $x$ after $n$ steps of this stand-alone program (i.e., when the loop 
 terminates $x$ no longer changes). Solving a simple linear recurrence shows 
 that in the right program $\E[X_n] \rightarrow 0$ as $n\rightarrow \infty$, 
 which in particular shows uniform integrability of $X_0,X_1,X_2,\dots$. On the 
 other hand, in the left program for each $K>0$ we have $\probm(X_K \geq 
 2^K\cdot{x_0})\geq \frac{1}{2^K}$, where $x_0$ is the value of $x$ upon 
 entering the loop. Hence $\E[|X_K|cdot\indicator{X_K 
 	\geq K}]\geq 1$ for each $K$ sufficiently large, which is incompatible with 
 uniform integrability.

\section{Experimental Results}\label{sec:app_ex}
In Table~\ref{tab:exp1} in Section~\ref{sec:experiments} we presented results for twenty five benchmarks.
We now present additional result for other twenty seven benchmarks in Table~\ref{tab:exp3}.

\begin{center}
\begin{table}[]
  \centering
   \begin{tabular}{c|c|c|c|c|c}
    
{Benchmark} & {Time (s)} & {Solution} & {Dimension} & {Prob. loops} & {Prob. Assignments} \\\hline \hline
{aaron2} & {0.09} & {yes} & {2} & {yes} & {yes} \\\hline
{ax} & {0.11} & {yes} & {3} & {yes} & {yes} \\\hline
{complex} & {0.1} & {yes} & {1} & {yes} & {yes} \\\hline
{counterex1b} & {0.1} & {yes} & {3} & {yes} & {yes} \\\hline
{cousot9} & {0.09} & {no} & {} & {yes} & {no} \\\hline
{easy2} & {0.09} & {yes} & {2} & {yes} & {yes} \\\hline
{loops} & {0.1} & {no} & {} & {no} & {no} \\\hline
{\multirow{2}{*}{nestedloop}} & {0.16} & {yes} & {3} & {yes} & {no} \\\cline{2-6}
{} & {0.23} & {no} & {} & {yes} & {yes} \\\hline
{perfect1} & {0.1} & {yes} & {3} & {yes} & {yes} \\\hline
{random1d} & {0.1} & {yes} & {2} & {yes} & {yes} \\\hline
{realheapsort} & {0.22} & {no} & {} & {no} & {no} \\\hline
{realheapsort\_step1} & {0.11} & {no} & {} & {no} & {no} \\\hline
{realheapsort\_step2} & {0.16} & {no} & {} & {yes} & {no} \\\hline
{rsd} & {0.1} & {yes} & {1} & {yes} & {yes} \\\hline
{sipma91} & {0.14} & {yes} & {2} & {yes} & {yes} \\\hline
{speedFails1} & {0.09} & {yes} & {2} & {yes} & {yes} \\\hline
{speedDis1} & {0.09} & {no} & {} & {no} & {no} \\\hline
{\multirow{2}{*}{speedFails2}} & {0.14} & {yes} & {1} & {yes} & {no} \\\cline{2-6}
{} & {0.08} & {no} & {} & {yes} & {yes} \\\hline
{speedFails4} & {0.09} & {no} & {} & {no} & {no} \\\hline
{speedNestedMultipleDep} & {0.1} & {yes} & {3} & {yes} & {yes} \\\hline
{speedpldi3} & {0.1} & {yes} & {3} & {yes} & {yes} \\\hline
{speedSimpleMultiple} & {0.1} & {no} & {} & {no} & {no} \\\hline
{speedSingleSingle} & {0.11} & {yes} & {2} & {yes} & {yes} \\\hline
{terminate} & {0.09} & {yes} & {2} & {yes} & {yes} \\\hline
{wcet0} & {0.11} & {yes} & {2} & {yes} & {yes} \\\hline
{wcet2} & {0.09} & {yes} & {2} & {yes} & {yes} \\\hline
{wise} & {0.09} & {no} & {} & {no} & {no} \\\hline
    \end{tabular}
    \caption{Additonal experimental results for benchmarks from~\cite{ADFG10:lexicographic} extended with probabilistic loops and/or probabilistic assignments.}
    \label{tab:exp3}
\end{table}
\end{center}




\end{document}